\documentclass[11pt]{article}

%



\usepackage[utf8]{inputenc} 
\usepackage[T1]{fontenc}    
\usepackage{hyperref}       
\usepackage{url}            
\usepackage{booktabs}       
\usepackage{amsfonts}       
\usepackage{nicefrac}       
\usepackage{microtype}      
\usepackage{color}

\usepackage{graphicx} 

\usepackage{amsthm}
\usepackage{amssymb}
\usepackage{algorithm}
\usepackage[noend]{algpseudocode}
\usepackage{nicefrac}

\usepackage[margin=1in]{geometry}

\usepackage{amsmath}
\usepackage{thmtools, thm-restate}
\usepackage{multirow}
\usepackage{tabularx}
\usepackage{colortbl}
\usepackage{color}
\usepackage[small]{caption}

\newtheorem{theorem}{Theorem}[section]
\newtheorem{lemma}[theorem]{Lemma}
\newtheorem{fact}[theorem]{Fact}

\newtheorem{corollary}[theorem]{Corollary}

\newtheorem{definition}[theorem]{Definition}

\newtheorem{assumption}{Assumption}


\newcommand{\FT}{{\mathfrak{F}}}
\newcommand{\N}{{\mathbb{N}}}

\newcommand{\R}{{\mathbb{R}}}

\newcommand{\beq}{\begin{eqnarray}}
\newcommand{\eeq}{\end{eqnarray}}

\newcommand{\sign}{\text{sign}}

\DeclareMathOperator{\Tr}{Tr}

\DeclareMathOperator*{\expt}{\mathbb{E}}

\newcommand{\EE}[2]{{\expt_{#1}{#2}}}

\newcommand{\pd}[2]{\frac{\partial#1}{\partial#2}}
\newif\ifshort
\shorttrue

\def\showauthornotes{0}

\def\showsupmaterial{0}




\def\abs#1{\left| #1 \right|}
\newcommand{\norm}[1]{\ensuremath{\left\lVert #1 \right\rVert}}



\newcommand\rea{\mathbb R}



\newcommand{\marginlabel}[1]%
{\mbox{}\marginpar{\it{\raggedleft\hspace{0pt}#1}}}
\newcommand{\poly}{\mathrm{poly}}



\definecolor{Mygray}{gray}{0.8}

 \ifcsname ifcommentflag\endcsname\else
  \expandafter\let\csname ifcommentflag\expandafter\endcsname
                  \csname iffalse\endcsname
\fi

\ifnum\showauthornotes=1

\else

\fi

\ifnum\showauthornotes=1
\newcommand{\Authornote}[2]{{\small\color{red}{[#1: #2]}}}
\else
\newcommand{\Authornote}[2]{}
\fi





\newcommand{\supmaterial}[1]{\ifnum\showsupmaterial =1 supplementary material \else #1 \fi}

\newcommand{\Anote}{\Authornote{A}}

\title{
Convergence Results for Neural Networks
 via Electrodynamics
}

\author{
  Rina Panigrahy \\
  Google Inc. \\
  Mountain View, CA\\
  \texttt{rinap@google.com} \and
  Sushant Sachdeva\thanks{This work was done when the author was a Research Scientist at Google, Mountain View, CA.} \\
  University of Toronto \\
  Toronto, Canada\\
  \texttt{sachdeva@cs.toronto.edu} \and Qiuyi Zhang\thanks{Part of
    this work was done when the author was
    an intern at Google, Mountain View, CA.}  \\
  University of California Berkeley, \\
  Berkeley, CA \\
  \texttt{10zhangqiuyi@berkeley.edu} }

\begin{document} 

\maketitle

\begin{abstract} 
We study whether a depth two neural network can learn another 
depth two network using gradient descent.
Assuming a linear output node,
we show that
the question of whether gradient descent converges to the 
target function is equivalent to the following question in
electrodynamics: 
Given $k$ fixed protons in $\rea^d,$ and $k$ electrons,
each moving due to the attractive force from the protons and repulsive
force from the remaining electrons,
whether at equilibrium all the electrons will be matched up with
the protons, up to a permutation. 
Under the standard electrical
force, this follows from the classic Earnshaw's theorem. In our setting,
the force  is 
determined by the activation function and the
input distribution.  
Building on this equivalence, we prove the
existence of an activation function such that 
gradient descent learns
at least one of the
hidden nodes in the target network. 
Iterating, we show that gradient
descent can be used to learn the entire network one node at a time.
\end{abstract} 

\section{Introduction}

Deep learning has resulted in major strides in machine learning applications including speech recognition, image classification, and ad-matching. The simple idea of using multiple layers of nodes with a non-linear activation function at each node allows one to express any function.  To learn a certain target function we just use (stochastic) gradient descent to minimize the loss; this approach has resulted in significantly lower error rates for several real world functions, such as those in the above applications. Naturally the question remains: how close are we to the optimal values of the network weight parameters? Are we stuck in some bad local minima? While there are several recent works \cite{ChoromanskaHMAL14, DauphinPGCGB14, Kawaguchi16a} that have tried to study the presence of local minima, the picture is far from clear.

There has been some work on studying how well can neural networks
learn some synthetic function classes
(e.g. polynomials~\cite{valiant2014learning}, decision trees).  In
this work we study how well can neural networks learn neural networks
with gradient descent?  Our focus here, via the framework of proper
learning, is to understand if a neural network can learn a function
from the same class (and hence achieve vanishing error).

%
%

Specifically, if the target function is a neural network with randomly
initialized weights, and we attempt to learn it using a network with
the same architecture, then, will gradient descent converge to the
target function?


Experimental simulations (see Figure~\ref{expconverge} and
Section~\ref{experiments} for further details) show that for depth 2
networks of different widths, with random network weights, stochastic
gradient descent of a hypothesis network with the same architecture
converges to a squared $\ell_2$ error that is a small percentage of a
random network, indicating that SGD can learn these shallow networks
with random weights. Because our activations are sigmoidal from -1 to
1, the training error starts from a value of about $1$ (random
guessing) and diminishes quickly to under $0.002$. This seems to hold
even when the width, the number of hidden nodes, is substantially
increased (even up to 125 nodes), but depth is held constant at $2$.

In this paper, we attempt to understand this phenomenon
theoretically. We prove that, under some assumptions, depth-2 neural
networks can learn functions from the same class with vanishingly
small error using gradient descent.

\begin{figure}[h]
\centering
\includegraphics[width = 4.5in]{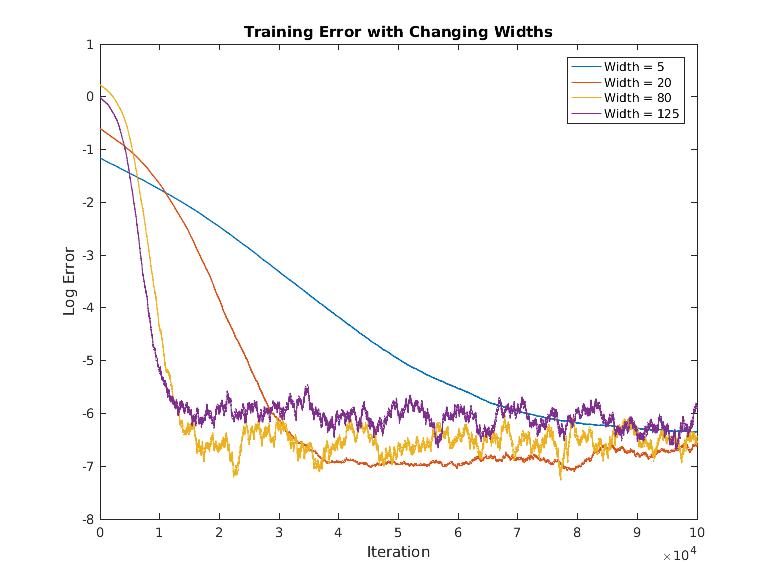}
\caption{Test Error of Depth 2 Networks of Varying Width.}
\label{expconverge}
\end{figure}


\subsection{Results and Contributions.} 
We theoretically investigate the question of convergence for networks of depth two.
Our main conceptual contribution is that for depth $2$ networks where the top node is a sum node, the question of whether gradient descent converges to the desired target function is equivalent to the following question in electrodynamics: Given $k$ fixed protons in $\rea^d,$ and $k$ moving electrons,
with all the electrons moving under the influence of the 
electrical force of attraction from the protons and repulsion from the remaining electrons,
at equilibrium, are all the electrons matched up with all the fixed protons, up to a permutation?  

In the above, $k$ is the number of hidden units, $d$ is the number of inputs, the positions of each fixed charge is the input weight vector of
a hidden unit in the target network, and the initial positions of the moving charges are the initial values of the weight vectors for the hidden units in the learning network. The motion of the charges essentially tracks the change in the network during gradient descent. The force between a pair of charges is not given by the standard electrical force of $1/r^2$ (where $r$ is the distance between the charges), but by a function determined by the activation and the input distribution. Thus the question of convergence in these simplified depth two networks can be resolved by studying the equivalent electrodynamics question with the corresponding force function.
%

\begin{theorem}[informal statement of Theorem~\ref{EPDyn}]
Applying gradient descent for learning the output of a depth two network
with $k$ hidden units with activation $\sigma,$ and a linear output
node, under squared loss, using a network of the same architecture,
is equivalent to the motion of $k$ charges in the presence of $k$ fixed charges where the force between each pair of charges is given by a potential function that depends on $\sigma$ and the input distribution.  \end{theorem}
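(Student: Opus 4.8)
The plan is to show that the population squared loss, viewed as a function of the hypothesis weights, is --- up to an additive constant --- exactly the electrostatic potential energy of a configuration of charges, so that gradient descent on it is the (overdamped) motion of those charges. Write the target network as $f(x) = \sum_{i=1}^{k}\sigma(a_i^\top x)$ with fixed weight vectors $a_1,\dots,a_k\in\rea^d$ --- these will be the protons --- and the hypothesis network as $h_w(x) = \sum_{j=1}^{k}\sigma(w_j^\top x)$ with trainable weight vectors $w_1,\dots,w_k\in\rea^d$ --- the electrons. (If the output node carries fixed weights rather than simply summing, the argument below goes through with the charges acquiring the corresponding magnitudes.) The object to identify with a charge system is gradient descent on $L(w) = \E{x\sim D}{(h_w(x)-f(x))^2}$, or, in the continuous-time limit, the gradient flow $\dot w_j = -\nabla_{w_j}L(w)$.

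First I would expand the square and use linearity of expectation. Introducing the symmetric kernel
\[
  \Phi(u,v) \defeq \E{x\sim D}{\sigma(u^\top x)\,\sigma(v^\top x)},
\]
this gives $L(w) = \sum_{j,l}\Phi(w_j,w_l) - 2\sum_{j,i}\Phi(w_j,a_i) + \sum_{i,i'}\Phi(a_i,a_{i'})$. The last double sum does not depend on $w$, and, under the natural normalization in which each $w_j$ is constrained to a fixed sphere, the diagonal terms $\sum_j\Phi(w_j,w_j) = \sum_j\E{x\sim D}{\sigma(w_j^\top x)^2}$ are constant as well. Hence, up to an additive constant,
\[
  L(w) \;=\; 2\sum_{j<l}\Phi(w_j,w_l)\;-\;2\sum_{j,i}\Phi(w_j,a_i)\;+\;\mathrm{const},
\]
which is exactly the total interaction energy of $k$ electrons at the points $w_j$ and $k$ protons at the points $a_i$ with pairwise potential $\Phi$: electron--electron pairs enter with a $+$ sign and electron--proton pairs with a $-$ sign, just as in Coulomb energy.

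It then follows that $-\nabla_{w_j}L$ splits into pairwise forces --- a repulsion $-2\nabla_{w_j}\Phi(w_j,w_l)$ from every other electron and an attraction $+2\nabla_{w_j}\Phi(w_j,a_i)$ toward every proton --- each depending only on the positions of the two charges involved. So the gradient flow of $L$ is precisely the overdamped motion of this charge system (and plain gradient descent is its Euler discretization), with force law the gradient of the potential $\Phi$, which is determined entirely by $\sigma$ and the input distribution $D$. When $D$ is rotationally symmetric and the weights lie on a common sphere, $\Phi(u,v)$ depends only on $\|u-v\|$, so the system is a genuine central-force electrostatics problem with a (generally non-Coulomb) radial law --- this is the form in which the precise statement should be given. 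In particular the global minimizers $L=0$ are exactly the configurations in which the electrons sit on top of the protons up to a permutation, which is why convergence of gradient descent reduces to the equilibrium question stated in the theorem.

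The algebra is routine; the parts that need care are (i) the treatment of the weight norms --- one should either work with projected gradient flow on a product of spheres, so that the self-energy and the radial component of each pairwise force drop out, or else carry the self-energy terms along explicitly; (ii) identifying the sign and monotonicity of $\Phi$ in the distance $\|u-v\|$, since whether ``like charges repel'' under this induced law depends on the pair $(\sigma,D)$ --- this is precisely the latitude that the later existence results exploit in choosing the activation; and (iii) the gap between the discrete gradient-descent update and the continuous-time charge dynamics, which is the usual step-size discretization and does not affect the equivalence at the level of the vector field. I expect (i) to be the only genuinely fiddly point; (ii) and (iii) are conceptual remarks rather than real obstacles.
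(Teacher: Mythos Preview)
Your proposal is correct and follows essentially the same route as the paper: expand the squared loss, define $\Phi(u,v)=\E{x\sim D}{\sigma(x,u)\sigma(x,v)}$, and read off the gradient flow on the hidden weights as a sum of pairwise forces. The paper's formal version carries the output weights $a_i,b_j$ as charge magnitudes throughout (which you note parenthetically), and it sidesteps your ``fiddly point'' (i) not by projecting onto a sphere but by restricting to translationally invariant potentials in $\rea^d$, so that $\Phi(\theta,\theta)$ is automatically a constant and the self-energy terms drop out of the gradient with no constraint needed.
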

%
Based on this correspondence we  prove the existence of an
activation function such that the corresponding gradient descent
dynamics under standard Gaussian inputs result in learning at least
one of the hidden nodes in the target network. We then show that this
allows us to learn the complete target network one node at a time. For more realistic activation functions, we only obtain
partial results. We assume the sample complexity
is close to its infinite limit.
%

%
\begin{theorem}[informal statement of Theorem~\ref{almostHarmSGD}]
There is an activation function such that running gradient
  descent for minimizing the squared loss along with $\ell_2$
  regularization for standard Gaussian inputs, at convergence, 
  we learn at least one of
  the hidden weights of the target neural network.
\end{theorem}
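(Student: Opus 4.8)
The plan is to combine the electrodynamics correspondence of Theorem~\ref{EPDyn} with a robust, potential-theoretic argument in the spirit of Earnshaw's theorem.

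\emph{Step 1 (reduce to an electrostatic equilibrium).} By Theorem~\ref{EPDyn}, gradient descent on the squared loss over standard Gaussian inputs is the gradient flow of a potential energy of the form
\[
U(w_1,\dots,w_k)\;=\;\sum_{i\neq j}\phi\paren{w_i,w_j}\;-\;2\sum_{i,j}\phi\paren{w_i,w^*_j}\;+\;C,
\]
where the $w^*_j$ are the fixed target weights and $\phi(u,v)=\expt_x[\sigma(\ip{u,x})\sigma(\ip{v,x})]$ is a function of $\norm{u},\norm{v}$ and $\ip{u,v}$ only. Adding the $\ell_2$ regularizer $\lambda\sum_i\norm{w_i}^2$ makes $U$ coercive, so gradient descent converges to a critical point, and (avoiding strict saddles from a generic start) to a local minimum, i.e.\ a \emph{stable equilibrium} of $k$ ``electrons'' $w_i$ in the field of $k$ ``protons'' $w^*_j$. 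It therefore suffices to show that at any such equilibrium some $w_i$ is (approximately) equal to some $w^*_j$.

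\emph{Step 2 (design $\sigma$ to make the field almost harmonic).} Expanding $\sigma=\sum_n c_n h_n$ in the Hermite basis turns $\phi$ into an explicit series whose ``single-charge'' behaviour is controlled by the $c_n$; once the regularizer has calibrated the norms $\norm{w_i}$ (this is where the $\ell_2$ term is used), $\phi$ becomes a radial kernel $g(\norm{u-v})$. The construction is to choose the $c_n$ so that $g(r)$ matches, up to a rapidly decaying tail, the Newtonian kernel $r^{2-d}$, which is harmonic on $\rea^d\setminus\{0\}$; one matches power-series coefficients and damps/truncates them just enough to keep $\sum_n c_n^2<\infty$ so that $\sigma$ is a legitimate activation. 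Then the field felt by a single electron $w_\ell$, with the rest held fixed, is the gradient of
\[
V_\ell(w)\;=\;-2\sum_j g\!\paren{\norm{w-w^*_j}}\;+\;2\sum_{i\neq\ell} g\!\paren{\norm{w-w_i}}\;+\;\lambda\norm{w}^2,
\]
which is \emph{almost harmonic} off the charge locations: attractive $-\infty$ singularities at the protons, repulsive $+\infty$ singularities at the other electrons, and $\Delta V_\ell$ small (only $\lambda\norm{w}^2$ and the truncation tail contribute).

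\emph{Step 3 (Earnshaw-type rigidity).} A harmonic function satisfies the mean-value property and hence has no strict interior local minimum, so at a stable equilibrium $w_\ell$ must sit at---or, quantitatively, within a controlled distance of---a singularity of $V_\ell$; the repulsive singularities are $+\infty$ and are therefore excluded, so $w_\ell$ lies near some proton $w^*_j$. Making this robust---(i) a quantitative maximum principle absorbing the $O(\lambda)$-plus-truncation perturbation of $\Delta V_\ell$, (ii) localizing the argument near a suitably extremal electron so that one clean recovery survives, and (iii) treating any cluster of mutually collapsed electrons as a single larger charge and recursing---yields that at least one $w_i$ is $\eps$-close to some $w^*_j$, with $\eps\to0$ as the non-harmonic error vanishes; the remaining hidden weights are then obtained by the iterative (``one node at a time'') scheme. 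I expect the main obstacle to be precisely this robustness: no genuinely square-integrable activation induces an exactly harmonic $\phi$, so the Earnshaw/maximum-principle step must be made quantitatively stable against the unavoidable non-harmonic error and against the $\ell_2$ term (which is subharmonic and could a priori create spurious shallow minima), and the equilibrium geometry must be pinned down tightly enough to isolate one recovered weight; a secondary difficulty is justifying that gradient descent in fact reaches a second-order local minimum rather than stalling at a saddle.
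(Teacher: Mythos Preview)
Your high-level plan---electrodynamics correspondence, design the activation to control the Laplacian of the pairwise potential, then run an Earnshaw-type argument at a second-order stationary point---matches the paper's strategy. But the core technical choice in Step~2 is the wrong one, and this is exactly the issue the paper is built around.

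You aim to make the induced potential approximate the \emph{harmonic} Newtonian kernel $r^{2-d}$. The paper explicitly rejects this route for two reasons it states in Section~\ref{sec:earnshaw}: (i) there is no activation realizing this $\Phi$, and (ii) even if you approximate it, a harmonic potential only gives $\Delta V_\ell\approx 0$, which rules out \emph{strict} local minima but not local minima (the Hessian can be identically zero). Your ``quantitative maximum principle'' would have to extract a strictly negative Hessian eigenvalue from a Laplacian that is, by design, $o(1)$; that is precisely the obstruction you flag as the main difficulty, and it does not go away. The paper's fix is to target \emph{$\lambda$-harmonic} potentials, i.e.\ eigenfunctions $\Delta_\theta\Phi=\lambda\Phi$ with $\lambda>0$ (Definition before Theorem~\ref{EigStrict}). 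Then, using the first-order condition on the output weights $a_i$, one computes $\Delta_v H = -2\lambda(\sum_{i\in S} a_i)^2$ (Theorem~\ref{EigStrict}), which is strictly negative unless the charges cancel; this directly ties second-order information to the loss and removes the ``Hessian could be zero'' gap. The realizable activation is then built (Lemma~\ref{almostHarmReal}) so that $\Phi_\epsilon$ is exactly $\lambda$-harmonic for $r\ge\epsilon$, via Theorem~\ref{thm:tranReal} and a superposition of sphere-indicator activations---not via a Hermite expansion matching $r^{2-d}$.

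Two smaller corrections. First, the $\ell_2$ regularizer in the paper is $\|a\|^2$ on the \emph{output} weights, used to keep the charges bounded so the Laplacian computation above is meaningful (Lemmas~\ref{almostHarmConv}--\ref{almostHarmRes}); it is not on the hidden weights and is not what makes $\Phi$ radial. Radiality comes from the translationally-invariant construction of $\sigma(x,\theta)=(2\pi)^{d/4}e^{\|x\|^2/4}h(x-\theta)$ in Theorem~\ref{thm:tranReal}, not from norm calibration. Second, the paper does not rely on ``generic initialization avoids strict saddles''; it uses an explicit Hessian-descent step (Algorithm~\ref{SecondGD}) together with an initialization lemma (Lemma~\ref{almostHarmInitialize}) guaranteeing a quantitative decrease below $G(\boldsymbol{0,0})$, which is what forces some $a_i$ to be non-negligible and hence, by Lemma~\ref{almostHarmConv}, some $\theta_i$ to be near a $w_j$.
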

We  prove that the above result can be iterated to learn the entire network node-by-node using gradient descent (Theorem~\ref{nodeWise}).  Our algorithm learns a network with the same architecture and number of hidden nodes as the target network, 
in contrast with several existing improper learning results.

In the appendix, we show some weak results for more practical activations. For the sign activation, we show that for the loss with respect to a single node, the only local minima are at the hidden target nodes with high probability if the target network has a randomly picked top layer. For the polynomial activation, we derive a similar result under the assumption that the hidden nodes are orthonormal.

 \begin{table}[h!]
 \noindent
 \vskip 0.1in
 \begin{center}
 \begin{small}
 \begin{sc}
 \begin{tabular}{
   |p{\dimexpr.28\linewidth-2\tabcolsep-1.3333\arrayrulewidth}
   |p{\dimexpr.30\linewidth-2\tabcolsep-1.3333\arrayrulewidth}
   |p{\dimexpr.42\linewidth-2\tabcolsep-1.3333\arrayrulewidth}|
   }
    \hline 
         Name of Activation&  Potential  ($\Phi(\theta,w)$)    & Convergence? \\ \hline 
        Almost   $\lambda$-harmonic  & Complicated  (see Lem~\ref{almostHarmReal}) & Yes, Thm~\ref{nodeWise}\\
         Sign & $1 - \frac{2}{\pi}\cos^{-1}(\theta^Tw)$       & Yes for d = 2, Lem~\ref{SignConv} \\   
Polynomial  & $(\theta^Tw)^m$       & Yes, for orthonormal $w_i.$ Lem~\ref{PolyStrict}  \\        
         \hline
 \end{tabular}
 \end{sc}
 \end{small}
 \end{center}
 \caption{Activation, Potentials, and Convergence Results Summary}
 \label{table1}
 \end{table} 

\subsection{Intuition and Techniques.}
Note that for the standard electric potential function given by $\Phi = 1/r$ where $r$ is the distance between the charges, it is known from Earnshaw's theorem that an electrodynamic system with some fixed protons and some moving electrons is at equilibrium only when the moving electrons coincide with the fixed protons. Given our translation above between electrodynamic systems and depth 2 networks (Section~\ref{sec:epdyn}), this would imply learnability of depth 2 networks under gradient descent under $\ell_2$ loss, if the activation function corresponds to the electrostatic potential. However, there exists no activation function $\sigma$ corresponding to this $\Phi$.
%


The proof of Earnshaw's theorem is based on the fact that the electrostatic potential is harmonic, \emph{i.e}, its Laplacian (trace of its Hessian) is identically zero. This ensures that at every critical point, there is direction of potential reduction (unless the hessian is identically zero). We generalize these ideas to potential functions that are eigenfunctions of the Laplacians, $\lambda$-harmonic potentials (Section~\ref{sec:earnshaw}). However, these potentials are unbounded. Subsequently, we construct a non-explicit activation function such that the corresponding potential is bounded and is almost $\lambda$-harmonic, \emph{i.e.}, it is $\lambda$-harmonic outside a small sphere (Section~\ref{sec:almost-harmonic}). For this activation function, we show at a stable critical point, we must learn at least one of the hidden nodes. Gradient descent (possibly with some noise, as in the work of Ge \emph{et al.}~\cite{GeHJY15}) is believed to converge to stable critical points. However, for simplicity, we descend along directions of negative curvature to escape saddle points. Our activation lacks some regularity conditions required in~\cite{GeHJY15}. We believe the results in \cite{jin2017escape} can be adapted to our setting to prove that perturbed gradient descent converges to stable critical points.






%

%
%

There is still a large gap between theory and practice. However, we believe our work can offer some theoretical explanations and guidelines for the design of better activation functions for gradient-based training algorithms. For example, better accuracy and training speed were reported when using the newly discovered exponential linear unit (ELU) activation function in \cite{ClevertUH15, ShahKSS16}. We hope for more theory-backed answers to these and many other questions in deep learning.


\subsection{Related Work.}
If the activation functions are linear or
if some independence assumptions are made, Kawaguchi shows that the
only local minima are the global minima \cite{Kawaguchi16a}. Under the
spin-glass and other physical models, some have shown that the loss
landscape admits well-behaving local minima that occur usually when the
overall error is small
\cite{ChoromanskaHMAL14, DauphinPGCGB14}. When only training
error is considered, some have shown that a global minima can be
achieved if the neural network contains sufficiently many hidden nodes
\cite{SoudryC16}. Recently, Daniely has shown that SGD learns the conjugate kernel class \cite{daniely2017sgd}. Under simplifying assumptions, some results for learning ReLU's with gradient descent are given in \cite{tian2017analytical, brutzkus2017globally}. Our research is inspired by
\cite{valiant2014learning}, where the authors show that for polynomial 
target functions, gradient descent on neural networks
with one hidden layer converges to low error, given a large
number of hidden nodes, and under complex perturbations,
there are no robust local minima. Even more recently, similar results about the convergence of SGD for two-layer neural networks have been established for a polynomial activation function under a more complex loss function \cite{ge2017learning}. And in \cite{li2017convergence}, they study the same problem as ours with the RELU activation and where lower layer of the network is close to identity and the upper layer has weights all one. This corresponds to the case where each electron is close to a distinct proton -- under these assumptions they show that SGD learns the true network. 

Under worst case assumptions, there has been hardness results for even simple networks. A neural network with one hidden unit and sigmoidal activation can admit exponentially many local minima \cite{Auer}. Backprogration has been proven to fail in a simple network due to the abundance of bad local minima \cite{brady1989back}. Training a 3-node neural network with one hidden layer is { NP}-complete \cite{BlumR88}.  But, these and many similar worst-case hardness results are based on worst case training data assumptions. However, by using a result in \cite{klivans2006cryptographic} that learning a neural network with threshold activation functions is equivalent to learning intersection of halfspaces, several authors showed that under certain cryptographic assumptions, depth-two neural networks are not efficiently learnable with smooth activation functions \cite{LivniSS14, ZhangLWJ15, ZhangLJ15}.


Due to the difficulty of analysis of the non convex gradient descent in deep learning, many have turned to improper learning and the study of non-gradient methods to train neural networks. Janzamin et. al use tensor decomposition methods to learn the shallow neural network weights, provided access to the score function of the training data distribution \cite{JanzaminSA15}. Eigenvector and tensor methods are also used to train shallow neural networks with quadratic activation functions in \cite{LivniSS14}. Combinatorial methods that exploit layerwise correlations in sparse networks have also been analyzed provably in \cite{AroraBGM13}. Kernel methods, ridge regression, and even boosting were explored for regularized neural networks with smooth activation functions in \cite{shalev2011learning, ZhangLWJ15, ZhangLJ15}. Non-smooth activation functions, such as the ReLU, can be approximated by polynomials and are also amenable to kernel methods\cite{GoelKKT16}. These methods however are very different from the simple popular SGD.

\section{Deep Learning, Potentials, and Electron-Proton Dynamics}
\label{sec:epdyn}
\subsection{Preliminaries.}
We will work in the space  $\mathcal{M}= \R^d.$ 
We denote the gradient and Hessian as $\nabla_{\R^d} f$ and  $\nabla_{\R^d}^2 f$ respectively.
The Laplacian is defined as
$\Delta_{\R^d} f = \Tr(\nabla_{\R^d}^2 f)$. 
If $f$ is multivariate with variable $x_i$, then let $f_{x_i}$ be a
restriction of $f$ onto the variable $x_i$ with all other variables
fixed. Let $\nabla_{x_i}f, \Delta_{x_i}f$ to be the gradient and
Laplacian, respectively, of $f_{x_i}$ with respect to $x_i$. Lastly,
we say $x$ is a critical point of $f$ if $\nabla f$ does not exist or
$\nabla f = 0$. 

We focus on learning depth two networks with a linear activation on
the output node. If the network takes inputs $x \in \R^d$ (say from
some distribution $\mathcal{D}$), then the network output, denoted
$f(x)$ is a sum over $k = \poly(d)$ hidden units with weight vectors
$w_{i} \in \R^d,$ activation $\sigma(x,w):\R^d \times \R^d\to \R,$ and
output weights $b_i \in \R.$ Thus, we can write
$f(x) = \sum_{i=1}^k b_i\sigma(x,w_i)$. We denote this concept class
$\mathcal{C}_{\sigma,k}.$ Our hypothesis concept class is also
$\mathcal{C}_{\sigma,k}.$ 





Let $\boldsymbol{a} = (a_1,...,a_k)$ and $\boldsymbol{\theta} =
(\theta_1,...,\theta_k)$; similarly for $\boldsymbol{b},
\boldsymbol{w}$ and our guess is $\hat{f}(x) = \sum_{i=1}^k a_i
\sigma(x, \theta_i)$. 
We define $\Phi,$ the {\bf potential function} {\it corresponding to the activation
  } $\sigma,$ as
\[\Phi(\theta, w) = \expt_{X\sim D}[ \sigma(X,\theta) \sigma(X,w)].\]
We work directly with the true squared loss error
$L(a,\theta)= \expt_{x \sim \mathcal{D}}[(f - \hat{f})^2]$. To
simplify $L$, we re-parametrize $a$ by $-a$ and expand.
%
%
\begin{align}
 L(\boldsymbol{a,\theta})  & = \expt_{X\sim D}\left[ \left(
  \sum_{i=1}^k a_i \sigma(X,\theta_i) + \sum_{i=1}^k
  b_i\sigma(X,w_i)\right)^2\right] \nonumber \\
& = \sum_{i=1}^k \sum_{j=1}^k a_i a_j \Phi(\theta_i,\theta_j)+ 2 a_ib_j \Phi(\theta_i,w_j)+ b_i b_j \Phi(w_i,w_j),
 \label{errLoss}
\end{align}
Given $\mathcal{D},$ the activation function $\sigma$, and the loss
$L$, we attempt to show that we can use some variant of gradient
descent to learn, with high probability, an $\epsilon$-approximation
of $w_j$ for some (or all) $j$. 
Note that our loss is jointly convex, though it is quadratic in
$\boldsymbol{a}$.

In this paper, we restrict our attention to translationally invariant activations
and potentials.
%
Specifically, we may write $\Phi= h(\theta-w)$ for some function $h(x).$ Furthermore, a translationally invariant function $\Phi(r)$ is {\it radial} if it is a function of $r = \|x - y \|$.

{\bf Remark}: Translationally symmetric potentials satisfy
$\Phi(\theta,\theta)$ {\it is a positive constant}. We normalize
$\Phi(\theta,\theta) = 1$ for the rest of the paper.
  
We assume that our input distribution
$\mathcal{D} = \mathcal{N}(0, {\bf I_{d\times d}})$ is fixed as the
standard Gaussian in $\R^d$. This assumption is not critical and a
simpler distribution might lead to better bounds. However, for
arbitrary distributions, there are hardness results for PAC-learning
halfspaces \cite{klivans2006cryptographic}.

We call a potential function {\bf realizable} if it corresponds to
some activation $\sigma$.  The following theorem characterizes
realizable translationally invariant potentials under standard
Gaussian inputs. Proofs and a similar characterization for rotationally invariant
potentials can be found in \supmaterial{Appendix~\ref{sec:realizable}}.
%
\begin{restatable}{theorem}{tranreal}
\label{thm:tranReal}
Let $\mathcal{M} = \R^d$ and $\Phi$ is square-integrable and
$\FT(\Phi)$ is integrable. Then, $\Phi$ is realizable under standard
Gaussian inputs if $\mathfrak{F}(\Phi)(\omega) \geq 0$ and the
corresponding activation is
$\sigma(x) =
(2\pi)^{d/4}e^{x^Tx/4}\mathfrak{F}^{-1}(\sqrt{\mathfrak{F}(\Phi)})(x),
$ where $\mathfrak{F}$ is the generalized Fourier transform in $\R^d.$
\end{restatable}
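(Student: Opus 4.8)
The plan is to realize $\Phi$ by a \emph{spectral square root} of its Fourier transform followed by a change of measure from Lebesgue to the standard Gaussian. Write the translationally invariant potential as $\Phi=h(\theta-w)$; the two hypotheses ($\Phi$ square-integrable, $\mathfrak{F}(\Phi)$ integrable and pointwise nonnegative) make $\mathfrak{F}(\Phi)=\mathfrak{F}(h)$ a nonnegative real function in $L^1\cap L^2$, hence $h$ is real and even. Define
\[
\rho\;:=\;\mathfrak{F}^{-1}\!\Big(\sqrt{\mathfrak{F}(\Phi)}\,\Big).
\]
Since $\sqrt{\mathfrak{F}(\Phi)}\ge 0$ and $\int\big(\sqrt{\mathfrak{F}(\Phi)}\big)^2=\int\mathfrak{F}(\Phi)<\infty$, this square root lies in $L^2$, so $\rho$ is a well-defined real, even, square-integrable function; the convolution theorem then gives $\mathfrak{F}(\rho\star\rho)=\mathfrak{F}(\rho)^2=\mathfrak{F}(\Phi)$, and Fourier inversion (valid since $\mathfrak{F}(\Phi)\in L^1$, and both sides are continuous) yields $h=\rho\star\rho$ pointwise.

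Next, using that $\rho$ is even,
\[
\Phi(\theta,w)=h(\theta-w)=(\rho\star\rho)(\theta-w)=\int_{\R^d}\rho(y)\,\rho\big(y-(\theta-w)\big)\,dy=\int_{\R^d}\rho(x-\theta)\,\rho(x-w)\,dx,
\]
where the last step is the substitution $y=x-w$. Dividing and multiplying by the standard Gaussian density $p(x)=(2\pi)^{-d/2}e^{-\|x\|^2/2}$ and applying Tonelli (the integrand is absolutely integrable because $\rho\in L^2$ and the Gaussian factor is bounded on any translate) gives
\[
\Phi(\theta,w)=\int_{\R^d}\frac{\rho(x-\theta)}{\sqrt{p(x)}}\cdot\frac{\rho(x-w)}{\sqrt{p(x)}}\;p(x)\,dx=\expt_{X\sim\mathcal{N}(0,I)}\big[\sigma(X,\theta)\,\sigma(X,w)\big],
\]
with $\sigma(x,w):=\rho(x-w)/\sqrt{p(x)}=(2\pi)^{d/4}e^{\|x\|^2/4}\,\rho(x-w)$, which is exactly the activation in the statement ($\rho=\mathfrak{F}^{-1}(\sqrt{\mathfrak{F}(\Phi)})$, the $(2\pi)^{d/4}e^{\|x\|^2/4}$ factor being precisely $1/\sqrt{p(x)}$ and attached to the raw input, the dependence on $w$ entering as a translation of $\rho$). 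Finally $\sigma$ is an admissible activation: $\expt_X[\sigma(X,\theta)^2]=h(0)=\|\rho\|_2^2=\|\mathfrak{F}(\Phi)\|_1<\infty$, so the hypothesis network lies in $\mathcal{C}_{\sigma,k}$ with finite loss. This proves realizability; the converse implication and the rotationally invariant analogue deferred to the appendix are handled by the same computation run backwards, respectively with a Hermite expansion in place of the Fourier transform.

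I do not expect conceptual obstacles here — the content is spectral factorization of a positive-definite kernel plus a reweighting of the base measure — so the real work is functional-analytic bookkeeping: making the \emph{generalized} (tempered-distributional) Fourier transform and its inverse rigorous on the spaces that actually occur, checking that $\sqrt{\mathfrak{F}(\Phi)}$ admits an $L^2$ inverse transform, justifying $\mathfrak{F}(\rho\star\rho)=\mathfrak{F}(\rho)^2$ together with the \emph{pointwise} identity $h=\rho\star\rho$ from continuity of both sides and integrability of $\mathfrak{F}(\Phi)$, and verifying the Fubini/Tonelli interchange in the change of measure. It is also worth noting where each of the two integrability hypotheses on $\Phi$ is used, since the statement assumes exactly these; and one should track the normalization constant of the chosen Fourier convention (the displayed formula corresponds to the convention in which $\mathfrak{F}(f\star g)=\mathfrak{F}(f)\,\mathfrak{F}(g)$ with no extra constant, so that the only $2\pi$ factor surviving is the Gaussian normalization $(2\pi)^{d/4}$).
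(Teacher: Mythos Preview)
Your proposal is correct and follows essentially the same approach as the paper: define $\rho=\mathfrak{F}^{-1}(\sqrt{\mathfrak{F}(\Phi)})$, use the convolution theorem to get $\Phi=\rho\star\rho$, then reweight by the Gaussian density to convert the Lebesgue integral into an expectation under $\mathcal{N}(0,I)$. Your version is in fact more careful about the functional-analytic bookkeeping (evenness of $\rho$, the $L^2$ inverse transform, pointwise vs.\ a.e.\ equality) than the paper's own proof, which handles these points rather tersely.
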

%

%
\subsection{Electron-Proton Dynamics}

%
%

By interpreting the pairwise potentials as electrostatic attraction
potentials, we notice that our dynamics is similar to electron-proton
type dynamics under potential $\Phi$, where $w_i$ are fixed point
charges in $\R^d$ and $\theta_i$ are moving point charges in $\R^d$
that are trying to find $w_i$. The total force on each charge is the
sum of the pairwise forces, determined by the gradient of $\Phi.$ We
note that standard dynamics interprets the force between particles as
an acceleration vector. In gradient descent, it is interpreted as a
velocity vector. 
\begin{definition}\label{EPDef}
  Given a potential $\Phi$ and particle locations
  $\theta_1,...,\theta_k \in \rea^d$ along with their respective
  charges $a_1,...,a_k \in \R$. We define {\bf Electron-Proton
    Dynamics} under $\Phi$ with some subset $S \subseteq [k]$ of fixed
  particles to be the solution $(\theta_1(t),...,\theta_k(t))$ to the
  following system of differential equations: For each pair
  $(\theta_i,\theta_j)$, there is a force from $\theta_j$ exerted on
  $\theta_i$ that is given by
  ${\bf F}_{i}(\theta_j) = a_ia_j\nabla_{\theta_i}
  \Phi(\theta_i,\theta_j)$ and
\[-\frac{d\theta_i}{dt} =  \sum_{ j \neq i} {\bf F}_{i}(\theta_j)\]
for all $i \not \in S$, with $\theta_i(0) = \theta_i$. For $i \in S$, $\theta_i(t) =  \theta_i$.
\end{definition}
For the following theorem, we assume that $\boldsymbol{\theta}$ is fixed.
\begin{restatable}{theorem}{epdyn}
\label{EPDyn}
Let $\Phi$ be a symmetric potential and $L$ be as in \eqref{errLoss}. Running continuous gradient descent on $\frac{1}{2} L$ with respect to $\theta$, initialized at
$(\theta_1,...,\theta_k)$ produces the same dynamics as
Electron-Proton Dynamics under $2\Phi$ with fixed particles at
$w_1,...,w_k$ with respective charges $b_1,..,b_k$ and moving
particles at $\theta_1,...,\theta_k$ with respective charges
$a_1,...,a_k$.
\end{restatable}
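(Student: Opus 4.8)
The plan is to read ``continuous gradient descent on $\tfrac12 L$'' as the gradient flow $\tfrac{d\theta_i}{dt}=-\nabla_{\theta_i}\bigl(\tfrac12 L\bigr)$, run simultaneously over all $i\in[k]$, with $\boldsymbol a,\boldsymbol b,\boldsymbol w$ held fixed and $\theta_i(0)=\theta_i$. Here the $w_j$'s are never treated as variables, so they play the role of the frozen particles: we are applying Definition~\ref{EPDef} to the $2k$-particle system consisting of the $\theta_i$'s and the $w_j$'s, with $S$ equal to the index set of the $w_j$'s. Thus it suffices to compute $\nabla_{\theta_i}\bigl(\tfrac12 L\bigr)$ from the closed form \eqref{errLoss} and check, term by term, that the resulting right-hand side of the ODE is exactly the total Electron--Proton force $\sum_{j\ne i}{\bf F}_i(\theta_j)+\sum_{j}{\bf F}_i(w_j)$ under potential $2\Phi$, with moving particles at $\theta_1,\dots,\theta_k$ carrying charges $a_1,\dots,a_k$ and fixed particles at $w_1,\dots,w_k$ carrying charges $b_1,\dots,b_k$. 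Once this identity of vector fields is established, both systems obey the same autonomous ODE with the same initial data, so (by Picard--Lindel\"of, given that $\nabla\Phi$ is locally Lipschitz) their trajectories coincide.

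For the gradient computation I would differentiate \eqref{errLoss} blockwise in $\theta_i$. The block $\sum_{p,q}b_pb_q\Phi(w_p,w_q)$ is $\theta$-free and contributes nothing. In $\sum_{p,q}a_pa_q\Phi(\theta_p,\theta_q)$ the variable $\theta_i$ occurs both as the first and as the second argument; the lone diagonal term $a_i^2\Phi(\theta_i,\theta_i)$ equals $a_i^2$ by translational invariance together with the normalization $\Phi(\theta,\theta)=1$ (the Remark), hence is constant and drops, while the two off-diagonal occurrences are equal after invoking $\Phi(x,y)=\Phi(y,x)$ --- this is precisely where the symmetry of $\Phi$ supplies a factor $2$ matching the ``$2$'' in the potential $2\Phi$. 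Likewise, in $\sum_{p,q}2a_pb_q\Phi(\theta_p,w_q)$ only the terms with first index $i$ survive differentiation in $\theta_i$. Collecting the surviving terms and comparing with ${\bf F}_i(\theta_j)=a_ia_j\nabla_{\theta_i}\bigl(2\Phi\bigr)(\theta_i,\theta_j)$ and ${\bf F}_i(w_j)=a_ib_j\nabla_{\theta_i}\bigl(2\Phi\bigr)(\theta_i,w_j)$ reproduces the defining equation $-\tfrac{d\theta_i}{dt}=\sum_{j\ne i}{\bf F}_i(\theta_j)+\sum_j{\bf F}_i(w_j)$ of Definition~\ref{EPDef}.

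The only step with genuine content beyond bookkeeping is justifying that $\Phi(\theta,w)=\expt_{X\sim D}[\sigma(X,\theta)\sigma(X,w)]$ is differentiable with $\nabla_\theta\Phi(\theta,w)=\expt_{X\sim D}[\nabla_\theta\sigma(X,\theta)\,\sigma(X,w)]$, i.e. that one may differentiate under the expectation, so that the $\nabla_{\theta_i}\Phi$ appearing in the gradient of $L$ is literally the force in Definition~\ref{EPDef}. For the translationally invariant, square-integrable $\Phi$ considered here (and in Theorem~\ref{thm:tranReal}) this is a routine dominated-convergence argument under Gaussian inputs, possibly after a mild decay hypothesis on $\widehat\Phi$; with $\Phi\in C^1$ and $\nabla\Phi$ locally Lipschitz, the existence-uniqueness step above also goes through. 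I do not expect a real obstacle: the substance of the statement is conceptual --- the squared-loss gradient of a sum-output depth-two network is literally a sum of pairwise Coulomb-type forces --- and the ``hard part'' is merely keeping the diagonal terms and the factors of $\tfrac12$ and $2$ straight.
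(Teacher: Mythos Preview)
Your approach is essentially identical to the paper's: both simply differentiate the expanded form \eqref{errLoss} term by term, drop the $\theta$-free block $\sum b_ib_j\Phi(w_i,w_j)$ and the constant diagonal $a_i^2\Phi(\theta_i,\theta_i)$, use symmetry of $\Phi$ to collect the off-diagonal $\theta$--$\theta$ terms with a factor of $2$, and identify the resulting expression for $-\tfrac{d\theta_i}{dt}$ with the force sum in Definition~\ref{EPDef}. The paper's proof is a three-line version of your outline and does not address differentiation under the expectation or Picard--Lindel\"of uniqueness at all --- it treats $\Phi$ directly as the primitive object and simply reads off the gradient of the quadratic form, so your extra regularity discussion is more careful than what the paper actually does.
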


%

\section{Earnshaw's Theorem and Harmonic Potentials} 
\label{sec:earnshaw}
When running gradient descent on a non-convex loss, we often can and
do get stuck at a local minima. In this section, we use second-order
information to deduce that for certain classes of potentials, there
are no spurious local minima. The potentials In this section are often
{\it unbounded and un-realizable}. However, in the next section, we
apply insights developed here to derive similar convergence results
for approximations of these potentials.
%

%
Earnshaw's theorem in electrodynamics shows that there is no stable
local minima for electron-proton dynamics. This hinges on the property
that the electric potential
$\Phi(\theta,w) = \|\theta-w\|^{2-d}, d \neq 2$ is harmonic, with
$d = 3$ in natural setting. If $d = 2$, we instead have
$\Phi(\theta, w) = - \ln(\|\theta - w\|)$. First, we notice that this
is a symmetric loss, and our usual loss in \eqref{errLoss} has
constant terms that can be dropped to further simplify.
\begin{equation}\label{errSimp}
\overline{L}(a,\theta) =  2\sum_{i=1}^k\sum_{i < j} a_ia_j\Phi(\theta_i,\theta_j) + 2\sum_{i=1}^k\sum_{j=1}^ka_ib_j \Phi(\theta_i,w_j)
\end{equation} 
\begin{definition}
$\Phi(\theta,w)$ is a {\bf harmonic} potential on $\Omega$ if $\Delta_\theta \Phi(\theta,w) = 0$ for all $\theta \in \Omega$, except possibly at $\theta = w$.
\end{definition}

\begin{definition}
  Let $\Omega \subseteq \R^d$ and consider a function
  $f:\Omega \to \R$. A critical point $x^* \in \Omega$ is a {\bf local
    minimum} if there exists $\epsilon > 0$ such that
  $f(x^*+v) \geq f(x^*)$ for all $\|v\|\leq \epsilon$. It is a {\bf
    strict local minimum} if the inequality is strict for all
  $\norm{v} \le \epsilon.$
\end{definition} 
\begin{fact}
  Let $x^*$ be a critical point of a function $f : \Omega \to \R$ such
  that $f$ is twice differentiable at $x^*.$ Then, if $x^*$ is a local
  minimum then $\lambda_{min}(\nabla^2 f(x^*)) \geq 0.$ Moreover, if
  $\lambda_{min}(\nabla^2 f(x^*)) > 0,$ then $x^*$ is a strict local minimum.
\end{fact}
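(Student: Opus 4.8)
The plan is to derive both assertions from a single tool, the second-order Taylor expansion of $f$ at the critical point $x^*$. Since $x^*$ is a critical point at which $\nabla f$ exists (twice differentiability at $x^*$ forces this), we have $\nabla f(x^*) = 0$, and twice differentiability of $f$ at $x^*$ gives
\[
f(x^* + h) = f(x^*) + \tfrac12\, h^T \nabla^2 f(x^*)\, h + r(h), \qquad \text{with } \lim_{h \to 0} \frac{r(h)}{\norm{h}^2} = 0 .
\]
The observation I would stress is that this remainder estimate is uniform over directions: for each $\eta > 0$ there is $\delta_\eta > 0$ with $\abs{r(h)} \le \eta \norm{h}^2$ whenever $0 < \norm{h} \le \delta_\eta$, regardless of the direction of $h$ (this is just the meaning of the limit above, quantified over a punctured ball). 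Writing $\lambda = \lambda_{min}(\nabla^2 f(x^*))$ and fixing a unit eigenvector $v$ of $\nabla^2 f(x^*)$ with eigenvalue $\lambda$, both statements then follow by substituting convenient choices of $h$.

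For the necessity of $\lambda \ge 0$ I would argue by contradiction. Assuming $\lambda < 0$ and taking $h = tv$ with small $t > 0$, the expansion reads $f(x^* + tv) = f(x^*) + \tfrac12 \lambda t^2 + r(tv)$; applying the uniform estimate with $\eta = \tfrac14 \abs{\lambda}$ bounds $\abs{r(tv)} \le \tfrac14 \abs{\lambda} t^2$ for all sufficiently small $t$, so $f(x^* + tv) \le f(x^*) - \tfrac14 \abs{\lambda} t^2 < f(x^*)$ for such $t$, contradicting that $x^*$ is a local minimum. Hence a local minimum forces $\lambda \ge 0$.

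For the sufficiency of $\lambda > 0$ I would use the Rayleigh bound $h^T \nabla^2 f(x^*) h \ge \lambda \norm{h}^2$, valid for every $h$. Choosing $\eta = \tfrac14 \lambda$ in the uniform estimate gives an $\epsilon > 0$ with $\abs{r(h)} \le \tfrac14 \lambda \norm{h}^2$ for $0 < \norm{h} \le \epsilon$, whence
\[
f(x^* + h) \ge f(x^*) + \tfrac12 \lambda \norm{h}^2 - \tfrac14 \lambda \norm{h}^2 = f(x^*) + \tfrac14 \lambda \norm{h}^2 > f(x^*)
\]
for all $h$ with $0 < \norm{h} \le \epsilon$, which is exactly the definition of a strict local minimum. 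The argument is routine; the only subtlety I anticipate is that $f$ is hypothesized to be twice differentiable \emph{only at the single point} $x^*$, so the Lagrange and integral remainder forms of Taylor's theorem on a neighborhood are unavailable. The little-$o$ form stated above is precisely the content of twice differentiability at a point, and it already supplies the direction-uniform remainder bound that both halves of the proof rely on, so no further regularity is needed.
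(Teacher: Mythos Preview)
Your proof is correct; the paper itself states this result as a \textbf{Fact} and supplies no proof at all, treating it as standard background. Your argument via the second-order Taylor expansion with Peano remainder is the canonical one, and you correctly anticipate the only delicate point---that twice differentiability at the single point $x^*$ yields exactly the $o(\norm{h}^2)$ remainder you need, without requiring smoothness on a neighborhood.
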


Note that if $\lambda_{min}(\nabla^2 f(x^*)) < 0$ then moving along the direction of the corresponding eigenvector decreases $f$ locally. If $\Phi$ is harmonic then it can be shown the trace of its Hessian is $0$ so 
if there is any non zero eigenvalue then at least one eigenvalue is negative. This idea results in the following known theorem (see full proof in supplementary material) that is applicable to the electric potential  
function $1/r$ in $3$-dimensions since is harmonic. It implies that a configuration of $n$ electrons and $n$ protons cannot be in a strict local minimum even if one of the mobile charges is isolated
(however note that this potential function goes to $\infty$ at $r=0$ and may not be realizable).

\begin{restatable}{theorem}{earnshaw}
\emph{(Earnshaw's Theorem. See~\cite{arnold1985mathematical})}
\label{Earnshaw} 
Let $\mathcal{M} = \R^d$ and let $\Phi$ be harmonic and $L$
be as in $\eqref{errSimp}$. Then, $L$ admits no
differentiable strict local minima.
\end{restatable}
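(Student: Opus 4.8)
The object in the theorem is the reduced loss $\overline{L}$ of \eqref{errSimp}. Suppose for contradiction that $x^{\ast}=(\theta_1^{\ast},\dots,\theta_k^{\ast})$ is a differentiable strict local minimum (with $\boldsymbol{a},\boldsymbol{b},\boldsymbol{w}$ held fixed; whether one minimizes over $\boldsymbol{\theta}$ alone or jointly over $(\boldsymbol{a},\boldsymbol{\theta})$ is immaterial, since the argument only perturbs one $\theta_i$ at a time). The whole argument turns on the one-variable restriction $\overline{L}_{\theta_i}$: freezing every coordinate except $\theta_i$ at its value in $x^{\ast}$, formula \eqref{errSimp} shows that $\overline{L}_{\theta_i}(\theta)$ equals, up to an additive constant, a linear combination of the maps $\theta\mapsto\Phi(\theta,\theta_j^{\ast})$ for $j\ne i$ and $\theta\mapsto\Phi(\theta,w_j)$ for all $j$. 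Since $\Phi$ is harmonic, $\Delta_{\theta_i}\overline{L}$ vanishes off the centers $\{\theta_j^{\ast}:j\ne i\}\cup\{w_j\}$. Because $\overline{L}$ is differentiable, hence finite, at $x^{\ast}$ while the harmonic potentials of interest (e.g.\ $\|\cdot\|^{2-d}$ and $-\ln\|\cdot\|$) blow up at their centers, $\theta_i^{\ast}$ cannot equal any of these centers, so $\overline{L}_{\theta_i}$ is harmonic on a full neighborhood of $\theta_i^{\ast}$ (and, by interior regularity of harmonic functions, automatically smooth there, so the differentiability hypothesis is used only to exclude these coincidences).

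I would then finish via the minimum principle for harmonic functions. Fix $i$ and a closed ball $\overline{B}$ of radius $r_0\le\epsilon$ around $\theta_i^{\ast}$ on which $\overline{L}_{\theta_i}$ is harmonic. The mean value property gives $\overline{L}_{\theta_i}(\theta_i^{\ast})=\mathrm{avg}_{\partial B_r}\overline{L}_{\theta_i}$ for every radius $r\le r_0$. On the other hand, strict local minimality, applied to the perturbation that moves $\theta_i$ alone, gives $\overline{L}_{\theta_i}(\theta_i^{\ast}+u)>\overline{L}_{\theta_i}(\theta_i^{\ast})$ for all $0<\|u\|\le r_0$, so the average of $\overline{L}_{\theta_i}$ over $\partial B_r$ is strictly larger than $\overline{L}_{\theta_i}(\theta_i^{\ast})$ --- a contradiction. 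This is the rigorous form of the remark preceding the theorem: harmonicity of each $\overline{L}_{\theta_i}$ forces $\Tr\nabla^{2}_{\boldsymbol{\theta}}\overline{L}(x^{\ast})=\sum_i\Delta_{\theta_i}\overline{L}(x^{\ast})=0$, so either the Hessian has a negative eigenvalue, contradicting the Fact, or it is identically zero on the $\boldsymbol{\theta}$-block --- the latter being exactly the degenerate case that the mean value argument above still rules out, since a harmonic function admits no strict interior local minimum.

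The only real friction is the bookkeeping around the singular set of $\Phi$. A bounded singularity of a harmonic function is removable, so if $\Phi$ has a singularity at coincidence it is a genuine blow-up, and then finiteness of $\overline{L}$ at the differentiable point $x^{\ast}$ already yields the separation of $\theta_i^{\ast}$ from the other $\theta_j^{\ast}$ and from the $w_j$ that the restriction argument needs; if instead $\Phi$ is harmonic everywhere there is no singular set and nothing to check. I therefore expect the proof to be short once the reduction to the restrictions $\overline{L}_{\theta_i}$, their harmonicity, and the invocation of the mean value property are set up carefully.
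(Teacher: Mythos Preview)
Your argument is correct and follows the same skeleton as the paper's: restrict to a single coordinate block $\theta_i$, observe that $\overline{L}_{\theta_i}$ is a finite linear combination of harmonic functions, hence harmonic, and conclude that it cannot have a strict interior minimum. The paper's proof simply computes $\Delta_{\theta_i}\overline{L}=0$ and asserts that this contradicts $\Tr(\nabla^2_{\theta_i}\overline{L})>0$ at a strict local minimum; your version replaces that last step with the mean value property, which is the more careful choice, since the implication ``strict local minimum $\Rightarrow$ strictly positive Hessian trace'' is not true in general (think of $x\mapsto x^4$). In effect you are closing exactly the gap the paper itself flags in the sentence after the theorem (``the Hessian of a harmonic potential can be identically zero''). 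You are also more careful than the paper about the singular set: the paper silently assumes $\theta_i$ avoids the centers, whereas you justify it via the blow-up/removable-singularity dichotomy. Both additions are genuine improvements and cost only a few lines.
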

Note that the Hessian of a harmonic potential can be identically
zero. To avoid this possibility we generalize harmonic potentials.
%

\subsection{\texorpdfstring{$\lambda$}{𝜆}-Harmonic Potentials}

%
In order to relate our loss function with its Laplacian, we consider potentials that are non-negative eigenfunctions of the Laplacian operator. Since the zero eigenvalue case simply gives rise to harmonic potentials, we restrict our attention to positive eigenfunctions.
\begin{definition}
A potential $\Phi$ is {\bf$\lambda$-harmonic} on $\Omega$ if there exists $\lambda > 0$ such that for every $\theta \in \Omega$, $\Delta_\theta \Phi(\theta, w) = \lambda \Phi(\theta,w) $, except possibly at $\theta = w$.
\end{definition}

Note that there are realizable versions of these potentials; for
example $\Phi(a,b) = e^{-\|a-b\|_1}$ in $\R^1.$
In the next section, we construct  realizable potentials that are 
$\lambda$-harmonic almost everywhere except when $\theta$ and $w$ are very close. 
\begin{restatable}{theorem}{eigStrict}
\label{EigStrict}
Let $\Phi$ be $\lambda$-harmonic and $L$ be as in \eqref{errLoss}. Then,
$L$ admits no local minima $\boldsymbol{(a,\theta)}$, except when
$L(\boldsymbol{a,\theta}) = L(0,\boldsymbol{\theta})$ or $\theta_i = w_j$ for some $i,j$. 
\end{restatable}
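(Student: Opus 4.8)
The plan is to mimic the proof of Earnshaw's theorem (Theorem~\ref{Earnshaw}), but using the eigenfunction relation $\Delta_\theta \Phi(\theta,w) = \lambda\Phi(\theta,w)$ in place of harmonicity, and being careful that the loss $L(\boldsymbol{a,\theta})$ has both the quadratic-in-$\boldsymbol a$ part and the cross terms. Fix a critical point $(\boldsymbol{a^*,\theta^*})$ of $L$ and suppose, for contradiction, that it is a local minimum with $L(\boldsymbol{a^*,\theta^*}) \neq L(0,\boldsymbol{\theta^*})$ and $\theta_i^* \neq w_j$ for all $i,j$. I would first dispose of the degenerate case: if some $a_i^* = 0$, then (using the criticality condition in the $a_i$ coordinate together with the other coordinates) either we can set all $a_i^*=0$ and contradict $L(\boldsymbol{a^*,\theta^*}) \ne L(0,\boldsymbol{\theta^*})$, or we can restrict attention to the indices with $a_i^* \neq 0$; so assume WLOG all $a_i^* \neq 0$.

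Next I would compute, for each mobile index $i$, the Laplacian of $L$ with respect to that single block $\theta_i$. From \eqref{errLoss} (or the simplified \eqref{errSimp}), only the terms involving $\theta_i$ contribute, and each such term is $a_i^*$ times a potential $\Phi(\theta_i, \cdot)$ evaluated at a point $\ne \theta_i$ (either $\theta_j^*$ or $w_j$, none equal to $\theta_i^*$ by assumption), so $\Delta_{\theta_i}\Phi = \lambda\Phi$ applies cleanly. This gives
\[
\Delta_{\theta_i} L(\boldsymbol{a^*,\theta^*}) = \lambda\, a_i^* \Big( \sum_{j\ne i} 2 a_j^* \Phi(\theta_i^*,\theta_j^*) + \sum_j 2 b_j \Phi(\theta_i^*, w_j) \Big) = 2\lambda\, a_i^*\, \partial_{a_i} \overline{L}(\boldsymbol{a^*,\theta^*})/(2a_i^*),
\]
or more usefully, summing over $i$ and adding the pure-$\boldsymbol a$ Hessian block, one relates $\sum_i \Delta_{\theta_i} L$ to $\lambda$ times a quadratic form in $\boldsymbol{a^*}$ that is essentially $L(\boldsymbol{a^*,\theta^*})$ minus the self-interaction constant terms. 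The key algebraic identity to extract is that $\sum_i \Delta_{\theta_i} L(\boldsymbol{a^*,\theta^*})$ equals $2\lambda$ times $(L(\boldsymbol{a^*,\theta^*}) - L(0,\boldsymbol{\theta^*}))$ up to bookkeeping of the normalization $\Phi(\theta,\theta)=1$; since $\lambda>0$ and the bracket is nonzero, this sum of Laplacians is nonzero, hence some $\Delta_{\theta_i} L(\boldsymbol{a^*,\theta^*}) \neq 0$.

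Now I would invoke the Fact: at a local minimum the full Hessian is PSD, so in particular the Hessian block in $\theta_i$ is PSD, so $\Delta_{\theta_i} L = \Tr(\nabla^2_{\theta_i} L) \geq 0$; combined with the previous step, the sum being nonzero and each summand being $\geq 0$ forces some $\Delta_{\theta_i} L(\boldsymbol{a^*,\theta^*}) > 0$. But if a PSD matrix has positive trace it has a positive eigenvalue — that is fine for a minimum — so I actually need the sign to go the other way: I should instead argue that $\sum_i \Delta_{\theta_i} L$ has a definite sign that contradicts PSD-ness, which means I need to track the sign of $\lambda(L(\boldsymbol{a^*,\theta^*}) - L(0,\boldsymbol{\theta^*}))$. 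The real obstacle, and the step I expect to be most delicate, is getting this sign right: one must use that $L \geq 0$ always (it is a squared loss) together with convexity of $L$ in $\boldsymbol a$ to show $L(\boldsymbol{a^*,\theta^*}) - L(0,\boldsymbol{\theta^*})$ is actually $\le 0$ — at a point where $\nabla_{\boldsymbol a}L = 0$, $\boldsymbol{a^*}$ minimizes $L(\cdot,\boldsymbol{\theta^*})$, so $L(\boldsymbol{a^*,\theta^*}) \le L(0,\boldsymbol{\theta^*})$ — hence the bracket is $\le 0$, and being nonzero it is $< 0$, so $\sum_i \Delta_{\theta_i}L(\boldsymbol{a^*,\theta^*}) < 0$, forcing some $\Delta_{\theta_i}L < 0$, i.e.\ the $\theta_i$-block Hessian has a negative eigenvalue, contradicting the local-minimum Fact. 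I would close by noting the same computation shows that even an isolated mobile charge cannot sit at a local minimum, and remark on differentiability: the potential is smooth away from $\theta_i = w_j$ and $\theta_i = \theta_j$, and the excluded coincidence cases are exactly the exceptions in the statement.
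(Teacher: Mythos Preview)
Your overall strategy---compute the Laplacian of $L$ in the $\theta$-variables, use the first-order condition $\partial L/\partial a_i = 0$ to simplify, and extract a negative trace---is exactly the paper's strategy. Two issues, one minor and one substantive.

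\textbf{The algebra.} Your claimed identity $\sum_i \Delta_{\theta_i}L = 2\lambda\bigl(L(\boldsymbol{a^*,\theta^*}) - L(0,\boldsymbol{\theta^*})\bigr)$ is not correct, and the detour through convexity in $\boldsymbol a$ to pin down the sign is unnecessary. If you carry out the computation you describe, using $\Phi(\theta_i,\theta_i)=1$ and the stationarity condition $a_i^* + \sum_{j\ne i} a_j^*\Phi(\theta_i^*,\theta_j^*) + \sum_j b_j\Phi(\theta_i^*,w_j) = 0$, you get directly
\[
\Delta_{\theta_i} L(\boldsymbol{a^*,\theta^*}) \;=\; 2\lambda\, a_i^*\Bigl(\sum_{j\ne i} a_j^*\Phi(\theta_i^*,\theta_j^*) + \sum_j b_j\Phi(\theta_i^*,w_j)\Bigr) \;=\; -2\lambda\,(a_i^*)^2,
\]
which is $\le 0$ for every $i$ and strictly negative whenever $a_i^*\ne 0$. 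So no summation, no sign argument, no appeal to $L\ge 0$ is needed: either all $a_i^*=0$ (so $L(\boldsymbol{a^*,\theta^*}) = L(0,\boldsymbol{\theta^*})$) or some block Hessian has negative trace.

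\textbf{The real gap.} The computation above requires applying $\Delta_{\theta_i}\Phi(\theta_i,\theta_j)=\lambda\Phi(\theta_i,\theta_j)$ for every $j\ne i$, and the definition of $\lambda$-harmonic explicitly excludes the diagonal $\theta_i=\theta_j$. You assume $\theta_i^*\ne w_j$ for all $j$, but you never rule out $\theta_i^*=\theta_j^*$ for some $i\ne j$, and your closing remark that ``the excluded coincidence cases are exactly the exceptions in the statement'' is wrong: the theorem does \emph{not} list $\theta_i=\theta_j$ as an exception. This is precisely the case the paper's proof is built to handle. The paper partitions the indices into groups $S_1,\dots,S_r$ of coincident $\theta$'s and, for each group $S_1=\{1,\dots,l\}$, perturbs all of $\theta_1,\dots,\theta_l$ by the \emph{same} vector $v$. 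By translation invariance the intra-group terms $\Phi(\theta_i+v,\theta_j+v)$ are constant in $v$, so the Laplacian in $v$ only sees cross-group and $\theta$--$w$ terms, where the $\lambda$-harmonic relation applies cleanly; the same stationarity trick then yields $\Delta_v H = -2\lambda\bigl(\sum_{i\in S_1} a_i^*\bigr)^2$. Either some group has nonzero total charge (and we get a descent direction), or every group sum vanishes and then $\sum_i a_i^*\sigma(\theta_i^*,x)\equiv 0$, giving $L(\boldsymbol{a^*,\theta^*})=L(0,\boldsymbol{\theta^*})$. Your argument is the special case where every group is a singleton; you need this correlated-movement device (or an equivalent merging argument) to close the gap.
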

\begin{proof}
  Let $(\boldsymbol{a,\theta})$ be a critical point of $L.$ On the
  contrary, we assume that $\theta_i \neq w_j$ for all $i,j.$ WLOG, we
  can partition $[k]$ into $S_1,...,S_r$ such that for all $u \in S_i,
  v \in S_j$, we have $\theta_{u} = \theta_v$ iff $i=j$. 
Let $S_1 = \{ \theta_1, \ldots, \theta_{l}\}.$ 
We consider changing all
$\theta_1, \ldots, \theta_{l}$ by the same $v$ and define 
$H({\bf a}, v) = L({\bf a},\theta_1+v,...,\theta_l+v, \theta_{l+1}
\ldots, \theta_k).$

The optimality conditions on ${\bf a}$ are
$
0 = \pd{L}{a_i} =  2\sum_{j} a_j \Phi(\theta_i,\theta_j)
  + 2\sum_{j=1}^k b_j \Phi(\theta_i,w_j).
$
Thus, by the definition of $\lambda$-harmonic potentials, we may differentiate as $\theta_i \neq w_j$ and compute the Laplacian as 
\begin{align*}
\Delta_v H & = \lambda\sum_{i=1}^l a_i \left(2\sum_{j=1}^k b_j
  \Phi(\theta_i, w_j) + 2\sum_{j=l+1}^k a_j
  \Phi(\theta_i, \theta_j)\right) \\
& = \lambda\sum_{i=1}^l a_i \left( - 2
  \sum_{j = 1}^l  a_j \Phi(\theta_i,\theta_j)\right) 
%
%
 = -2\lambda\sum_{i=1}^l a_i \left( 
  \sum_{j = 1}^l  a_j \right) = -2 \lambda\left(\sum_{i=1}^l a_i\right)^2
\end{align*} 
If $\sum_{i=1}^l a_i \neq 0$, then we conclude that the Laplacian is strictly negative, so we are not at a local minimum. Similarly, we can conclude that for each $S_i,$ $\sum_{u \in S_i} a_u = 0$. In this case, since $\sum_{i=1}^k a_i \sigma(\theta_i,x) = 0$, $L(\boldsymbol{a,\theta}) = L(0,\boldsymbol{\theta})$. 
\end{proof} 
%
%


\section{Realizable Potentials with Convergence Guarantees}
\label{sec:almost-harmonic}
In this section, we derive convergence guarantees for realizable
potentials that are almost $\lambda$-harmonic, specifically, they are
$\lambda$-harmonic outside of a small neighborhood around the
origin. First, we prove the existence of activation functions such
that the corresponding potentials are almost $\lambda$-harmonic. Then,
we reason about the Laplacian of our loss, as in the previous section,
to derive our guarantees. We show that at a stable minima, each of the
$\theta_i$ is close to some $w_j$ in the target network. We may end up
with a many to one mapping of the learned hidden weights to the true
hidden weights, instead of a bijection.
To make sure that $\|a\|$ remains controlled throughout the
optimization process, we add a quadratic regularization term to $L$
and instead optimize $G = L + \|a\|^2$.


%

Our optimization procedure is a slightly altered version of gradient
descent, where we incorportate a second-order method (which we call
Hessian descent as in Algorithm~\ref{HD}) that is used when the
gradient is small and progress is slow. The descent algorithm
(Algorithm~\ref{SecondGD}) allows us to converge to points with small
gradient and small negative curvature. Namely, for smooth functions,
in $\poly(1/\epsilon)$ iterations, we reach a point in
$\mathcal{M}_{G, \epsilon}$, where
\begin{align*}
\mathcal{M}_{G, \epsilon} = \left\{x\in \mathcal{M} \Big| \|\nabla G(x)\|
  \leq \epsilon \text{ and } \lambda_{min}(\nabla^2 G(x)) \geq
  -\epsilon\right\}
  \end{align*}
We show that if $(\boldsymbol{a,\theta})$ is in $\mathcal{M}_{G, \epsilon}$ for $\epsilon$ small, then $\theta_i$ is close to $w_j$ for some $j$. Finally, we show how to initialize $(\boldsymbol{a^{(0)},\theta^{(0)}})$ and run second-order GD to converge to $\mathcal{M}_{G,\epsilon}$, proving our main theorem.

%
%
\alglanguage{pseudocode}
\begin{algorithm}[hb]
 \caption{$x = HD(L,x_0, T,\alpha$)}
   \label{HD}
\begin{algorithmic}
   \State {\bfseries Input:} $L: \mathcal{M} \to \R$; $x_0 \in \mathcal{M}$; $T\in \N$; $\alpha \in \R$
   \State Initialize $x \leftarrow x_0$
   \For {$i=1$ {\bfseries to} $T$}
   \State Find unit eigenvector $v_{min}$ corresponding to $\lambda_{min}(\nabla^2 f(x))$ 
   \State  $\beta \leftarrow -\alpha \lambda_{min}(\nabla^2 f(x)) \sign(\nabla f(x)^Tv_{min}) $
    \State $x \leftarrow x + \beta v_{min}$
   \EndFor
\end{algorithmic}
\end{algorithm}
\begin{algorithm}[hb]
 \caption{$x = SecondGD(L, x_0, T,\alpha, \eta, \gamma)$}
   \label{SecondGD}
\begin{algorithmic}
   \State {\bfseries Input:} $L:\mathcal{M} \to \R$; $x_0 \in
   \mathcal{M}$; $T\in \N$; $\alpha, \eta, \gamma \in \R$
   \For {$i=1$ {\bfseries to} $T$}
   \If {$\|\nabla L(x_{i-1})\| \geq  \eta$}\  $x_{i} \leftarrow
 x_{i-1} - \alpha \nabla L(x_{i-1})$
   \Else \
$x_i \leftarrow HD(L, x_{i-1}, 1, \alpha)$ 
  \EndIf
   \If {$ L(x_i) \geq L(x_{i-1}) - min(\alpha\eta^2/2, \alpha^2 \gamma^3/2) $}
     \Return $x_{i-1}$
   \EndIf 
   \EndFor
   \end{algorithmic}
\end{algorithm}
\begin{theorem}\label{almostHarmSGD}
  Let $\mathcal{M} = \R^{d}$ for $d \equiv 3 \mod 4$ and
  $k = \poly(d)$. For all $\epsilon \in (0,1),$ we can construct an
  activation $\sigma_\epsilon$ such that if $w_1,...,w_k \in \R^d$
  with $w_i$ randomly chosen from
  $w_i \sim \mathcal{N}({\bf 0}, O(d\log d){\bf I_{d\times d}})$ and
  $b_1,...,b_k$ be randomly chosen at uniform from $[-1,1]$, then with
  high probability, we can choose an initial point
  $(\boldsymbol{a^{(0)}, \theta^{(0)}})$ such that after running
  SecondGD (Algorithm \ref{SecondGD}) on the regularized objective
  $G(\boldsymbol{a,\theta})$ for at most $(d/\epsilon)^{O(d)}$
  iterations, there exists an $i, j$ such that
  $\|\theta_i - w_j\| < \epsilon$.
\end{theorem}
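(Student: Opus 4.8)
The plan is to combine three ingredients: (i) construction of the activation $\sigma_\epsilon$ with an almost $\lambda$-harmonic potential, (ii) a Laplacian argument (in the spirit of Theorem~\ref{EigStrict}) showing that any point of $\mathcal{M}_{G,\epsilon}$ forces some $\theta_i$ near some $w_j$, and (iii) an analysis of \texttt{SecondGD} showing it reaches $\mathcal{M}_{G,\epsilon}$ from a suitable initialization in the stated number of iterations. I would organize the proof around these three steps in this order.

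First I would invoke the construction (to be proved in the next part of the paper, Lemma~\ref{almostHarmReal}) giving, for each $\epsilon\in(0,1)$, an activation $\sigma_\epsilon$ whose corresponding potential $\Phi_\epsilon$ is realizable, bounded (normalized so $\Phi_\epsilon(\theta,\theta)=1$), radial, and $\lambda$-harmonic on $\{\|\theta-w\|\ge \rho\}$ for a small radius $\rho = \rho(\epsilon)$; the dependence $d\equiv 3\bmod 4$ enters because the base harmonic solution in $\R^d$ of the Bessel/modified-Helmholtz type has the right form there. I would record the quantitative facts I need about $\Phi_\epsilon$: a uniform bound $|\Phi_\epsilon|\le 1$, a bound on $\|\nabla\Phi_\epsilon\|$ and $\|\nabla^2\Phi_\epsilon\|$ (Lipschitz/smoothness constants, needed for the \texttt{SecondGD} convergence), and the key algebraic identity $\Delta_\theta\Phi_\epsilon(\theta,w)=\lambda\Phi_\epsilon(\theta,w)$ whenever $\|\theta-w\|\ge\rho$. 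I would also note that $\Phi_\epsilon(\theta,w)$ is tiny (say $<\epsilon'$) when $\|\theta-w\|$ is large, since it decays; this is what makes a far-away $\theta_i$ contribute negligibly.

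Second, the heart of the argument is the stationary-point analysis. Suppose $(\boldsymbol{a},\boldsymbol{\theta})\in\mathcal{M}_{G,\epsilon}$, i.e.\ $\|\nabla G\|\le\epsilon$ and $\lambda_{\min}(\nabla^2 G)\ge-\epsilon$, and suppose for contradiction that every $\theta_i$ is $\rho$-far from every $w_j$ (and also $\rho$-far from every other $\theta_{i'}$ with $\theta_{i'}\ne\theta_i$; collisions can be handled exactly as in Theorem~\ref{EigStrict} by grouping). As in Theorem~\ref{EigStrict}, I would move a block $S_1=\{\theta_1,\dots,\theta_l\}$ of coincident $\theta$'s by a common shift $v$ and compute $\Delta_v$ of the loss; using $\Delta_\theta\Phi_\epsilon=\lambda\Phi_\epsilon$ away from the singularity and the (approximate, because $\|\nabla G\|\le\epsilon$ rather than $=0$) first-order conditions $\partial G/\partial a_i\approx 0$, the leading term is $-2\lambda(\sum_{i\in S_1}a_i)^2$ plus the regularization contribution $\Delta_v\|a\|^2=0$ (it does not depend on $\theta$) and an error term of size $O(\epsilon\cdot\mathrm{poly}(d,\|a\|))$. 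Since $G=L+\|a\|^2$ keeps $\|a\|$ bounded (I would spell out that at any point reached by the algorithm $\|a\|=O(\mathrm{poly}(d))$ because $G$ does not increase along the trajectory and the initial $G$ is $\mathrm{poly}(d)$), the error is $o(1)$. So either $\lambda_{\min}(\nabla^2 G)$ is very negative (contradicting membership in $\mathcal{M}_{G,\epsilon}$), or $\sum_{i\in S_1}a_i$ is small for every block; in the latter case $\sum_i a_i\sigma_\epsilon(\theta_i,\cdot)$ is nearly the zero function on a grouped basis, which with a small-ball/non-degeneracy estimate for the Gram matrix of the $\sigma_\epsilon(\theta_i,\cdot)$ forces $\|a\|$ itself to be small; but then the regularized gradient $\partial G/\partial\theta$ being small combined with $a\approx 0$ is consistent only if the cross term $\sum_j a_i b_j\nabla_\theta\Phi_\epsilon(\theta_i,w_j)$ is small, and since $\|a\|$ is small this gives $L\approx \sum b_ib_j\Phi(w_i,w_j)$, i.e.\ we have made no progress --- but one shows this point is not in $\mathcal{M}_{G,\epsilon}$ because near $a=0$ the Hessian in the $a$-directions has a strictly negative eigenvalue (the pure-$\Phi(w_i,w_j)$ term lets us pick an $a$-direction decreasing $L$ faster than $\|a\|^2$ grows), using here the randomness of $b$ and of the $w_i$ (via Theorem~\ref{thm:tranReal}-type positivity / the random top layer assumption) to ensure this negative eigenvalue is bounded away from $0$ by more than $\epsilon$. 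This yields the contradiction, so some $\|\theta_i-w_j\|<\rho\le\epsilon$ (choosing $\rho\le\epsilon$).

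Third, I would handle the algorithmic part. Choose the initialization $(\boldsymbol{a^{(0)}},\boldsymbol{\theta^{(0)}})$ with $\boldsymbol{\theta^{(0)}}$ drawn from a distribution matching the scale $O(d\log d)$ of the $w_i$ (so that with high probability all pairwise distances are $\ge\rho$ and $\le R=\mathrm{poly}(d)$) and $\boldsymbol{a^{(0)}}$ the minimizer of the (convex) quadratic $G(\cdot,\boldsymbol{\theta^{(0)}})$ in $a$, so $G$ starts at $O(\mathrm{poly}(d))$. Then apply the generic guarantee for \texttt{SecondGD}: for a function that is $\beta$-smooth with $\rho'$-Lipschitz Hessian, the algorithm reaches a point of $\mathcal{M}_{G,\epsilon}$ in $O(\mathrm{poly}(1/\epsilon))$ iterations because each non-terminating step decreases $G$ by at least $\min(\alpha\eta^2/2,\alpha^2\gamma^3/2)$, and $G\ge 0$ (or is bounded below since $\Phi_\epsilon$ is bounded and $\|a\|$ controlled). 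Plugging in the smoothness constants of $G$ --- which scale like $(d/\epsilon)^{O(d)}$ because $\sigma_\epsilon$, obtained via an inverse Fourier transform with a $e^{x^Tx/4}$ factor as in Theorem~\ref{thm:tranReal}, has derivatives that blow up that fast, and $k=\mathrm{poly}(d)$ --- gives the claimed $(d/\epsilon)^{O(d)}$ iteration bound. A separate short lemma is needed that descending along a negative-curvature direction genuinely decreases $G$ by the stated amount (standard second-order Taylor estimate using the Hessian-Lipschitz bound), and that the whole trajectory stays in a region where all the above estimates (in particular the $\lambda$-harmonicity domain bookkeeping and $\|a\|$ bound) are valid.

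The main obstacle, I expect, is step two's endgame: ruling out the ``degenerate'' stationary points where the moving charges effectively cancel ($\sum_{i\in S}a_i\approx 0$ on every block, equivalently $\hat f\approx 0$). For the exact $\lambda$-harmonic case (Theorem~\ref{EigStrict}) this is allowed as a genuine non-strict local minimum; here we must use boundedness of $\Phi_\epsilon$, the regularizer, and crucially the randomness of the target's top layer $b$ and hidden weights $w_i$ to show that such a point has a strictly-negative Hessian eigenvalue of magnitude $>\epsilon$ in the $a$-block, so it is excluded from $\mathcal{M}_{G,\epsilon}$. Making that spectral gap quantitative --- controlling $\lambda_{\min}$ of the matrix $[\Phi(w_i,w_j)]+I$ shifted appropriately, and combining it with the $\epsilon$-error terms from the inexact first-order conditions --- is the delicate part, and is where the precise choice of $\rho(\epsilon)$, the regularization weight, and the distribution parameters must be tuned against each other.
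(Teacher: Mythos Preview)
Your overall plan matches the paper's: construct $\Phi_\epsilon$, run a Laplacian/block-shift argument to constrain approximate second-order stationary points, and count \texttt{SecondGD} iterations. The genuine gap is in your endgame for the degenerate case. Your claim that near $a=0$ the Hessian of $G$ in the $a$-directions has a strictly negative eigenvalue is false: since $G=L+\|a\|^2$ and $\nabla_a^2 L = 2[\Phi_\epsilon(\theta_i,\theta_j)]\succeq 0$ is a Gram matrix, we have $\nabla_a^2 G\succeq 2I$ everywhere. Worse, take $a=0$ and all $\theta_i$ far from every $w_j$: the $\theta$-block of the Hessian vanishes at $a=0$, and the mixed block and the $a$-gradient are proportional to $\nabla\Phi_\epsilon(\theta_i,w_j)$ and $\Phi_\epsilon(\theta_i,w_j)$ at large separation, hence tiny. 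Such points genuinely lie in $\mathcal{M}_{G,\epsilon}$ and cannot be excluded by approximate second-order conditions alone.

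The paper excludes them by a completely different mechanism, and this is the real role of the initialization. One sets $\theta_1^{(0)}=0$ (the other $a_i=0$) and optimizes the scalar $a_1$, obtaining a drop $G(\boldsymbol{0,0})-G(\boldsymbol{a^{(0)},\theta^{(0)}})=\tfrac12\bigl(\sum_j b_j\Phi_\epsilon(0,w_j)\bigr)^2$; with high probability over the random $b_j$ this is at least $(d/\epsilon)^{-O(d)}$ (using $\|w_j\|\le O(d\log d)$ and the lower bound on $\Phi_\epsilon$ from Lemma~\ref{almostHarmReal}), so $\sqrt{G^{(0)}}\le\sqrt{G(\boldsymbol{0,0})}-\delta$. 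Since \texttt{SecondGD} is monotone, this gap is maintained throughout. At the output one has $(\boldsymbol a,\boldsymbol\theta)\in\mathcal{M}_{G,\delta^2/(2k^3d)}$; the Laplacian argument (Lemma~\ref{almostHarmConv}) then forces, if no $\theta_i$ is near any $w_j$, that every $a_i^2<\delta^2/k^2$. But then the $L^2$ triangle inequality gives $\sqrt{G}\ge\|f\|_X-\sum_i|a_i|\cdot\|\sigma_\epsilon(\theta_i,\cdot)\|_X\ge\sqrt{G(\boldsymbol{0,0})}-\delta$, contradicting the maintained gap (Lemma~\ref{almostHarmRes}). So the randomness of $b$ enters only through this initialization gap, not through any spectral bound on $[\Phi(w_i,w_j)]$.

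Two smaller corrections. First, the regularizer does real work inside the Laplacian step: the approximate first-order condition for $a_i$ reads $|4a_i+2\sum_{j\ne i}a_j\Phi_\epsilon(\theta_i,\theta_j)+2\sum_j b_j\Phi_\epsilon(\theta_i,w_j)|\le\delta$ (one $2a_i$ from $a_i^2\Phi(\theta_i,\theta_i)$ in $L$, another from $\|a\|^2$), and substituting gives directly $\Delta_v H\le -2\lambda\sum_{i\in S}a_i^2-2\lambda\|\sum_{i\in S}a_i\sigma_\epsilon(\theta_i,\cdot)\|_X^2+\delta\lambda\sum_{i\in S}|a_i|$, which bounds each $a_i^2$ without any Gram non-degeneracy estimate. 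Second, the block $S$ is built by $\epsilon$-proximity, not by coincidence: one grows $S$ from $\theta_1$ by repeatedly adjoining any $\theta_i$ or $w_j$ within $\epsilon$ of $S$; if some $w_j$ enters, a chain of length $\le k$ gives $\|\theta_1-w_j\|<k\epsilon$, and otherwise every cross-block pair is $\ge\epsilon$ apart so the $\lambda$-harmonic identity applies to exactly the terms that survive the common shift.
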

We start by stating a lemma concerning the construction of an almost
$\lambda$-harmonic function on $\R^d.$ The construction is given in \supmaterial{Appendix~\ref{sec:realizable}} and uses a linear combination of realizable potentials that correspond to an activation function of the indicator function of a $n$-sphere. By using Fourier analysis and Theorem~\ref{thm:tranReal}, we can finish the construction of our almost $\lambda$-harmonic potential.%
\begin{restatable}{lemma}{almostharmreal}\label{almostHarmReal}
  Let $\mathcal{M} = \R^d$ for $d \equiv 3 \mod 4$. Then, for any
  $\epsilon \in (0,1)$, we can construct a radial activation
  $\sigma_\epsilon(r)$ such that the corresponding radial potential
  $\Phi_\epsilon(r)$  is $\lambda$-harmonic for
  $r \geq \epsilon$.

Furthermore, we have ${\Phi_\epsilon}^{(d-1)}(r) \geq 0$ for all $r  > 0,$  ${\Phi_\epsilon}^{(k)}(r) \geq 0,$ and ${\Phi_\epsilon}^{(k+1)}(r)\leq 0$ for all $r > 0$ and $d - 3 \geq k \geq 0 $ even. 

When $\lambda = 1$, $|{\Phi}_\epsilon^{(k)}(r)| \leq O((d/\epsilon)^{2d})$ for all $0 \leq k \leq d-1$. And when $r \geq \epsilon$, $\Omega(e^{-r}r^{2-d}(d/\epsilon)^{-2d}) \leq {\Phi}_\epsilon(r) \leq O((1+r)^de^{1-r}(r)^{2-d})$ and $ \Omega(e^{-r}r^{1-d}(d/\epsilon)^{-2d}) \leq |{\Phi}_\epsilon'(r)| \leq O((d+r)(1+ r)^de^{1- r} r^{1-d})$
\end{restatable}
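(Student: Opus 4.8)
The plan is to take $\Phi_\epsilon$ to be a normalization of the screened Coulomb (Yukawa) profile on $\R^d$ — the Green's function of $\lambda - \Delta_{\R^d}$ — and then to smooth away its singularity at the origin by convolving with a small compactly supported positive-definite bump, the smoothing being exactly what buys square-integrability and hence realizability through Theorem~\ref{thm:tranReal}. Concretely, a radial $g$ with $\Delta_{\R^d}g = \lambda g$ solves $g'' + \tfrac{d-1}{r}g' = \lambda g$, a modified Bessel equation whose unique decaying solution is $g(r) = c_d\, r^{1-d/2}K_{d/2-1}(\sqrt{\lambda}\,r)$, i.e.\ a positive multiple of the Green's function, so that $(\Delta_{\R^d}-\lambda)g = -\delta_0$. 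Since $d$ is odd this is elementary: $g(r) = e^{-\sqrt{\lambda}\,r}\,P(1/r)$ for a degree-$(d-2)$ polynomial $P$ with positive coefficients. From this closed form one reads off that $g>0$, that $(-1)^j g^{(j)}(r)\ge 0$ for every $j\ge 0$ and $r>0$ (each summand $e^{-\sqrt{\lambda}\,r}r^{-m}$ is completely monotone), that the magnitudes $|g^{(k)}(r)|$ decrease in $r$, that $g(r)\sim r^{2-d}$ as $r\to 0$, and — for $\lambda = 1$ — bounds of the form $|g^{(j)}(r)| = d^{O(d)}e^{-r}r^{2-d-j}(1+r)^{O(d)}$. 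The obstruction to using $g$ as the potential is precisely the $r^{2-d}$ blow-up at the origin: once $d\ge 4$ it is not square-integrable there, so Theorem~\ref{thm:tranReal} does not apply.

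To cap the singularity, let $\nu$ be a non-negative, radial, positive-definite measure supported in the radius-$\epsilon$ ball — for instance an autocorrelation $b\star\tilde b$ with $b$ the indicator of a sphere (or ball) of radius $\epsilon/2$, or a non-negative combination of such autocorrelations of sphere indicators — and set $\Phi_\epsilon = \tfrac1Z(g\star\nu)$ with $Z = (g\star\nu)(0) > 0$ so that $\Phi_\epsilon(0) = 1$. Then $g\star\nu$ is bounded and has an exponential tail, hence is square-integrable, and $\mathfrak{F}(\Phi_\epsilon) = \tfrac1Z\,\mathfrak{F}(\nu)/(\|\omega\|^2 + \lambda)\ge 0$ and is integrable, so Theorem~\ref{thm:tranReal} yields realizability; unwinding that theorem shows the activation $\sigma_\epsilon$ is built, up to the Gaussian twist $e^{x^Tx/4}$, from an indicator of a sphere — the construction referred to in the statement. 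Moreover $(\Delta_{\R^d}-\lambda)\Phi_\epsilon = \tfrac1Z\big((\Delta_{\R^d}-\lambda)g\big)\star\nu = -\tfrac1Z\,\delta_0\star\nu = -\tfrac1Z\,\nu$ is supported in the radius-$\epsilon$ ball, so for $r\ge\epsilon$ the function $\Phi_\epsilon$ satisfies the homogeneous equation and decays, whence $\Phi_\epsilon(r) = \kappa\, g(r)$ there for a constant $\kappa = \kappa(d,\epsilon) > 0$. This is exactly the claimed $\lambda$-harmonicity for $r\ge\epsilon$, and together with a crude two-sided bound on $\kappa$ (between $(\epsilon/d)^{O(d)}$ and $(d/\epsilon)^{O(d)}$) and the estimates on $g$ above it gives the stated bounds on $\Phi_\epsilon$ and $\Phi_\epsilon'$ for $r\ge\epsilon$.

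It remains to verify the derivative signs and the global magnitude bound. For $r\ge\epsilon$ the pattern $(-1)^j\Phi_\epsilon^{(j)}(r)\ge 0$ is immediate, since $\Phi_\epsilon = \kappa g$ there and $g$ is completely monotone, which also gives $|\Phi_\epsilon^{(k)}(r)|\le\kappa|g^{(k)}(\epsilon)|$ for $r\ge\epsilon$. For $0 < r < \epsilon$ one works with $\Phi_\epsilon = \tfrac1Z g\star\nu$ directly, distributing the $k$ derivatives between $g$ (whose derivatives near $r=\epsilon$ are controlled as above) and $\nu$ (which lives at scale $\epsilon$), and argues that the first $d-1$ derivatives retain the alternating signs using the explicit form of $g$, the positive-definiteness of $\nu$, and non-negativity of $\mathfrak{F}(\Phi_\epsilon)$; it is here that the hypothesis $d\equiv 3\pmod{4}$ enters, to fix the sign of the Bessel kernel in the Hankel-transform representation of $\Phi_\epsilon^{(j)}$ for $j\le d-1$ (and why only these orders are controlled). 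This yields $\Phi_\epsilon = \Phi_\epsilon^{(0)}\ge 0$, $\Phi_\epsilon^{(k)}\ge 0$ and $\Phi_\epsilon^{(k+1)}\le 0$ for even $0\le k\le d-3$, and $\Phi_\epsilon^{(d-1)}\ge 0$. Finally, for $\lambda = 1$, collecting powers gives $|\Phi_\epsilon^{(k)}(r)|\le O((d/\epsilon)^{2d})$ for $0\le k\le d-1$: one has $1/Z = (d/\epsilon)^{O(d)}$ since $Z\gtrsim g(\epsilon)\,\|\nu\|$, and up to $d-1$ derivatives of $g\star\nu$ contribute a further factor $(d/\epsilon)^{O(d)}$ from $|g^{(k)}|$ near $\epsilon$ and from the scale $\epsilon$ of $\nu$.

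The main obstacle is the interplay of the constraints above: forcing exact $\lambda$-harmonicity on $r\ge\epsilon$ pins down $(\Delta_{\R^d}-\lambda)\Phi_\epsilon$ as a fixed distribution supported in the radius-$\epsilon$ ball, and this distribution must simultaneously be chosen so that $\mathfrak{F}(\Phi_\epsilon)\ge 0$ (for realizability), so that the first $d-1$ derivatives of $\Phi_\epsilon$ alternate in sign, and so that all the relevant quantities stay of size $(d/\epsilon)^{O(d)}$. Taking the cap to be a sphere or ball autocorrelation $\nu$ is what reconciles positivity of $\mathfrak{F}$ with compact support — it is at once positive-definite and supported in the small ball — and keeps the quantitative bookkeeping polynomial in $1/\epsilon$ of degree $O(d)$; the residual difficulty, and the step most likely to need care, is propagating the alternating sign pattern through the capped region $[0,\epsilon]$, which is exactly where the arithmetic hypothesis on $d$ is used.
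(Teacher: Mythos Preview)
Your approach shares the core ingredient with the paper --- the decaying radial solution $g(r) = p(r)e^{-\sqrt{\lambda}r}/r^{d-2}$ of $\Delta g = \lambda g$, with $p$ a polynomial of degree $(d-3)/2$ with non-negative coefficients --- but the cap construction differs. You propose to convolve $g$ with a single positive-definite bump $\nu$ supported in the $\epsilon$-ball; the paper instead expresses the target as a weighted integral $\overline{\Phi}(r)=\int_\epsilon^\infty \Phi^{(d+1)}(t)\,\Phi_t(r)\,dt$ of autocorrelations $\Phi_t = h_t\star h_t$ of ball indicators $h_t$ over \emph{varying} radii $t$ (Lemmas~\ref{baseConstruct} and~\ref{transConstruct}). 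The payoff of the paper's construction is an explicit formula for the $(d-1)$-th radial derivative: $\overline{\Phi}^{(d-1)}(r)=\Phi^{(d-1)}(r)$ for $r\ge\epsilon$ and $\overline{\Phi}^{(d-1)}(r)=Cr$ with $C\ge 0$ for $r\le\epsilon$. The alternating sign pattern for lower orders then follows by integrating from infinity, and the hypothesis $d\equiv 3\pmod 4$ enters in Lemma~\ref{baseConstruct} to make the leading $r^{d-1}$ coefficient of $\Phi_t'(r)\propto -\big((t/2)^2-(r/2)^2\big)^{(d-1)/2}$ positive --- not through Hankel-transform sign considerations as you suggest.

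Your convolution route does elegantly deliver $\lambda$-harmonicity for $r\ge\epsilon$ (via $(\Delta-\lambda)(g\star\nu)=-\nu$, whence $\Phi_\epsilon=\kappa g$ there) and realizability (via $\mathfrak{F}(g\star\nu)\ge 0$), and complete monotonicity of $g$ handles the signs for $r\ge\epsilon$. But the step you yourself flag as ``most likely to need care'' --- the alternating signs of the first $d-1$ radial derivatives of $g\star\nu$ for $0<r<\epsilon$ --- is a genuine gap, not a routine detail. Radial derivatives of a radial convolution do not ``distribute'' across the factors, and neither positivity of $\mathfrak{F}(\Phi_\epsilon)$ nor a Hankel representation yields signs of the one-variable derivatives $\Phi_\epsilon^{(j)}$ for general $j\le d-1$; for a generic bump $\nu$ there is no reason the pattern should hold in the capped region. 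The paper's integral-over-scales construction is engineered precisely so that $\overline{\Phi}^{(d-1)}$ is computable in closed form on $[0,\epsilon]$, which is what makes the sign argument go through. To close your gap you would need either to exhibit a specific $\nu$ for which $\Phi_\epsilon^{(d-1)}$ can be computed and shown non-negative on $[0,\epsilon]$, or to observe that writing $\nu$ itself as a superposition of sphere-indicator autocorrelations and passing through Lemma~\ref{intReal} essentially recovers the paper's construction.
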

Our next lemma use the almost $\lambda$-harmonic properties to show that at an almost stationary point of $G$, we must have converged close to some $w_j$ as long as our charges $a_i$ are not too small. The proof is similar to Theorem \ref{EigStrict}. Then, the following lemma relates the magnitude of the charges $a_i$ to the progress made in the objective function. 
%
\begin{restatable}{lemma}{almostharmconv}\label{almostHarmConv}
  Let $\mathcal{M} = \R^d$ for $d \equiv 3 \mod 4$ and let $G$ be the
  regularized loss corresponding to the activation 
  $\sigma_\epsilon$ given by Lemma~\ref{almostHarmReal} with
  $\lambda =1$. For any $\epsilon \in (0,1)$ and $\delta \in (0, 1)$,
  if $\boldsymbol{(a,\theta)} \in \mathcal{M}_{G,\delta}$, then for
  all $i$, either 1) there exists $j$ such that
  $\|\theta_i - w_j\| < k\epsilon$ or 2) $a_i^2 < 2kd\delta$.
\end{restatable}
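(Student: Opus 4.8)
The plan is to mimic the Laplacian argument of Theorem~\ref{EigStrict}, but now working with the regularized loss $G = L + \|a\|^2$ at a point $(\boldsymbol{a,\theta}) \in \mathcal{M}_{G,\delta}$ rather than an exact critical point, and using the fact that $\Phi_\epsilon$ is only $\lambda$-harmonic \emph{outside} the ball of radius $\epsilon$. First I would fix an index $i$ and assume, toward showing alternative (2) fails implies (1), that $\|\theta_i - w_j\| \ge k\epsilon$ for all $j$; in fact I want to use this to conclude that moving $\theta_i$ a little keeps it at distance $\ge \epsilon$ from every $w_j$ and every other $\theta_u$, so that the $\lambda$-harmonic identity $\Delta_{\theta_i}\Phi_\epsilon(\theta_i, \cdot) = \Phi_\epsilon(\theta_i,\cdot)$ applies on a small neighborhood. (The slack of $k\epsilon$ versus $\epsilon$ is there precisely to absorb the perturbation and the clustering of the $\theta$'s.) The single-variable perturbation $v$ applied to $\theta_i$ alone gives a function $H(v)$ whose Laplacian at $v=0$ is
\[
\Delta_v H = \lambda\, a_i\Big(2\sum_{j} b_j \Phi_\epsilon(\theta_i, w_j) + 2\sum_{u \ne i} a_u \Phi_\epsilon(\theta_i,\theta_u)\Big) + 2d,
\]
where the $+2d$ comes from the Laplacian (in $\R^d$) of the regularizer term $a_i^2$ — wait, the regularizer does not depend on $\theta_i$, so actually $\Delta_v\|a\|^2 = 0$; the Laplacian of $H$ is purely the loss term. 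Let me restate: $\Delta_v H = 2\lambda\, a_i\big(\sum_j b_j \Phi_\epsilon(\theta_i,w_j) + \sum_{u\ne i} a_u \Phi_\epsilon(\theta_i,\theta_u)\big)$.

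Next I would bring in the approximate first-order optimality in the $a_i$ coordinate: since $\|\nabla G\| \le \delta$, the partial derivative $\partial G/\partial a_i = 2\sum_j a_j \Phi_\epsilon(\theta_i,\theta_j) + 2\sum_j b_j \Phi_\epsilon(\theta_i,w_j) + 2a_i$ has magnitude at most $\delta$. Solving for the bracketed quantity $\sum_j b_j\Phi_\epsilon(\theta_i,w_j) + \sum_{u\ne i} a_u \Phi_\epsilon(\theta_i,\theta_u)$ — using $\Phi_\epsilon(\theta_i,\theta_i)=1$ — shows it equals $-a_i + (\text{error of size } \le \delta/2)$. Substituting into the Laplacian expression yields $\Delta_v H = 2\lambda(-a_i^2 + O(a_i\delta)) = -2\lambda a_i^2 + O(a_i \delta)$. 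Now the second-order condition from $\mathcal{M}_{G,\delta}$, namely $\lambda_{\min}(\nabla^2 G) \ge -\delta$, forces every restricted Hessian to have trace $\ge -d\delta$ (a $d$-dimensional subspace, eigenvalues each $\ge -\delta$), hence $\Delta_v H \ge -d\delta$. Combining, $-2\lambda a_i^2 + O(a_i\delta) \ge -d\delta$; with $\lambda = 1$ and a crude bound on the cross term (using $|a_i| \le$ something controlled, or just AM–GM to absorb $O(a_i\delta)$ into $a_i^2 + O(\delta)$), this rearranges to $a_i^2 \le 2kd\delta$ roughly, i.e. alternative (2). The constant $2k$ (rather than $2$) is there to swallow the several $O(\delta)$-type errors, the AM–GM slack, and any accounting for the $k$ terms in the sums.

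The main obstacle, and the step I would spend the most care on, is making the perturbation argument rigorous given that $\Phi_\epsilon$ is only \emph{piecewise} $\lambda$-harmonic: I must guarantee that when I move $\theta_i$ by the infinitesimal $v$, the point $\theta_i + v$ stays in the region $\{r \ge \epsilon\}$ relative to \emph{all} other charges, so that $H$ is smooth there and the Laplacian identity holds termwise. This is exactly why the conclusion is stated with the weaker "$\|\theta_i - w_j\| < k\epsilon$" — if $\theta_i$ were within $k\epsilon$ of some $w_j$ we are already in case (1); otherwise every relevant pairwise distance exceeds $k\epsilon \ge \epsilon$ (I should also rule out $\theta_i$ being within $\epsilon$ of another $\theta_u$ by noting that if it is, I can treat the whole coincident cluster together as in Theorem~\ref{EigStrict}, or perturb the cluster jointly). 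A secondary nuisance is bounding the $O(a_i\delta)$ and other error terms uniformly; here I would invoke the magnitude bounds $|\Phi_\epsilon^{(k)}(r)| \le O((d/\epsilon)^{2d})$ from Lemma~\ref{almostHarmReal} only if needed, but I expect the clean algebra above (using just $\Phi_\epsilon(\theta,\theta)=1$ and the $\lambda$-harmonic identity) to suffice without the explicit decay estimates, which are presumably needed in the \emph{next} lemma relating $\|a\|$ to the objective progress.
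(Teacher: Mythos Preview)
Your overall strategy is the paper's: compute the Laplacian of $G$ under a translation, use the approximate first-order condition on the $a$-variables to turn it into $-\text{const}\cdot(\text{sum of } a_u^2)+O(\delta)$, and compare with the trace lower bound coming from $\lambda_{\min}(\nabla^2 G)\ge -\delta$. But your handling of the ``cluster'' issue is too loose, and this is exactly where both factors of $k$ in the statement originate --- not from slack, as you suggest.

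The paper does \emph{not} perturb $\theta_i$ alone. Starting from $\theta_i$, it grows a set $S$ by repeatedly adjoining any $\theta_u$ or $w_j$ within $\epsilon$ of the current set. If some $w_j$ enters $S$, then a chain of at most $k$ hops of length $<\epsilon$ connects $\theta_i$ to $w_j$, giving $\|\theta_i-w_j\|<k\epsilon$ (case (1)); \emph{this chaining is the reason the bound is $k\epsilon$, not $\epsilon$.} Otherwise $S=\{\theta_1,\dots,\theta_l\}$ consists only of $\theta$'s, every element of $S$ is $\ge\epsilon$ from every $w_j$ and every $\theta_u\notin S$, while pairs \emph{inside} $S$ may be closer than $\epsilon$ --- so the $\lambda$-harmonic identity fails for those pairs. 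One must therefore perturb all of $S$ by the same $v$: the intra-$S$ terms $\Phi_\epsilon(\theta_u+v,\theta_{u'}+v)$ are constant in $v$ and drop out of $\Delta_v H$ entirely. Your single-variable computation is only the special case $l=1$.

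Two consequences you missed. First, the regularizer contributes $2a_u$ to $\partial G/\partial a_u$, so the approximate optimality reads $\bigl|4a_u+2\sum_{j\ne u}a_j\Phi_\epsilon(\theta_u,\theta_j)+2\sum_j b_j\Phi_\epsilon(\theta_u,w_j)\bigr|\le\delta$; the bracketed quantity is $-2a_u+O(\delta)$, not $-a_u$. Substituting and completing the square gives
\[
\Delta_v H \;\le\; -2\lambda\,\expt\Bigl[\Bigl(\textstyle\sum_{u\in S}a_u\sigma(\theta_u,X)\Bigr)^2\Bigr]\;-\;2\lambda\sum_{u\in S}a_u^2\;+\;\delta\lambda\sum_{u\in S}|a_u|.
\]
Second, because a unit $v\in\R^d$ corresponds to a vector of squared norm $l$ in the full parameter space (it moves $l$ of the $\theta$'s), the Hessian bound gives $\Delta_v H\ge -l\,d\,\delta\ge -k d\,\delta$ (the paper uses $-2kd\delta$), not $-d\delta$. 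Combining with Cauchy--Schwarz on $\sum|a_u|$ and a short case split yields $\sum_{u\in S}a_u^2\le 2kd\delta$, hence $a_i^2\le 2kd\delta$. So the $k$ in $2kd\delta$ is the cluster size, not an error cushion.
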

\begin{restatable}{lemma}{almostharmres}\label{almostHarmRes}
  Assume the conditions of Lemma~\ref{almostHarmConv}. If
$\sqrt{G({\bf a, \boldsymbol{\theta}})} \leq \sqrt{G(\boldsymbol{0,0})} - \delta$
  and $(\boldsymbol{a,\theta}) \in \mathcal{M}_{G,\delta^2/(2k^3d)}$,
  then there exists some $i, j$ such that $\|\theta_i - w_j\| <k\epsilon$.
\end{restatable}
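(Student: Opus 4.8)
The plan is to combine the two dichotomies supplied by Lemma~\ref{almostHarmConv} with the energy-decrease hypothesis to rule out the ``all charges tiny'' case. Concretely, suppose toward a contradiction that for every $i$ there is \emph{no} $j$ with $\|\theta_i - w_j\| < k\epsilon$. Then Lemma~\ref{almostHarmConv}, applied with the tolerance $\delta' = \delta^2/(2k^3 d)$ in place of its $\delta$, forces $a_i^2 < 2kd\,\delta' = \delta^2/k^2$ for every $i$; hence $\|\boldsymbol{a}\|^2 = \sum_i a_i^2 < \delta^2$, i.e. $\|\boldsymbol{a}\| < \delta$. The point of this step is that being near a second-order-stationary point with a very small curvature tolerance, \emph{and} being far from all target nodes, pins the learned charges down to be essentially zero.

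Next I would show that $\|\boldsymbol{a}\|$ small forces $G(\boldsymbol{a,\theta})$ to be essentially as large as $G(\boldsymbol{0,0})$, contradicting the assumed strict decrease $\sqrt{G(\boldsymbol{a,\theta})} \le \sqrt{G(\boldsymbol{0,0})} - \delta$. Recall $G = L + \|\boldsymbol{a}\|^2$ and, from \eqref{errLoss}, $L(\boldsymbol{a,\theta}) = \expt_{X}\big[(\sum_i a_i\sigma(X,\theta_i) + \sum_j b_j\sigma(X,w_j))^2\big]$. When $\boldsymbol{a} = 0$ this is $L(\boldsymbol{0},\cdot) = \expt_X[(\sum_j b_j\sigma(X,w_j))^2] = \sum_{i,j} b_i b_j \Phi(w_i,w_j)$, which is exactly $G(\boldsymbol{0,0})$ since the regularizer vanishes; note this quantity does not depend on $\boldsymbol{\theta}$. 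Writing $f = \sum_j b_j\sigma(X,w_j)$ and $\hat f_{\boldsymbol a} = \sum_i a_i\sigma(X,\theta_i)$, we have $\sqrt{L(\boldsymbol{a,\theta})} = \|f + \hat f_{\boldsymbol a}\|_{L^2(\mathcal D)} \ge \|f\|_{L^2} - \|\hat f_{\boldsymbol a}\|_{L^2} = \sqrt{G(\boldsymbol{0,0})} - \|\hat f_{\boldsymbol a}\|_{L^2}$ by the triangle inequality in $L^2(\mathcal D)$. Since $\sqrt{G} \ge \sqrt{L}$, it then suffices to bound $\|\hat f_{\boldsymbol a}\|_{L^2} \le \delta$: indeed $\|\hat f_{\boldsymbol a}\|_{L^2} \le \sum_i |a_i|\,\|\sigma(\cdot,\theta_i)\|_{L^2} = \sum_i |a_i|\sqrt{\Phi(\theta_i,\theta_i)} = \sum_i |a_i|$ (using the normalization $\Phi(\theta,\theta)=1$). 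So one wants $\sum_i |a_i| \le \delta$ rather than just $\|\boldsymbol a\| \le \delta$; by Cauchy--Schwarz $\sum_i |a_i| \le \sqrt{k}\,\|\boldsymbol a\|$, so I would instead run the first step with the stated tolerance to get $a_i^2 < 2kd\delta'$ and thus $\sum_i|a_i| \le \sqrt{k}\cdot\sqrt{k\cdot 2kd\delta'} = \sqrt{2k^3 d\,\delta'} = \delta$, which is exactly why the tolerance $\delta' = \delta^2/(2k^3 d)$ was chosen. Then $\sqrt{G(\boldsymbol{a,\theta})} \ge \sqrt{L(\boldsymbol{a,\theta})} \ge \sqrt{G(\boldsymbol{0,0})} - \delta$, contradicting the hypothesis.

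Therefore the assumption fails: some $i$ has $\|\theta_i - w_j\| < k\epsilon$ for some $j$, which is the claim. The only mildly delicate points are bookkeeping ones: making sure the constant $2k^3 d$ in the tolerance matches the constant $2kd$ appearing in Lemma~\ref{almostHarmConv} after the Cauchy--Schwarz loss of a factor $k$, and confirming that $G(\boldsymbol{0,0}) = L(\boldsymbol{0},\cdot)$ really is the relevant reference value (which it is, since the regularizer is $\|\boldsymbol a\|^2$ and is $\boldsymbol\theta$-independent at $\boldsymbol a = 0$). I do not expect a serious obstacle here; the substantive work is all upstream in Lemma~\ref{almostHarmConv}, and this lemma is essentially a triangle-inequality-plus-accounting wrapper that converts ``stationary and far from all $w_j$'' into ``no progress possible.''
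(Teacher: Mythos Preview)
Your proposal is correct and follows essentially the same approach as the paper's proof: assume no $\theta_i$ is close to any $w_j$, invoke Lemma~\ref{almostHarmConv} at tolerance $\delta^2/(2k^3d)$ to force each $a_i^2<\delta^2/k^2$, and then use the $L^2(\mathcal D)$ triangle inequality together with $\Phi(\theta,\theta)=1$ to conclude $\sqrt{G(\boldsymbol{a,\theta})}\ge \sqrt{G(\boldsymbol{0,0})}-\delta$, contradicting the hypothesis. The only cosmetic difference is that the paper goes directly from $a_i^2<\delta^2/k^2$ to $\sum_i|a_i|<\delta$ without the Cauchy--Schwarz detour, but your route is equally valid.
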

 Finally, we guarantee that our initialization substantially decreases our objective function. Together with our previous lemmas, it will imply that we must be close to some $w_j$ upon convergence. This is the overview of the proof of Theorem~\ref{almostHarmSGD}, presented below.
 \begin{restatable}{lemma}{almostharminitialize}\label{almostHarmInitialize}
Assume the conditions of Theorem~\ref{almostHarmSGD} and Lemma~\ref{almostHarmConv}. With high probability, we can initialize $\boldsymbol{(a^{(0)},\theta^{(0)})}$ such that $\sqrt{G({\bf a^{(0)}}, \boldsymbol{\theta^{(0)}})} \leq \sqrt{G(\boldsymbol{0,0})} -\delta$ with $\delta = (d/\epsilon)^{ - O(d)}$.
 \end{restatable}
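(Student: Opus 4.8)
The plan is to produce a starting point by perturbing a single coordinate of $\boldsymbol a$ off the trivial point $\boldsymbol{(0,0)}$ and then convert the resulting additive drop of $G=L+\norm{\boldsymbol a}^2$ (with $L$ as in \eqref{errLoss} and $\Phi_\epsilon$ the almost-$1$-harmonic potential of Lemma~\ref{almostHarmReal}) into a drop of $\sqrt G$. Fix $a_i^{(0)}=0$ and $\theta_i^{(0)}$ arbitrary for $i\ge2$, and set $\theta_1^{(0)}=p$ for a point $p$ to be chosen. Since every charge but $a_1$ vanishes and $\Phi_\epsilon(\theta,\theta)=1$, all terms of \eqref{errLoss} involving $\theta_i^{(0)}$ for $i\ge2$ drop out and $G$ collapses to the scalar quadratic $G=2a_1^2+2a_1\,c(p)+G(\boldsymbol{0,0})$, where $c(p):=\sum_j b_j\Phi_\epsilon(p,w_j)=\EX_{X\sim\mathcal D}[\sigma_\epsilon(X,p)f(X)]$. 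Taking $a_1^{(0)}=-c(p)/2$ gives $G(\boldsymbol{a^{(0)},\theta^{(0)}})=G(\boldsymbol{0,0})-\tfrac12 c(p)^2$, so the lemma reduces to two tasks: (i) $\sqrt{G(\boldsymbol{0,0})}=\poly(d)$, and (ii) exhibiting a point $p$ with $c(p)^2\ge(d/\epsilon)^{-O(d)}$ that the learner can locate.

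Task (i) is routine: with high probability the $w_i\sim\mathcal N(\boldsymbol 0,O(d\log d)\mathbf I)$ are pairwise at distance $\Omega(d\sqrt{\log d})$, so the upper bound of Lemma~\ref{almostHarmReal} makes every off-diagonal term of $G(\boldsymbol{0,0})=\sum_{i,j}b_ib_j\Phi_\epsilon(w_i,w_j)$ exponentially small and $G(\boldsymbol{0,0})=\sum_i b_i^2\pm o(1)$, which a Hoeffding bound on the $b_i^2\in[0,1]$ pins in $[\Omega(k),O(k)]$; hence $\sqrt{G(\boldsymbol{0,0})}=\poly(d)$, and also $c(p)^2\le 2G(\boldsymbol{0,0})=O(k)$ since $G\ge0$.

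For task (ii) I would lower bound the $L^2$ mass of the correlation map: $\int_{\R^d}c(p)^2\,dp=\sum_{i,j}b_ib_j\,(\Phi_\epsilon*\Phi_\epsilon)(w_i-w_j)$. Because $\Phi_\epsilon$ is realizable it is a positive-definite kernel --- the Gram kernel of the feature maps $\theta\mapsto\sigma_\epsilon(\cdot,\theta)\in L^2(\mathcal D)$ --- hence so is $\Phi_\epsilon*\Phi_\epsilon$; its Gram matrix at $w_1,\dots,w_k$ has constant diagonal $\kappa:=\int_{\R^d}\Phi_\epsilon(q,\boldsymbol 0)^2\,dq$ and, since the $w_i$ are pairwise far apart while $\Phi_\epsilon*\Phi_\epsilon$ (a convolution of two exponentially decaying radial functions) decays at a rate controlled by Lemma~\ref{almostHarmReal}, off-diagonal entries $\ll\kappa/k$, so this Gram matrix is $\succeq\tfrac12\kappa\mathbf I$ and $\int c(p)^2\,dp\ge\tfrac12\kappa\norm{\boldsymbol b}^2=\Omega(k\kappa)$, with $\kappa\ge(d/\epsilon)^{-O(d)}$ from the lower bound of Lemma~\ref{almostHarmReal}. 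Restricting to a ball $B_R$ with $R=\Theta(d\log(d/\epsilon))$ --- large enough (again using the decay of $\Phi_\epsilon$) that only a negligible fraction of $\int c(p)^2$ lies outside $B_R$, while $\Vol(B_R)=(d/\epsilon)^{O(d)}$ --- yields $\EX_{p\sim\mathrm{Unif}(B_R)}[c(p)^2]\ge(d/\epsilon)^{-O(d)}$. Together with $c(p)^2\le O(k)$ this forces the set of good $p\in B_R$ to have $\mathrm{Unif}(B_R)$-measure at least $(d/\epsilon)^{-O(d)}$, so the learner finds one with high probability by drawing $(d/\epsilon)^{O(d)}$ i.i.d.\ candidates from $\mathrm{Unif}(B_R)$ and keeping the one maximizing the (data-estimable) correlation $|\EX_X[\sigma_\epsilon(X,p)f(X)]|$, well within the budget of Theorem~\ref{almostHarmSGD}. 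Finally, with this $p$ we get $G(\boldsymbol{a^{(0)},\theta^{(0)}})\le G(\boldsymbol{0,0})-(d/\epsilon)^{-O(d)}$, and since $0\le G(\boldsymbol{a^{(0)},\theta^{(0)}})\le G(\boldsymbol{0,0})$, the inequality $\sqrt a-\sqrt{a-t}\ge t/(2\sqrt a)$ with $\sqrt{G(\boldsymbol{0,0})}=\poly(d)$ gives $\sqrt{G(\boldsymbol{a^{(0)},\theta^{(0)}})}\le\sqrt{G(\boldsymbol{0,0})}-(d/\epsilon)^{-O(d)}=\sqrt{G(\boldsymbol{0,0})}-\delta$, the claim with $\delta=(d/\epsilon)^{-O(d)}$ after absorbing $\poly(d)$ into the exponent.

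The main obstacle is the quantitative near-orthogonality $(\Phi_\epsilon*\Phi_\epsilon)(w_i-w_j)\ll\kappa/k$ at the far-apart hidden weights, together with the companion tail bound showing $\int c(p)^2$ concentrates in $B_R$: both require squeezing the rather lossy upper/lower bounds on $\Phi_\epsilon$ and its derivatives from Lemma~\ref{almostHarmReal} (likely also using its sign and monotonicity conditions on $\Phi_\epsilon^{(k)}$), and in particular one must let $R$ and the hidden exponents scale with $\log(d/\epsilon)$ so the estimates survive for arbitrarily small $\epsilon$. The tempting shortcut of aiming $p$ directly at the hidden node $w_{j^*}$ with $|b_{j^*}|$ largest --- where $c(w_{j^*})=b_{j^*}+o(1)$ is a clean constant-order signal --- fails on the algorithmic side, since $\Phi_\epsilon$ has a very steep core, so the region of good $p$ has radius only $(d/\epsilon)^{-O(d)}$ and a search for it costs $(d/\epsilon)^{\Theta(d^2)}$, which is why the $L^2$-mass route is preferable.
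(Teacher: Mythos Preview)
Your reduction to finding $p$ with $c(p)^2\ge(d/\epsilon)^{-O(d)}$ and the conversion from an additive drop in $G$ to a drop in $\sqrt G$ match the paper exactly. Where you diverge is in how such a $p$ is produced. The paper simply takes the deterministic choice $p=\boldsymbol 0$: with high probability $\|w_j\|\le O(d\log d)$ for every $j$, so the lower bound in Lemma~\ref{almostHarmReal} gives $\Phi_\epsilon(\boldsymbol 0,w_j)\ge(d/\epsilon)^{-O(d)}$ for each $j$; then $c(\boldsymbol 0)=\sum_j b_j\,\Phi_\epsilon(\boldsymbol 0,w_j)$ is a sum of independent symmetric random variables (over the random $b_j\sim\mathrm{Unif}[-1,1]$) with all coefficients bounded below, and a one-line anticoncentration argument yields $c(\boldsymbol 0)^2\ge(d/\epsilon)^{-O(d)}$ with high probability. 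No search over $p$, no convolution estimates.

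Your $L^2$-mass and random-sampling route is correct in outline and has the merit of using the randomness of $\boldsymbol b$ only through $\|\boldsymbol b\|^2=\Omega(k)$, so it would extend to adversarial output weights. But the price is exactly the ``main obstacle'' you flag---the off-diagonal bound $(\Phi_\epsilon*\Phi_\epsilon)(w_i-w_j)\ll\kappa/k$ and the companion tail-mass estimate---both of which require squeezing the loose upper and lower bounds of Lemma~\ref{almostHarmReal}. The paper sidesteps all of this by spending the randomness of $\boldsymbol b$ (available by hypothesis in Theorem~\ref{almostHarmSGD}) on anticoncentration at a single fixed point, rather than on a spatial search; the ``tempting shortcut'' you dismissed is not the only deterministic alternative.
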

\begin{proof}[Proof of Theorem \ref{almostHarmSGD}]
  Let our potential $\Phi_{\epsilon/k}$ be the one as constructed in Lemma~\ref{almostHarmReal} that is $1$-harmonic for all $r \geq \epsilon/k$ and as always, $k = \poly(d)$.  First, by Lemma~\ref{almostHarmInitialize}, we can initialize $\boldsymbol{(a^{(0)},\theta^{(0)})}$ such that $\sqrt{G(\boldsymbol{a^{(0)},\theta^{(0)}})} \leq \sqrt{G({\bf 0,0})} - \delta$ for $\delta = (d/\epsilon)^{-O(d)}$. If we set $\alpha = (d/\epsilon)^{-O(d)}$ and $\eta = \gamma = \delta^2/(2k^3d)$, then running Algorithm~\ref{SecondGD} will terminate and return some $(\boldsymbol{a,\theta})$ in at most $(d/\epsilon)^{O(d)}$ iterations. This is because our algorithm ensures that our objective function decreases by at least $\min(\alpha \eta^2/2, \alpha^2\gamma^3/2)$ at each iteration, $G({\bf 0, 0})$ is bounded by $O(k),$ and $G \geq 0$ is non-negative.

Let $\boldsymbol{\theta} = (\theta_1,...\theta_k)$. If there exists $\theta_i, w_j$ such that $\|\theta_i - w_j\| < \epsilon$, then we are done. Otherwise, we claim that $(\boldsymbol{a,\theta}) \in \mathcal{M}_{G, \delta^2/(2k^3d)}$. For the sake of contradiction, assume otherwise. By our algorithm termination conditions, then it must be that after one step of gradient or Hessian descent from $(\boldsymbol{a,\theta})$, we reach some $(\boldsymbol{a',\theta'})$ and $G(\boldsymbol{a',\theta'}) > G(\boldsymbol{a,\theta}) - \min(\alpha\eta^2/2,\alpha^2\gamma^3/2)$.

Now, Lemma~\ref{almostHarmReal} ensures all first three derivatives of
$\Phi_{\epsilon/k}$ are bounded by $O((dk/\epsilon)^{2d})$, except at
$w_1,...,w_k$. Furthermore, since there do not exist
$\theta_i, w_j$ such that $\|\theta_i - w_j\| <\epsilon$, $G$ is
three-times continuously differentiable within a
$\alpha (dk/\epsilon)^{2d} = (d/\epsilon)^{-O(d)}$ neighborhood of
$\boldsymbol{\theta}$. Therefore, by Lemma~\ref{GradDecrease} and
~\ref{HessianDecrease} in the \supmaterial{appendix}, we must have
$G(\boldsymbol{a',\theta'}) \leq G(\boldsymbol{a,\theta}) -
\min(\alpha\eta^2/2,\alpha^2\gamma^3/2),$ a contradiction. Lastly,
since our algorithm maintains that our objective function is
decreasing, so
$\sqrt{G(\boldsymbol{a,\theta})} \leq \sqrt{G({\bf 0,0})} -
\delta$. Finally, we conclude by Lemma \ref{almostHarmRes}.
\end{proof}

\subsection{Node-by-Node Analysis}
We cannot easily analyze the convergence of gradient descent to the global minima when all $\theta_i$ are simultaneously moving since the pairwise interaction terms between the $\theta_i$ present complications, even with added regularization. Instead, we run a greedy
node-wise descent (Algorithm~\ref{NodeGDOpt}) to learn the hidden weights, i.e. we run a descent algorithm with respect to $(a_i,\theta_i)$ sequentially. The
main idea is that after running SGD with respect to $\theta_1$,
$\theta_1$ should be close to some $w_j$ for some $j$. Then, we can
carefully induct and show that $\theta_2$ must be some other $w_k$ for
$k\neq j$ and so on.

\begin{algorithm}[tb]
 \caption{Node-wise Descent Algorithm}
   \label{NodeGDOpt}
\begin{algorithmic}
  \State {\bfseries Input:}
  $(\boldsymbol{a,\theta}) = (a_1,...,a_k,\theta_1,...,\theta_k), a_i
  \in\R, \theta_i\in\mathcal{M}$;
  $T\in \N$; $L$; $\alpha, \eta, \gamma \in \R$; 
  \For {$i=1$ {\bfseries to} $k$} 
  \State{\bf Initialize} $(a_i, \theta_i)$
  \State $(a_i, \theta_i) = SecondGD \left(L_{a_i, \theta_i},(a_i,\theta_i),T, \alpha,\eta,\gamma \right)$
   \EndFor
   \State {\bf return} $a = (a_1,...,a_k), \theta = (\theta_1,..., \theta_k)$
   \end{algorithmic}
\end{algorithm}

Let $L_1(a_1,\theta_1)$ be the objective $L$ restricted to $a_1,\theta_1$ being variable, and $a_2,...,a_k = 0$ are fixed. The tighter control on the movements of $\theta_1$ allows us to remove our regularization. While our previous guarantees before allow us to reach a $\epsilon$-neighborhood of $w_j$ when running SGD on $L_1$, we will strengthen our guarantees to reach a $(d/\epsilon)^{-O(d)}$-neighborhood of $w_j$, by reasoning about the first derivatives of our potential in an $\epsilon$-neighborhood of $w_j$. By similar argumentation as before, we will be able to derive the following convergence guarantees for node-wise training. 

\begin{restatable}{theorem}{nodewise}\label{nodeWise}
Let $\mathcal{M} = \R^{d}$ and $d \equiv 3 \mod 4$ and let $L$ be as in \ref{errLoss} and $k = \poly(d)$. For all $\epsilon \in (0,1),$ we can construct an activation $\sigma_\epsilon$ such that if $w_1,...,w_k \in \R^d$ with $w_i$ randomly chosen from $w_i \sim  \mathcal{N}({\bf 0}, O(d\log d){\bf I_{d\times d}})$ and $b_1,...,b_k$ be randomly chosen at uniform from $[-1,1]$, then with high probability, after running nodewise descent (Algorithm \ref{NodeGDOpt}) on the objective $L$ for at most $(d/\epsilon)^{O(d)}$ iterations, $\boldsymbol{(a,\theta)}$ is in a $(d/\epsilon)^{-O(d)}$ neighborhood of the global minima.
\end{restatable}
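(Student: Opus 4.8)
The plan is to prove Theorem~\ref{nodeWise} by induction on the node index $i$, showing that after the $i$-th call to \texttt{SecondGD} in Algorithm~\ref{NodeGDOpt}, the pair $(a_i,\theta_i)$ has landed in a $(d/\epsilon)^{-O(d)}$-neighborhood of some previously-unused target node $w_{\pi(i)}$, so that after all $k$ rounds the map $i \mapsto \pi(i)$ is a bijection and $(\boldsymbol{a},\boldsymbol{\theta})$ is near the global minimum. The base case and the inductive step are essentially the same argument applied to a shifted potential, so the bulk of the work is the single-node analysis of $L_i$, the loss restricted to $(a_i,\theta_i)$ with $\theta_1,\dots,\theta_{i-1}$ frozen near $w_{\pi(1)},\dots,w_{\pi(i-1)}$ and $a_{i+1}=\dots=a_k=0$.

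First I would set up the single-node objective. Restricted to $(a_i,\theta_i)$, the loss is (up to constants) $L_i(a_i,\theta_i) = a_i^2 + 2a_i\big(\sum_{j} b_j\Phi_\epsilon(\theta_i,w_j) - \sum_{\ell<i} a_\ell \Phi_\epsilon(\theta_i,\theta_\ell)\big)$, a quadratic in $a_i$, so the optimal $a_i$ is determined in closed form by the bracketed ``field'' term $\psi(\theta_i) \defeq \sum_j b_j\Phi_\epsilon(\theta_i,w_j) - \sum_{\ell<i} a_\ell\Phi_\epsilon(\theta_i,\theta_\ell)$, and minimizing over $a_i$ gives an effective potential $-\psi(\theta_i)^2$ in $\theta_i$. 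I would then mirror the argument in the proof of Theorem~\ref{EigStrict} / Lemma~\ref{almostHarmConv}: because $\Phi_\epsilon$ is $1$-harmonic for $r\ge \epsilon/k$ (Lemma~\ref{almostHarmReal}), at any approximate critical point $\theta_i$ that is $\ge\epsilon$-far from every $w_j$ and every frozen $\theta_\ell$, the Laplacian of $L_i$ in the $\theta_i$ directions equals (a positive multiple of) $-\lambda\,\psi(\theta_i)^2$ plus lower-order error, hence is strictly negative unless $\psi(\theta_i)\approx 0$; combined with the small-gradient, small-negative-curvature guarantee of $\mathcal{M}_{L_i,\delta}$ from Algorithm~\ref{SecondGD}, this forces either $a_i$ tiny or $\theta_i$ within $k\epsilon$ of some $w_j$ (the frozen $\theta_\ell$'s are already within $(d/\epsilon)^{-O(d)}$ of distinct $w_{\pi(\ell)}$ by induction, so ``close to a frozen $\theta_\ell$'' is the same as ``close to $w_{\pi(\ell)}$''). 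Then, exactly as in Lemma~\ref{almostHarmRes} and Lemma~\ref{almostHarmInitialize}, I would show the chosen initialization for $(a_i,\theta_i)$ strictly decreases $L_i$ below $L_i$ at $a_i=0$ by $\delta=(d/\epsilon)^{-O(d)}$ — using that the $w_j$ are drawn from $\mathcal{N}(0,O(d\log d)I)$ so they are pairwise $\Omega(\sqrt{d\log d})$-separated with high probability, which makes the cross terms negligible and the self term $\Phi_\epsilon(w_j,w_j)=1$ dominant — ruling out the $a_i$-tiny case and pinning $\theta_i$ to within $k\epsilon$ of a $w_j$ not yet claimed (a claimed $w_{\pi(\ell)}$ would make $\psi$ small there because $a_\ell\Phi_\epsilon(\theta_i,\theta_\ell)$ cancels $b_{\pi(\ell)}\Phi_\epsilon(\theta_i,w_{\pi(\ell)})$ once $a_\ell \approx -b_{\pi(\ell)}$, which is what the previous round enforced).

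To upgrade from a $k\epsilon$-neighborhood to a $(d/\epsilon)^{-O(d)}$-neighborhood, I would run a second, local phase of the single-node argument: once $\theta_i$ is within $\epsilon$ of $w_{\pi(i)}$, use the first-derivative bounds of Lemma~\ref{almostHarmReal} — namely $|\Phi_\epsilon'(r)| = \Omega(e^{-r}r^{1-d}(d/\epsilon)^{-2d})$ from below and the matching upper bounds — to show that the gradient of the effective potential $-\psi^2$ points inward and has magnitude bounded below by $(d/\epsilon)^{-O(d)}\cdot\|\theta_i-w_{\pi(i)}\|$ as long as $\|\theta_i-w_{\pi(i)}\|\ge (d/\epsilon)^{-O(d)}$, so the small-gradient termination condition of \texttt{SecondGD} cannot hold until $\theta_i$ is that close; here the separation of the $w_j$'s again guarantees that the contributions of the other protons and the frozen electrons to $\nabla\psi$ near $w_{\pi(i)}$ are exponentially smaller than the $w_{\pi(i)}$ term. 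Finally I would assemble the induction: running $k$ rounds, each costing $(d/\epsilon)^{O(d)}$ iterations of \texttt{SecondGD}, yields $\theta_i$ within $(d/\epsilon)^{-O(d)}$ of distinct $w_{\pi(i)}$ and $a_i$ within $(d/\epsilon)^{-O(d)}$ of $-b_{\pi(i)}$ for all $i$, which by continuity of $L$ (Lipschitzness on the relevant compact region, again from Lemma~\ref{almostHarmReal}) places $(\boldsymbol{a},\boldsymbol{\theta})$ in a $(d/\epsilon)^{-O(d)}$-neighborhood of the global minimum $L=0$.

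I expect the main obstacle to be the induction bookkeeping around the frozen nodes: making precise that a $\theta_\ell$ only \emph{approximately} equal to $w_{\pi(\ell)}$ (and $a_\ell$ only approximately $-b_{\pi(\ell)}$) still causes the field $\psi$ and its gradient to nearly vanish in a neighborhood of $w_{\pi(\ell)}$, with errors that do not blow up through the $k=\poly(d)$ rounds. Because $\Phi_\epsilon$ and $\Phi_\epsilon'$ can be as large as $(d/\epsilon)^{2d}$ near the origin, one must verify the approximation errors from previous rounds, which are $(d/\epsilon)^{-O(d)}$, stay smaller than the $(d/\epsilon)^{-O(d)}$ lower bounds on the ``signal'' even after multiplication by these large derivative bounds and summation over $\poly(d)$ terms — i.e.\ the exponents in the two $O(d)$'s must be chosen consistently. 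The high-probability pairwise separation $\|w_i-w_j\| = \Omega(\sqrt{d\log d})$ and the exponential decay $\Phi_\epsilon(r)\le O((1+r)^d e^{1-r} r^{2-d})$ are exactly what give the needed slack, but tracking the constants through $k$ nested inductive steps is the delicate part.
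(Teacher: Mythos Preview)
Your proposal is correct and follows essentially the same route as the paper: a single-node Laplacian/stable-point argument (the paper's Lemma~\ref{nodeConv}), an initialization that forces a nontrivial loss drop and hence rules out the $a_i$-small branch (Lemma~\ref{nodeInitialize}), a local first-derivative analysis near $w_{\pi(i)}$ to sharpen $k\epsilon$ to $(d/\epsilon)^{-O(d)}$ (Lemma~\ref{nodeGradient}), and the inductive cancellation argument $a_\ell\Phi_\epsilon(\theta_i,\theta_\ell)+b_{\pi(\ell)}\Phi_\epsilon(\theta_i,w_{\pi(\ell)})\approx 0$ to show each new node lands at an unused $w_j$. One quantitative slip to fix: with $w_i\sim\mathcal{N}(0,O(d\log d)I_d)$ the pairwise separation is $\Theta(d\sqrt{\log d})$ (the paper uses $\Omega(d\log d)$), not $\Omega(\sqrt{d\log d})$; your stated separation would not be large enough to make $e^{-r}(1+r)^d$ beat the $(d/\epsilon)^{-O(d)}$ signal, so be sure to use the correct scale when you carry out the cross-term bounds.
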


%


\section{Experiments}
\label{experiments}
For our experiments, our training data is given by $(x_i, f(x_i))$, where $x_i$ are randomly chosen from a standard Gaussian in $\R^d$ and $f$ is a randomly generated neural network with weights chosen from a standard Gaussian. We run gradient descent (Algorithm \ref{GD}) on the empirical loss, with stepsize around $\alpha = 10^{-5}$, for $T = 10^6$ iterations. The nonlinearity used at each node is sigmoid from -1 to 1, including the output node, unlike the assumptions in the theoretical analysis. A random guess for the network will result in a mean squared error of around 1. Our experiments (see Fig~\ref{expconverge}) show that for depth-2 neural networks, even with non-linear outputs, the training error diminishes quickly to under $0.002$. This seems to hold even when the width, the number of hidden nodes, is substantially increased (even up to 125 nodes), but depth is held constant; although as the number of nodes increases, the rate of decrease is slower. This substantiates our claim that depth-2 neural networks are learnable.

However, it seems that for depth greater than 2, the test error becomes significant  when width is high (see Fig~\ref{tablePlot}). Even for depth 3 networks, the increase in depth impedes the learnability of the neural network and the training error does not get close enough to 0. It seems that for neural networks with greater depth, positive convergence results in practice are elusive. We note that we are using training error as a measure of success, so it's possible that the true underlying parameters are not learned. 

\begin{figure}[!ht]
  \centering
\includegraphics[width = 4.5in]{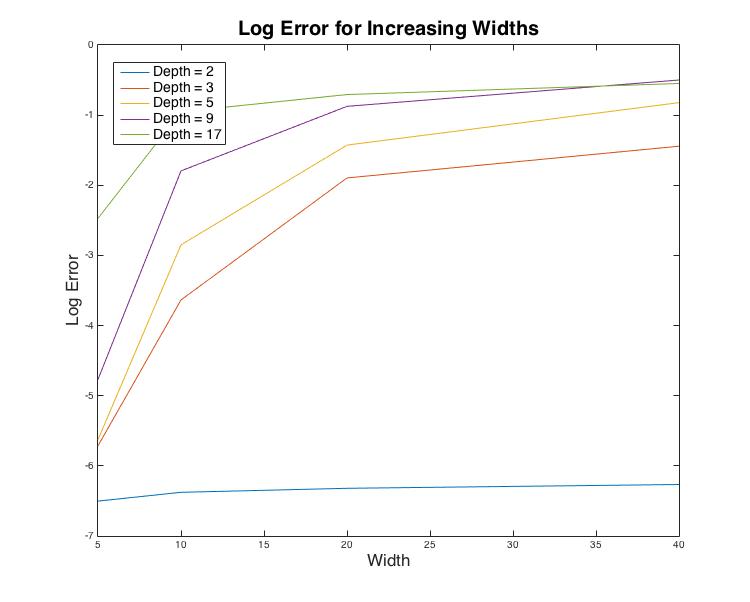}
\caption{Test Error of Varying-Depth Networks vs. Width}
\label{tablePlot}
\end{figure}

\begin{table}
\vskip 0.1in
\begin{center}
\begin{small}
\begin{sc}
\begin{tabular}{
  |p{\dimexpr.2\linewidth-2\tabcolsep-1.3333\arrayrulewidth}|
   |p{\dimexpr.2\linewidth-2\tabcolsep-1.3333\arrayrulewidth}
  |p{\dimexpr.2\linewidth-2\tabcolsep-1.3333\arrayrulewidth}
  |p{\dimexpr.2\linewidth-2\tabcolsep-1.3333\arrayrulewidth}
  |p{\dimexpr.2\linewidth-2\tabcolsep-1.3333\arrayrulewidth}
   |p{\dimexpr.2\linewidth-2\tabcolsep-1.3333\arrayrulewidth}
  }
   \hline 
           & Width 5   &  Width 10   & Width 20 & Width 40     \\ \hline 
    Depth 2 & 0.0015   & 0.0017      &   0.0018 & 0.0019 \\ \hline
    Depth 3 & 0.0033   & 0.0264        &   0.1503 & 0.2362 \\ \hline
    Depth 5 & 0.0036   & 0.0579        &   0.2400 & 0.4397 \\ \hline
    Depth 9 & 0.0085   & 0.1662        &   0.4171 & 0.6071 \\ \hline
    Depth 17 & 0.0845   & 0.3862        &   0.4934 & 0.5777 \\ \hline
\end{tabular}
\end{sc}
\end{small}
\end{center}
\caption{Test Error of Learning Neural Networks of Various Depth and Width}
\vskip -0.1in
\end{table}

{\small
\bibliography{biblio}
\bibliographystyle{alpha}}

\newpage
\appendix
\section{Electron-Proton Dynamics}

\epdyn*

\begin{proof}
The initial values are the same. Notice that continuous gradient descent on $L(\boldsymbol{a,\theta})$ with respect to $\theta$ produces dynamics given by $\frac{d\theta_i(t)}{dt} = -\nabla_{\theta_i}L(\boldsymbol{a,\theta})$. Therefore,
\[\frac{d\theta_i(t)}{dt} = -2\sum_{j \neq i} a_i a_j
\nabla_{\theta_i}\Phi(\theta_i,\theta_j) - 2\sum_{j=1}^k
a_ib_j\nabla_{\theta_i} \Phi(\theta_i,w_j)\] 
And gradient descent does not move $w_i$. By definition, the dynamics corresponds to Electron-Proton Dynamics as claimed.
\end{proof}

\section{Realizable Potentials}
\label{sec:realizable}

\subsection{Activation-Potential Calculations}
First define the {\it dual} of a function $f: \R \to \R$ is defined to be 
\[ \widehat{f}(\rho) = \expt_{X,Y \sim N(\rho)}[f(X)f(Y)],\]
where $N(\rho)$ is the bivariate normal distribution with $X, Y$ unit variance and $\rho$ covariance. This is as in \cite{DanielyFS16}.
\begin{lemma}\label{rotLem}
Let $\mathcal{M} = S^{d-1}$ and $\sigma$ be our activation function, then $\widehat{\sigma}$ is the corresponding potential function.
\end{lemma}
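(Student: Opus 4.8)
The plan is to reduce the claim to the elementary fact that a standard Gaussian pushed forward through two unit vectors is a bivariate normal with the expected covariance. I would first make the conventions of the rotationally invariant setting explicit: the activation depends on the input $x$ and the weight $\theta$ only through the inner product, i.e. $\sigma(x,\theta)=\sigma(\ip{x,\theta})$ for a univariate $\sigma\colon\R\to\R$; the weights $\theta,w$ lie on $\mathcal M=S^{d-1}$; and the input law is the standard Gaussian $X\sim\mathcal N(0,I_d)$, exactly as in the translationally invariant case. Under these conventions the potential corresponding to $\sigma$ is
\[
\Phi(\theta,w)=\E{X\sim\mathcal N(0,I_d)}{\sigma(\ip{X,\theta})\,\sigma(\ip{X,w})}.
\]

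Next I would set $U=\ip{X,\theta}$ and $V=\ip{X,w}$. Being a linear image of the Gaussian vector $X$, the pair $(U,V)$ is jointly Gaussian with mean $0$, so it is determined by its covariance matrix. A one-line second-moment computation gives $\Var(U)=\theta^T\theta=\norm{\theta}^2$, $\Var(V)=\norm{w}^2$, and $\mathrm{Cov}(U,V)=\theta^T w=\ip{\theta,w}$. This is the single point at which the hypothesis $\mathcal M=S^{d-1}$ enters: because $\theta$ and $w$ are unit vectors, both marginal variances equal $1$, so the covariance matrix of $(U,V)$ is exactly $\bigl(\begin{smallmatrix}1&\rho\\\rho&1\end{smallmatrix}\bigr)$ with $\rho=\ip{\theta,w}$; that is, $(U,V)\sim N(\rho)$ in the sense of the dual defined above.

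Finally I would substitute this into the definition of the dual from \cite{DanielyFS16}:
\[
\Phi(\theta,w)=\E{X}{\sigma(\ip{X,\theta})\sigma(\ip{X,w})}=\E{(U,V)\sim N(\rho)}{\sigma(U)\sigma(V)}=\widehat{\sigma}(\rho)=\widehat{\sigma}(\ip{\theta,w}),
\]
which is exactly the statement that $\widehat\sigma$, read as a function of $\ip{\theta,w}$, is the potential corresponding to $\sigma$. There is no genuinely hard step here; the only thing to be careful about is the bookkeeping of conventions — that the activation sees the input only through $\ip{x,\theta}$ and that the input is the standard Gaussian — since it is precisely these, together with $\norm{\theta}=\norm{w}=1$, that make the covariance of $(U,V)$ equal to $\bigl(\begin{smallmatrix}1&\rho\\\rho&1\end{smallmatrix}\bigr)$ on the nose. (If one instead insisted on drawing $X$ uniformly from $S^{d-1}$, the marginals of $U,V$ would fail to be Gaussian and one would recover $\widehat\sigma$ only in the $d\to\infty$ limit; I would avoid that and keep the Gaussian input, for which the identity is exact, matching the rest of the paper.)
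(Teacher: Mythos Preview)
Your proposal is correct and follows essentially the same approach as the paper: both arguments observe that for unit vectors $\theta,w$ and $X\sim\mathcal N(0,I_d)$, the pair $(\ip{X,\theta},\ip{X,w})$ is bivariate normal with unit marginals and covariance $\ip{\theta,w}$, and then read off $\Phi(\theta,w)=\widehat\sigma(\ip{\theta,w})$ from the definition of the dual. Your version is slightly more explicit about the bookkeeping (and adds a helpful remark on why the Gaussian input matters), but the substance is identical.
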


\begin{proof}
If $u, v$ have norm 1 and if $X$ is a standard Gaussian in $\R^d$, then note that $X_1 = u^TX$ and $X_2 = v^TX$ are both standard Gaussian variables in $\R^1$ and the covariance is $E[X_1X_2] = u^Tv$. 

Therefore, the dual function of the activation gives us the potential function.
\begin{align*}
\expt_{X}[\sigma(u^TX)\sigma(v^TX)] & =
\expt_{X,Y \sim N(u^Tv)}[\sigma(X)\sigma(Y)] \\
& = \widehat{\sigma}(u^Tv).
\end{align*}
\end{proof}

By Lemma \ref{rotLem}, the calculations of the activation-potential
for the sign, ReLU, Hermite, exponential functions are given in
\cite{DanielyFS16}. For the Gaussian and Bessel activation functions,
we can calculate directly. In both case, we notice that we may write
the integral as a product of integrals in each dimension. Therefore,
it suffices to check the following 1-dimensional identities.
\begin{align*}
  & \int_{-\infty}^\infty
    \sqrt{2}e^{x^2/4}e^{-(x-\theta)^2}\sqrt{2}e^{x^2/4}e^{-(x-w)^2} \frac{1}{\sqrt{2\pi}} e^{-x^2/2}\, dx \\
  & \qquad = \sqrt{\frac{2}{\pi}}\int_{-\infty}^\infty
    e^{-(x-\theta)^2}e^{-(x-w)^2} \, dx = e^{-(\theta -w)^2/2}
\end{align*}
\begin{align*}
& \int_{-\infty}^\infty (\frac{2}{\pi})^{3/2}e^{x^2/2}K_0(|x-\theta|)K_0(|x-w|)  \frac{1}{\sqrt{2\pi}} e^{-x^2/2}\, dx \\
& \qquad 
= \int_{-\infty}^\infty \frac{2}{\pi^2}K_0(|x-\theta|)K_0(|x-w|) \, dx
  = e^{-|\theta -w|}
\end{align*}

The last equality follows by Fourier uniqueness and taking the Fourier transform of both sides, which are both equality $\sqrt{2/\pi}(\omega^2+1)^{-1}$.

\subsection{Characterization Theorems}

\tranreal*

\begin{proof}
Since $\Phi$ is square-integrable, its Fourier transform exists. Let $h(x) = \FT^{-1}(\sqrt{\FT(\Phi)})(x)$ and this is well-defined since the Fourier transform was non-negative everywhere and the Fourier inverse exists since $\sqrt{\FT(\Phi)}(x)$ is square-integrable. Now, let $\sigma(x,w) = (2\pi)^{1/4}e^{x^2/4}h(x-w)$. Realizability follows by the Fourier inversion theorem:
\begin{align*}
    \expt_{X \sim N}[\sigma(X,w)\sigma(X,\theta)]  &= \int_{\R^n} h(x-w)h(x-\theta) \, dx \\
    &= \int_{\R^n} h(x)h(x-(\theta-w)) \, dx \\
    &= \FT^{-1}(\FT(h\ast h)(\theta -w)) \\
    &= \FT^{-1}(\FT(h)^2(\theta - w)) \\
    &= \FT^{-1}(\FT(\Phi)(\theta - w)) \\
    &= \Phi(\theta - w) 
\end{align*}
 
Note that $\ast$ denotes function convolution.
\end{proof}

When our relevant space is $\mathcal{M} = S^{d-1},$ we let
$\Pi_\mathcal{M}$ be the projection operator on $\mathcal{M}$. The
simplest way to define the gradident on
$S^{d-1}$ is $\nabla_{S^{d-1}} f(x) = \nabla_{\R^d} f(x/\|x\|)$, where
$\| \cdot \|$ denotes the $l_2$ norm and $x \in S^{d-1}$. 
The Hessian and Laplacian are analogously defined and the subscripts
are usually dropped where clear from context.

We say that a potential $\Phi$ on $\mathcal{M} = S^{d-1}$ is
rotationally invariant if for all $\theta, w \in S^{d-1},$ we have
 $\Phi = h(\theta^Tw).$
\begin{restatable}{theorem}{rotreal}
\label{thm:rotReal}
Let $\mathcal{M} = S^{d-1}$ and $\Phi(\theta,w) = f(\theta^Tw)$. Then,
$\Phi$ is realizable if $f$ has non-negative Taylor coefficients, $c_i
\geq 0$ , and the corresponding activation $\sigma(x) = \sum_{i=1}^\infty \sqrt{c_i} h_i(x)$
converges almost everywhere, where $h_i(x)$ is the i-th Hermite polynomial.
\end{restatable}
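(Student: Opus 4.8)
The plan is to reduce the claim to a one‑dimensional statement about the \emph{dual} operation via Lemma~\ref{rotLem}, and then read it off from the spectral behaviour of the Hermite polynomials under the Gaussian correlation (Ornstein--Uhlenbeck) operator.

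First I would invoke Lemma~\ref{rotLem}: for $\theta,w\in S^{d-1}$ and $X\sim\mathcal{N}(0,I_d)$, the pair $(\theta^TX,w^TX)$ is a centered bivariate Gaussian with unit marginals and covariance $\theta^Tw$, so $\Phi(\theta,w)=\expt_X[\sigma(\theta^TX)\sigma(w^TX)]=\widehat{\sigma}(\theta^Tw)$. Thus $\Phi=f(\theta^Tw)$ is realizable exactly when there is an activation $\sigma\in L^2(\mathcal{N}(0,1))$ whose dual satisfies $\widehat{\sigma}(\rho)=f(\rho)$ for $\rho\in[-1,1]$, and the whole problem becomes one‑dimensional. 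Note nothing here is special to the sphere: only the joint law of $(\theta^TX,w^TX)$ enters, so the same computation handles $\mathcal{M}=\R^d$ or a sub‑sphere verbatim.

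Next I would use the classical fact that the normalized (probabilist's) Hermite polynomials $h_i$ form an orthonormal basis of $L^2(\mathcal{N}(0,1))$ and diagonalize the correlation operator $U_\rho g(x)=\expt_{Y\sim\mathcal{N}(\rho x,\,1-\rho^2)}[g(Y)]$, with $U_\rho h_i=\rho^i h_i$ and $\widehat{g}(\rho)=\langle g,U_\rho g\rangle$ (equivalently $\expt_{(X,Y)\sim N(\rho)}[h_i(X)h_j(Y)]=\delta_{ij}\rho^i$), as recorded in \cite{DanielyFS16}. Since the operator norm of $U_\rho$ is at most $1$ for $|\rho|\le 1$, the bilinear form $(g,h)\mapsto\langle g,U_\rho h\rangle$ on $L^2(\mathcal{N}(0,1))$ has norm at most $1$. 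Now set $\sigma_n=\sum_{i\le n}\sqrt{c_i}\,h_i$ and $\sigma=\sum_i\sqrt{c_i}\,h_i$. Because $\Phi$ is a potential, $\Phi(\theta,\theta)$ is a finite constant, i.e. $\sum_i c_i=f(1)<\infty$; hence $\sigma\in L^2(\mathcal{N}(0,1))$ with $\norm{\sigma}_2^2=\sum_i c_i$, the partial sums $\sigma_n$ converge to $\sigma$ in $L^2$, and the almost‑everywhere convergence hypothesis identifies this $L^2$ limit with the activation written in the statement. Using the diagonalization, $\widehat{\sigma_n}(\rho)=\langle\sigma_n,U_\rho\sigma_n\rangle=\sum_{i\le n}c_i\rho^i$. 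Letting $n\to\infty$: the left side tends to $\widehat{\sigma}(\rho)$ by continuity of the bilinear form on $L^2$, while the right side tends to $f(\rho)$ (the $c_i\ge 0$ and $\sum c_i<\infty$). Therefore $\widehat{\sigma}=f$ on $[-1,1]$, and by the first paragraph $\Phi$ is realized by $\sigma$.

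I expect the only genuinely delicate point to be justifying the interchange of the infinite Hermite expansion with the bivariate‑Gaussian expectation defining the dual. The clean way around it is precisely the partial‑sum argument above: pass to the limit in $\sigma_n$ and use that $U_\rho$ is an $L^2$‑contraction, so one never has to control the double sum $\sum_{i,j}\sqrt{c_ic_j}$ directly (it need not be absolutely summable). The remaining ingredients --- membership $\sigma\in L^2$ (equivalently $f(1)<\infty$, forced by $\Phi(\theta,\theta)$ being a constant) and the eigenvalue identity $\widehat{h_i}=\rho^i$ --- are routine.
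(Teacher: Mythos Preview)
Your proof is correct and follows essentially the same approach as the paper: reduce to the one-dimensional dual via Lemma~\ref{rotLem}, then use that the Hermite polynomials diagonalize the Gaussian correlation operator so that $\widehat{\sigma}(\rho)=\sum_i c_i\rho^i$ when $\sigma=\sum_i\sqrt{c_i}\,h_i$. The paper's proof is a brief sketch of this same argument; your version is more careful in justifying the limit interchange via the partial-sum/$L^2$-contraction argument, which the paper omits.
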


\begin{proof}
By \ref{rotLem} and due to the orthogonality of hermite polynomials, if $f = \sum_i a_i h_i$, where $h_i(x)$ is the i-th Hermite polynomial, then
\[\widehat{f}(\rho) = \sum_{i} a_i^2 \rho^i\]

Therefore, any function with non-negative taylor coefficients is a valid potential function, with the corresponding activation function determined by the sum of hermite polynomials, and the sum is bounded almost everywhere by assumption.
\end{proof}

\subsection{Further Characterizations}

To apply Theorem~\ref{thm:tranReal}, we need to check that the Fourier transform of our function is non-negative. Not only is this is not straightforward to check, many of our desired potentials do not satisfy this criterion. In this section, we would like to have a stronger characterization of realizable potentials, allowing us to construct realizable potentials that approximates our desired potential.
 
\begin{definition}
Let $\Phi$ be a positive semidefinite function if for all $x_1,...,x_n$, the matrix $A_{ij} = \Phi(x_i - x_j)$ is positive semidefinite. 
\end{definition}

\begin{lemma}\label{lem:psd}
Let $\mathcal{M} = \R^d$ and $\Phi(\theta, w) = f(\theta-w)$ is is realizable, then it is positive semidefinite.
\end{lemma}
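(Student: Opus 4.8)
The plan is to prove this by a direct eigenvalue/Gram-matrix argument, using the characterization of realizability from Theorem~\ref{thm:tranReal}. The key observation is that if $\Phi(\theta,w) = f(\theta - w)$ is realizable, then by definition there is an activation $\sigma$ with
\[
\Phi(\theta,w) = \expt_{X\sim \mathcal{D}}\bigl[\sigma(X,\theta)\,\sigma(X,w)\bigr].
\]
In other words, writing $g_\theta(X) \defeq \sigma(X,\theta)$ as an element of $L^2(\mathcal{D})$, we have $\Phi(\theta,w) = \ip{g_\theta, g_w}_{L^2(\mathcal{D})}$. So $\Phi$ is literally a Gram matrix of vectors in a Hilbert space, and Gram matrices are positive semidefinite.

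Concretely, first I would fix arbitrary points $x_1,\dots,x_n \in \R^d$ and form the matrix $A_{ij} = \Phi(x_i - x_j) = \ip{g_{x_i}, g_{x_j}}$. Then for any vector $c = (c_1,\dots,c_n) \in \R^n$ (or $\C^n$, taking conjugates appropriately), I would write
\[
c^* A c \;=\; \sum_{i,j} \conj{c_i} c_j \ip{g_{x_i}, g_{x_j}} \;=\; \Bigl\langle \sum_i c_i g_{x_i},\ \sum_j c_j g_{x_j} \Bigr\rangle \;=\; \Bigl\| \sum_i c_i g_{x_i} \Bigr\|_{L^2(\mathcal{D})}^2 \;\geq\; 0,
\]
where the middle equality is just bilinearity of the inner product. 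This shows $A$ is positive semidefinite, which is exactly the claim. The only thing to be slightly careful about is that $g_{x_i} = \sigma(X, x_i)$ genuinely lies in $L^2(\mathcal{D})$, i.e. that $\Phi(\theta,\theta) = \expt[\sigma(X,\theta)^2] < \infty$; this is guaranteed by the normalization convention $\Phi(\theta,\theta) = 1$ stated in the Remark earlier, so the inner-product manipulation is justified.

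I do not expect a serious obstacle here — the statement is essentially the classical fact that a function is positive semidefinite iff it is a covariance kernel, and the realizability hypothesis hands us the covariance representation directly. The only mild subtlety, if one wants to be fully rigorous, is interchanging the finite sum with the expectation, which is immediate since the sum is finite, and ensuring square-integrability of each $\sigma(X, x_i)$, handled by the normalization remark. (This lemma is of course the easy direction; the harder converse — a Bochner-type theorem recovering realizability from positive semidefiniteness plus nonnegativity of the Fourier transform — is where the real content lies, but that is not what is being asked for in this statement.)
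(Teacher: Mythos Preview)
Your proposal is correct and is essentially identical to the paper's proof: both use the realizability representation $\Phi(\theta,w)=\expt_X[\sigma(X,\theta)\sigma(X,w)]$ to write the quadratic form $\sum_{i,j} v_i v_j \Phi(x_i,x_j)$ as $\expt_X\bigl[(\sum_i v_i\sigma(X,x_i))^2\bigr]\ge 0$. The paper phrases it via the expectation directly rather than the $L^2(\mathcal{D})$ inner product, but the argument is the same Gram-matrix computation.
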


\begin{proof}
If $\Phi$ is realizable, then there exists $\sigma$ such that  $\Phi(\theta, w) =  \expt_{X \sim N}[\sigma(X, w)\sigma(X, \theta)]$. For $x_1,...,x_n$, we note that the quadratic form:
\begin{align*}
\sum_{i,j} \Phi(x_i,x_j) v_i v_j 
& = \sum_{i,j} \expt_{X \sim N}[\sigma(X, x_i)\sigma(X, x_j)] v_i v_j 
 = \expt_{X \sim N}\left[\left(\sum_{i}v_i \sigma(X , x_i) \right)^2\right] \geq 0
\end{align*}

Since $\Phi$ is translationally symmetric, we conclude that $\Phi$ is positive semidefinite.
\end{proof}

\begin{definition}
A potential $\Phi$ is $\FT$-integrable if it is square-integrable and $\FT({\Phi}(\omega))$ is integrable, where $\FT$ is the standard Fourier transform.
\end{definition}

\begin{lemma}\label{intReal}
Let $w(x) \geq 0$ be a positive weighting function such that $\int_a^b w(x) \, dx$ is bounded. If $\Phi_x$ is a parametrized family of $\FT$-integrable realizable potentials, then, $\int_a^b w(x) \Phi_x$ is $\FT$-integrable realizable.
\end{lemma}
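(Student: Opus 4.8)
The plan is to verify the hypotheses of Theorem~\ref{thm:tranReal} for the candidate potential $\Psi(\theta-w) := \int_a^b w(x)\,\Phi_x(\theta - w)\,dx$. So I would need to establish three things: (i) $\FT(\Psi)(\omega) \ge 0$ for a.e.\ $\omega$; (ii) $\FT(\Psi)$ is integrable; and (iii) $\Psi$ is square-integrable. Items (ii) and (iii) together say exactly that $\Psi$ is $\FT$-integrable, and then (i) lets Theorem~\ref{thm:tranReal} produce the explicit activation $\sigma(x) = (2\pi)^{d/4} e^{x^Tx/4}\FT^{-1}(\sqrt{\FT(\Psi)})(x)$ realizing $\Psi$, which is the claimed conclusion.

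For (i): each $\Phi_x$ is realizable, so by Lemma~\ref{lem:psd} it is positive semidefinite; being additionally $\FT$-integrable (hence, after modification on a null set, a continuous square-integrable function with integrable Fourier transform), a classical theorem of Bochner forces its Fourier transform to be a nonnegative measure, i.e.\ $\FT(\Phi_x)(\omega) \ge 0$ a.e. Interchanging the $x$-integral with the Fourier transform (Tonelli, legitimate once everything in sight is nonnegative) then gives $\FT(\Psi)(\omega) = \int_a^b w(x)\,\FT(\Phi_x)(\omega)\,dx \ge 0$.

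For (ii) and (iii) I would use the normalization $\Phi_x(\mathbf 0) = 1$ that holds for translationally symmetric potentials. Since $\FT(\Phi_x) \ge 0$ and $\Phi_x$ is recovered by Fourier inversion, $\|\FT(\Phi_x)\|_{L^1}$ equals a fixed constant $c_d\,\Phi_x(\mathbf 0) = c_d$ independent of $x$ (with $c_d$ determined by the Fourier normalization); hence $\|\FT(\Psi)\|_{L^1} \le \int_a^b w(x)\,\|\FT(\Phi_x)\|_{L^1}\,dx = c_d \int_a^b w(x)\,dx < \infty$, which is (ii), and a posteriori this also justifies the Fubini/Tonelli exchange above. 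For (iii), by Minkowski's integral inequality $\|\Psi\|_{L^2} \le \int_a^b w(x)\,\|\Phi_x\|_{L^2}\,dx$, which is finite provided the family has $w$-integrable $L^2$ norms — a mild regularity assumption that is implicit in the statement and holds for every family $\{\Phi_x\}$ we actually use, since their norms are explicitly controlled; equivalently, $\FT(\Psi) \in L^1 \cap L^2$ (the $L^2$ part again by Minkowski applied to $\FT(\Phi_x) \in L^2$), so $\Psi \in L^2$ by Plancherel.

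The positivity claim (i) is essentially free once Lemma~\ref{lem:psd} and Bochner are invoked. The main obstacle is the soft-analysis bookkeeping in (ii)--(iii): justifying the interchange of $\int_a^b w(x)\,dx$ with $\FT$, and ensuring that $\Psi$ itself — not merely $\FT(\Psi)$ — genuinely lies in $L^2$. This is precisely where the boundedness of $\int_a^b w(x)\,dx$ and the uniform regularity of the family $\{\Phi_x\}$ are used.
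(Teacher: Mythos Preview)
Your proposal is correct and follows essentially the same approach as the paper: both use Lemma~\ref{lem:psd} together with Bochner's theorem to conclude $\FT(\Phi_x)\ge 0$, pass to the weighted integral by linearity/Tonelli, and then invoke Theorem~\ref{thm:tranReal}. Your treatment of the $\FT$-integrability bookkeeping (items (ii)--(iii)), in particular the use of the normalization $\Phi_x(\mathbf 0)=1$ to obtain a uniform $L^1$ bound on $\FT(\Phi_x)$, is more careful than the paper's, which dispatches this step in a single sentence appealing to linearity and the boundedness of $\int_a^b w(x)\,dx$.
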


\begin{proof}
Let $\Phi = \int_a^b w(x) \Phi_x$. From linearity of the Fourier transform and $\int_a^b w(x)\, dx$ is bounded, we know that $\Phi$ is $\FT$-integrable. Since $\Phi_x$ are realizable, they are positive definite by Lemma~\ref{lem:psd} and by Bochner's theorem, their Fourier transforms are non-negative. And since $w(x) \geq 0$, we conclude by linearity and continuity of the Fourier transform that $\FT(\Phi) \geq 0$. By Theorem \ref{thm:tranReal}, we conclude that $\Phi$ is realizable.
\end{proof}

\begin{lemma}\label{baseConstruct}
Let $\mathcal{M} = \R^d$ for $d \equiv 3 \mod 4$. Then, for any $\epsilon, t > 0$, there exists a $\FT$-integrable realizable $\Phi$ such that for $t \geq r > \epsilon$, $\Phi^{(d-1)}(r) = t -r$ and for $ r \leq \epsilon$, $\Phi^{(d-1)}(r) = \frac{t-\epsilon}{\epsilon}r$. Furthermore, $\Phi^{(k)}(r) = 0$ for $r > t$ for all $0 \leq k \leq d$.
\end{lemma}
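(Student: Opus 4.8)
My plan is to obtain $\Phi$ as a non‑negatively weighted superposition of autocorrelation potentials of ball indicators — for which realizability is automatic by Theorem~\ref{thm:tranReal} — and then to choose the weight so that the superposition has exactly the prescribed $(d-1)$‑st derivative. First, for $R>0$ let $g_R$ be the indicator of the origin‑centered ball $B_R$ of radius $R$ in $\R^d$, and set $\Phi_R\defeq g_R\ast g_R$, so $\Phi_R(x)=\Vol\big(B_R(0)\cap B_R(x)\big)$, a radial function supported in the ball of radius $2R$. Since $\FT(\Phi_R)=\FT(g_R)^2\ge 0$ and $\Phi_R$ is square‑integrable with integrable Fourier transform, Theorem~\ref{thm:tranReal} gives that $\Phi_R$ is $\FT$‑integrable and realizable, with activation (up to the Gaussian weight $e^{\|x\|^2/4}$) equal to $g_R$. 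Writing the profile as $\Phi_R(r)=2\kappa_{d-1}\int_{r/2}^{R}(R^2-s^2)^{(d-1)/2}\,ds$ for $0\le r\le 2R$, where $\kappa_{d-1}$ is the volume of the unit $(d-1)$‑ball, and using that $(d-1)/2\in\N$ for odd $d$ so the integrand is a polynomial, one gets $\Phi_R'(r)=-\kappa_{d-1}(R^2-r^2/4)^{(d-1)/2}$ on $(0,2R)$ and, differentiating $d-2$ further times, $\Phi_R^{(d-1)}(r)=C_d\,r$ on $(0,2R)$ — plus, for $d\ge 5$, a distribution supported at $r=2R$ — where $C_d=-\kappa_{d-1}(-1/4)^{(d-1)/2}(d-1)!$. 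Its sign is $(-1)^{(d-1)/2+1}$, which is positive \emph{precisely when} $d\equiv 3\bmod 4$; this is the only place the hypothesis on $d$ is used.

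Next I would look for $\Phi=\int_{\epsilon/2}^{t/2}w(R)\,\Phi_R\,dR$ with $w\ge 0$. Then $\Phi^{(d-1)}(r)=C_d\, r\int_{\max(\epsilon/2,\,r/2)}^{t/2}w(R)\,dR$ for $r\in(0,\epsilon)$, while for $r\in(\epsilon,t)$ there is in addition the contribution of the boundary distributions, which after the change of variables $u=2R$ becomes a fixed linear expression in $w$ and finitely many of its derivatives evaluated at $r/2$. Demanding that $\Phi^{(d-1)}$ equal the prescribed tent — $\tfrac{t-\epsilon}{\epsilon}r$ on $[0,\epsilon]$ and $t-r$ on $[\epsilon,t]$ — fixes the single normalization $C_d\int_{\epsilon/2}^{t/2}w=\tfrac{t-\epsilon}{\epsilon}$ from the first regime and, in the second regime, a linear ODE for $w$ whose coefficients are positive multiples of $C_d$ and $\kappa_{d-1}$. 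For $d=3$ there is no ODE and one simply reads off $w(R)=\tfrac{t}{2C_3R^2}>0$; in general one solves the ODE with the natural vanishing conditions at $R=t/2$ and checks that $C_d>0$ forces $w\ge 0$. Since $w\ge 0$ and $\int_{\epsilon/2}^{t/2}w<\infty$, Lemma~\ref{intReal} shows $\Phi$ is $\FT$‑integrable and realizable. Finally, each $\Phi_R$ with $R\le t/2$ vanishes on a neighborhood of every point $r>t$, hence so do all of its derivatives, so $\Phi^{(k)}(r)=\int w(R)\Phi_R^{(k)}(r)\,dR=0$ for all $r>t$ and all $k$, in particular for $0\le k\le d$, which gives the last assertion.

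The hard part will be the second step in dimensions $d\ge 7$: identifying precisely the boundary distribution inside $\Phi_R^{(d-1)}$ at $r=2R$ and verifying that the resulting (order $\tfrac{d-3}{2}$) linear ODE for the weight $w$ admits a non‑negative solution on $[\epsilon/2,t/2]$. Everything else — realizability of the building blocks, $\FT$‑integrability through Lemma~\ref{intReal}, and the compact‑support statement — is routine, and the congruence $d\equiv 3\bmod 4$ enters only through the sign of the constant $C_d$, which in turn is exactly what makes the weight $w$ non‑negative.
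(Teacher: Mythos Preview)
Your approach is exactly the paper's: build $\Phi$ as a non-negative superposition $\int w(x)\,\Phi_x\,dx$ of ball-autocorrelation potentials, use that (after a $t$-independent rescaling) $\Phi_x^{(d-1)}(r)=r$ pointwise on $(0,x)$, and choose the weight so the superposition has the prescribed tent as its $(d-1)$-st derivative. The paper simply takes $w(x)=1/x^2$ and passes all $d-1$ derivatives under the integral; for $d=3$ this is legitimate (there $\Phi_x$ is $C^1$, so $\Phi_x''$ is an honest $L^\infty$ function with no singular part), and your choice $w(R)\propto R^{-2}$ recovers the same construction.

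You are right, and more careful than the paper, in flagging that for $d\ge 7$ the building block $\Phi_x$ is only $C^{(d-1)/2}$ at $r=x$, so the distributional $\Phi_x^{(d-1)}$ carries derivatives of $\delta_x$ and the naive interchange is unjustified. In fact the paper's computation is \emph{wrong} for $d=7$: from $\Phi_x'(r)=-\tfrac{1}{720}(x^2-r^2)^3$ one finds directly that with weight $1/x^2$ the resulting $\Phi$ satisfies $\Phi^{(6)}(r)=\tfrac{8}{15}-\tfrac{r}{t}$ on $(\epsilon,t)$, not $1-\tfrac{r}{t}$; the missing $\tfrac{7}{15}$ is exactly the contribution of the $\delta_x,\delta_x'$ terms in $\Phi_x^{(6)}$.

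However, your proposed cure --- solve for $w$ via a linear ODE and claim ``$C_d>0$ forces $w\ge 0$'' --- does not close. For $d=7$, writing $W(r)=\int_r^t w$, the requirement $\Phi^{(6)}(r)=t-r$ on $(\epsilon,t)$ becomes the Euler equation $r^2W''+9rW'+15W=15(t/r-1)$, with general solution $W(r)=\tfrac{15t}{8r}-1+c_1r^{-3}+c_2r^{-5}$. The conditions $W(t)=0$ and $W(\epsilon)=(t-\epsilon)/\epsilon$ (the latter forced by the required slope on $(0,\epsilon)$) give $c_1=-\tfrac{7t(t^2+\epsilon^2)}{8}$, $c_2=\tfrac{7t^3\epsilon^2}{8}$, and then
\[
w(t)=-W'(t)=\frac{7\epsilon^2-3t^2}{4t^3}<0\qquad\text{whenever }\ \epsilon<t\sqrt{3/7}.
\]
Worse, avoiding $\delta$-singularities in $\Phi^{(d-1)}$ at the endpoints additionally forces $w(\epsilon)=w(t)=0$, which overdetermines the second-order problem. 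So the ``hard part'' you identify is genuinely hard, and with this family of building blocks alone the argument does not go through for $d\ge 7$; you would need either richer building blocks or a different route to positivity of the Fourier transform.
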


\begin{proof}
Our construction is based on the radial activation function $h_t(x, \theta) = \bf{1}_{\|\theta - x\| \leq t/2}$, which is the indicator in the disk of radius $t/2$. This function, when re-weighted correctly as $\sigma_t(x,\theta) =  (2\pi)^{1/4} e^{x^2/4}h_t(x,\theta)$ gives rise to a radial potential function that is simply the convolution of $h_t$ with itself, measuring the volume of the intersection of two spheres of radius $t$ centered at $\theta$ and $w$.
\begin{align*}
\Phi_t(\theta, w) & = \expt_X[\sigma_t(X,\theta)\sigma_t(X,w)]
= \begin{cases}
C\int_{\|\theta - w\|/2}^{t/2} ((t/2)^2 - x^2)^{(d-1)/2} \, dx & \|\theta - w\| \leq t\\
0 & \textrm{ otherwise.}
\end{cases}
\end{align*}
Therefore, as a function of $r=\|\theta - w \|$, we see that when $r \leq t$, $\Phi_t(r) = C\int_{r/2}^{t/2} ((t/2)^2-x^2)^{(d-1)/2} \, dx$ and $\Phi_t'(r) = -C'((t/2)^2-(r/2)^2)^{(d-1)/2}$. Since $d \equiv 3 \mod 4$, we notice that $\Phi_t'$ has a positive coefficient in the leading $r^{d-1}$ term and since it is a function of $r^2$, it has a zero $r^{d-2}$ term. Therefore, we can scale $\Phi_t$ such that 
\begin{align*}
\Phi_t^{(d-1)}(r) = \begin{cases}
r & r \leq t\\
0 & \textrm{ otherwise.}
\end{cases} 
\end{align*}

$\Phi_t$ is clearly realizable and now we claim that it is $\FT$-integrable. First, $\Phi_t$ is bounded on a compact set so it is square-integrable. Now, since $\Phi_t = h_t \ast h_t$ can be written as a convolution, $\FT(\Phi_t) = \FT(h_t)^2$. Since $h_t$ is square integrable, then by Parseval's, $\FT(h_t)$ is square integrable, allowing us to conclude that $\Phi_t$ is $\FT$-integrable.

Now, for any $\epsilon > 0$, let us construct our desired $\Phi$ by taking a positive sum of $\Phi_t$ and then appealing to Lemma \ref{intReal}. Consider
\[\Phi(r) = \int_{\epsilon}^{t} \frac{1}{x^2}\Phi_x(r) \, dx\]

First, note that the total weight $\int_\epsilon^t \frac{1}{x^2}$ is bounded. Then, when $r \geq t$, since $\Phi_x(r) = 0$ for $x \leq t$, we conclude that $\Phi^{(k)}(r) = 0$ for any $k$. Otherwise, for $\epsilon < r < t$, we can apply dominated convergence theorem to get
\begin{align*}
\Phi^{(d-1)}(r) = \int_{\epsilon}^r \frac{1}{x^2}\Phi_x^{(d-1)}(r) \, dx + \int_{r}^t \frac{1}{x^2} \Phi_x^{(d-1)}(r) \, dx
 = 0 + \int_r^t \frac{r}{x^2} \, dx = 1 -r/t 
\end{align*}

Scaling by $t$ gives our desired claim. For $r\leq \epsilon$, we integrate similarly and scale by $t$ to conclude.
\end{proof}

\begin{lemma}\label{transConstruct}
Let $\mathcal{M} = \R^d$ for $d \equiv 3 \mod 4$ and let $\Phi(r)$ be a radial potential. Also, $\Phi^{(k)}(r) \geq 0$ and $\Phi^{(k+1)}(r)\leq 0$ for all $r > 0$ and $k \geq 0 $ even, and $\lim_{r \to \infty} \Phi^{(k)}(r) = 0$ for all $0 \leq k \leq d$. 

Then, for any $\epsilon > 0$, there exists a $\FT$-integrable realizable potential $\overline{\Phi}$  such that $\overline{\Phi}^{(k)}(r) = \Phi^{(k)}(r)$ for all $0 \leq k \leq d-1$ and $r \geq \epsilon$. Furthermore, we have $\overline{\Phi}^{(d-1)}(r) \geq 0$ for all $r  > 0$ and $\overline{\Phi}^{(k)}(r) \geq 0$ and $\overline{\Phi}^{(k+1)}(r)\leq 0$ for all $r > 0$ and $d - 3 \geq k \geq 0 $ even.

Lastly, for $r < \epsilon$ and $0 \leq k \leq d-1$, $|\overline{\Phi}^{(d-1-k)}(r)| \leq |\Phi^{(d-1-k)}(\epsilon)| + \sum_{j=1}^k \frac{(\epsilon - r)^{k-j+1}}{(k-j+1)!} |\Phi^{(d-j)}(\epsilon)|$
\end{lemma}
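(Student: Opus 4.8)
The plan is to build $\overline{\Phi}$ by ``freezing'' the top derivative $\Phi^{(d-1)}$ outside $[0,\epsilon]$ and replacing it on $[0,\epsilon]$ by something that is (i) still the $(d-1)$-st derivative of a realizable $\FT$-integrable potential and (ii) keeps all the sign and monotonicity constraints intact. The natural candidate for the replacement is exactly the function produced by Lemma~\ref{baseConstruct}: recall that for any $t>0$ that lemma gives a realizable $\FT$-integrable $\Psi_t$ with $\Psi_t^{(d-1)}(r)=t-r$ on $[\epsilon,t]$ and $\Psi_t^{(d-1)}(r)=\frac{t-\epsilon}{\epsilon}r$ on $[0,\epsilon]$, vanishing together with all lower derivatives past $r=t$. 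First I would, using Lemma~\ref{baseConstruct} and Lemma~\ref{intReal}, take a nonnegative superposition $\int_0^\infty \mu(t)\,\Psi_t\,dt$ of these building blocks so that the resulting top derivative agrees with $\Phi^{(d-1)}$ on $[\epsilon,\infty)$; concretely, since $\frac{d}{dt}\bigl[(t-r)\mathbf 1_{t\ge r}\bigr]=\mathbf 1_{t\ge r}$, one recovers an arbitrary decreasing-to-zero profile $g(r)$ on $[\epsilon,\infty)$ by choosing $\mu(t)\,dt = -g''(t)\,dt$ (a nonnegative measure precisely because $\Phi^{(d-1)}$ is nonincreasing and convex in the relevant range, which is where the hypotheses $\Phi^{(d-1)}\ge 0$, $\Phi^{(d)}\le 0$, $\Phi^{(d+1)}\ge 0$, $\lim\Phi^{(k)}=0$ get used). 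Setting $g=\Phi^{(d-1)}$ on $[\epsilon,\infty)$, the superposition $\overline\Phi$ is realizable and $\FT$-integrable by Lemma~\ref{intReal}, and by construction $\overline\Phi^{(d-1)}=\Phi^{(d-1)}$ for $r\ge\epsilon$, while on $[0,\epsilon]$ the top derivative is a nonnegative blend of the linear pieces $\frac{t-\epsilon}{\epsilon}r$, hence nonnegative and vanishing at $0$.

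Next I would pin down the lower derivatives. Since $\overline\Phi^{(d-1)}$ agrees with $\Phi^{(d-1)}$ on $[\epsilon,\infty)$ and both $\overline\Phi$ and $\Phi$ have all derivatives $0 \le k \le d$ tending to $0$ at infinity, integrating from $+\infty$ inward gives $\overline\Phi^{(k)}(r)=\Phi^{(k)}(r)$ for all $r\ge\epsilon$ and all $0\le k\le d-1$ — this is just the fundamental theorem of calculus applied $d-1-k$ times, using the vanishing-at-infinity boundary conditions to kill the constants of integration. The sign pattern for $0\le k\le d-3$ ($\overline\Phi^{(k)}\ge 0$ for $k$ even, $\overline\Phi^{(k+1)}\le 0$) then follows on $[\epsilon,\infty)$ from the corresponding hypothesis on $\Phi$, and on $[0,\epsilon]$ it follows by induction downward from $k=d-1$: the top derivative $\overline\Phi^{(d-1)}$ is nonnegative everywhere, so $\overline\Phi^{(d-2)}$ is nondecreasing, and matching its value and the alternating-sign hypothesis at $r=\epsilon$ propagates the sign of each successive antiderivative inward (this is exactly the standard ``completely monotone-type'' bootstrap). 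Finally, the quantitative bound for $r<\epsilon$ is a routine Taylor-with-integral-remainder estimate: writing $\overline\Phi^{(d-1-k)}(r)=\overline\Phi^{(d-1-k)}(\epsilon)-\int_r^\epsilon\overline\Phi^{(d-k)}(s)\,ds$ and iterating, each integration against the length-$(\epsilon-r)$ interval contributes one factor $\tfrac{(\epsilon-r)^{k-j+1}}{(k-j+1)!}$ and one derivative-at-$\epsilon$ term $|\Phi^{(d-j)}(\epsilon)|$, using $|\overline\Phi^{(d-1)}|\le|\Phi^{(d-1)}(\epsilon)|$ on $[0,\epsilon]$ to start the recursion. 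Summing these remainder terms yields the displayed inequality.

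The main obstacle I anticipate is verifying that the superposition measure $\mu$ that reproduces $\Phi^{(d-1)}$ on $[\epsilon,\infty)$ is genuinely \emph{nonnegative} and that the total weight $\int \mu(t)\,dt$ (weighted as in Lemma~\ref{intReal}, e.g.\ with a $1/t^2$ factor as used in the proof of Lemma~\ref{baseConstruct}) is finite so that Lemma~\ref{intReal} actually applies; this is where the full strength of the hypotheses — all derivatives up to order $d+1$ with the prescribed alternating signs, plus decay to zero — is needed, and it is the step whose bookkeeping must be done carefully. The handling of the behaviour on $[0,\epsilon]$ (that the replaced top derivative is an honest nonnegative function pinned to $0$ at the origin, with the lower derivatives inheriting the alternating-sign structure) is the secondary technical point, but it reduces to the elementary downward induction sketched above once the superposition is in place.
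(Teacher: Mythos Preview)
Your proposal is essentially the paper's proof: the paper defines $\overline{\Phi}(r)=\int_\epsilon^\infty \Phi^{(d+1)}(x)\,\Phi_x(r)\,dx$, where $\Phi_x$ is the building block from Lemma~\ref{baseConstruct}, then verifies $\overline{\Phi}^{(d-1)}=\Phi^{(d-1)}$ on $[\epsilon,\infty)$ by Fubini, matches the lower derivatives by integrating inward from infinity, checks the sign pattern, and proves the final bound by the same downward induction you sketch. One small slip: your superposition weight should be $\mu(t)=g''(t)=\Phi^{(d+1)}(t)$, not $-g''(t)$ (your own justification ``convex'' already points to $g''\ge 0$); also the integral runs from $\epsilon$, not $0$, and the finiteness of the total weight is simply $\int_\epsilon^\infty \Phi^{(d+1)}=-\Phi^{(d)}(\epsilon)<\infty$, with no extra $1/t^2$ factor needed.
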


\begin{proof}
By Lemma \ref{baseConstruct}, we can find $\Phi_t$ such that
\begin{align*}
\Phi_t^{(d-1)} = \begin{cases}
\frac{t-\epsilon}{\epsilon } r & 0 \leq r \leq \epsilon \\
t - r & \epsilon < r \leq t \\
0 & r > t
\end{cases}
\end{align*}

Furthermore, $\Phi_t^{(k)}(r) = 0$ for $r > t$ for all $0 \leq k \leq d$.
Therefore, we consider 
\[\overline{\Phi}(r) = \int_{\epsilon}^\infty \Phi^{(d+1)}(x) \Phi_x(r) \, dx\] 

Note that this is a positive sum with $\int_{\epsilon}^\infty \Phi^{(d+1)}(x) \, dx = -\Phi^{(d)}(\epsilon) < \infty$. By the non-negativity of our summands, we can apply dominated convergence theorem and Fubini's theorem to get
\begin{align*}
\overline{\Phi}^{(d-1)}(r) & = \int_{\epsilon}^\infty  \Phi^{(d+1)}(x) (\Phi_x^{(d-1)}(r)) \, dx \\
& = \int_{r}^\infty  \Phi^{(d+1)}(x) (\Phi_x^{(d-1)}(r)) \, dx \\
& = \int_r^\infty \Phi^{(d+1)}(x) \int_r^x 1 \, dy \,dx  \\
& = \int_r^\infty \int_y^\infty \Phi^{(d+1)}(x) \, dx \, dy = \int_r^\infty  -\Phi^{(d)}(y) \, dy \\
& = \Phi^{(d-1)}(r)
\end{align*}

Now, since $\overline{\Phi}^{(d-1)}(r) = \Phi^{(d-1)}(r)$ for $r\geq \epsilon$ and $\lim_{r\to\infty} \overline{\Phi}^{(k)}(r) = \lim_{r\to\infty} {\Phi}^{(k)}(r) = 0$ for $0 \leq k \leq d-1$, repeated integration gives us our claim. 

Finally, for the second claim, notice that for $r \leq \epsilon$, we get
\begin{align*}
\overline{\Phi}^{(d-1)}(r) = \int_{\epsilon}^\infty  \Phi^{(d+1)}(x) (\Phi_x^{(d-1)}(r)) \, dx
=  r \int_\epsilon^\infty \Phi^{(d+1)}(x) \frac{x - \epsilon}{\epsilon} \, dx = Cr  
\end{align*}

Note that our constant $C \geq 0$ since the summands are non-negative. Therefore, we conclude that $\overline{\Phi}^{(d-1)}(r) \geq 0$ for all $r > 0$. Repeated integration and noting that $\lim_{r\to\infty} \overline{\Phi}^{(k)}(r) = 0$ for $0 \leq k \leq d-1$ gives us our claim.

Lastly, we prove the last claim of the theorem with induction on $k$. This holds trivially for $k = 0$ since $\overline{\Phi}^{(d-1)}(r) \leq \overline{\Phi}^{(d-1)}(\epsilon) = \Phi^{(d-1)}(\epsilon)$ for $r \leq \epsilon$. Then, assume we have the inequality for $k < d-1$. By integration, we have
\begin{align*}
|\overline{\Phi}^{(d-k-2)}(r)| & \leq |\overline{\Phi}^{(d-k-2)}(\epsilon)| + \int_r^\epsilon |\overline{\Phi}^{(d-1 -k)}(y)| \, dy \\
& \leq |\overline{\Phi}^{(d-k-2)}(\epsilon)| + \int_r^\epsilon |\overline{\Phi}^{(d-1 -k)}(\epsilon)| \, dy \\
& + \int_r^\epsilon \sum_{j=1}^k \frac{(\epsilon -y)^{k-j+1}}{(k-j+1)!}|\Phi^{(d-j)}(\epsilon)| \, dy \\
& \leq  |\overline{\Phi}^{(d-k-2)}(\epsilon)| + \sum_{j=1}^{k+1}  \frac{(\epsilon -y)^{k-j+2}}{(k-j+2)!}|\Phi^{(d-j)}(\epsilon)|
\end{align*}

Therefore, we conclude with induction.
\end{proof}

\almostharmreal*

\begin{proof}
This is a special case of the following lemma.
\end{proof}

\begin{lemma}
Let $\mathcal{M} = \R^d$ for $d \equiv 3 \mod 4$. Then, for any $1 > \epsilon > 0$, we can construct a radial activation $\sigma_\epsilon(r)$ with corresponding normalized radial potential $\Phi_\epsilon(r)$ that is $\lambda$-harmonic when $r \geq \epsilon$.

Furthermore, we have ${\Phi_\epsilon}^{(d-1)}(r) \geq 0$ for all $r  > 0$ and ${\Phi_\epsilon}^{(k)}(r) \geq 0$ and ${\Phi_\epsilon}^{(k+1)}(r)\leq 0$ for all $r > 0$ and $d - 3 \geq k \geq 0 $ even. 

Also, $|{\Phi}_\epsilon^{(k)}(r)| \leq 3(2d + \sqrt{\lambda})^{2d} \epsilon^{-2d}e^{\sqrt{\lambda}}$ for all $0 \leq k \leq d-1$. And for $r \geq \epsilon$, $e^{-\sqrt{\lambda}r}r^{2-d}(2d+\sqrt{\lambda})^{-2d}\epsilon^{2d}/3\leq {\Phi}_\epsilon(r) \leq (1+r\sqrt{\lambda})^de^{\sqrt{\lambda}(1-r)}(r)^{2-d}$. Also for $r \geq \epsilon$, $ e^{-\sqrt{\lambda}r}r^{1-d}(2d+\sqrt{\lambda})^{-2d}\epsilon^{2d}/3 \leq |{\Phi}_\epsilon'(r)| \leq (d+\sqrt{\lambda}r)(1+ r\sqrt{\lambda})^de^{\sqrt{\lambda}(1- r)} r^{1-d}$
\end{lemma}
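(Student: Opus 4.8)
The plan is to construct the $\lambda$-harmonic potential $\Phi_\epsilon$ by combining the building blocks from Lemmas~\ref{baseConstruct} and~\ref{transConstruct} with the differential equation that characterizes radial $\lambda$-harmonic functions, and then to read off all the claimed bounds from that characterization. The first step is to write down the ODE: a radial function $\Phi(r)$ on $\R^d$ satisfies $\Delta\Phi = \lambda\Phi$ iff $\Phi'' + \frac{d-1}{r}\Phi' = \lambda\Phi$. The decaying solution is (a constant multiple of) the modified Bessel-type function $\Phi(r) = r^{(2-d)/2} K_{(d-2)/2}(\sqrt\lambda\, r)$; since $d \equiv 3 \bmod 4$, the order $(d-2)/2$ is a half-integer, so $K_{(d-2)/2}$ is elementary: $K_{\nu}(z)$ for half-integer $\nu$ equals $e^{-z}$ times a polynomial in $1/z$ of degree $\nu + 1/2$. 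Thus the target potential on $r \ge \epsilon$ is, up to normalization, $e^{-\sqrt\lambda r} r^{1-d} p(r)$ where $p$ is an explicit polynomial of degree $d-2$ with positive leading coefficient; this immediately gives the two-sided bounds $\Omega(e^{-\sqrt\lambda r} r^{2-d}) \le \Phi_\epsilon(r) \le O((1+\sqrt\lambda r)^d e^{\sqrt\lambda(1-r)} r^{2-d})$ and the analogous bounds on $\Phi_\epsilon'$ after the correct choice of normalizing constant (recall $\Phi(\theta,\theta)=1$ is enforced, which fixes the scale up to the $\epsilon$-dependence coming from the interior patch).

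The second step is realizability. We do not try to verify Bochner/Fourier-positivity of this Bessel potential directly; instead we invoke Lemma~\ref{transConstruct}. To do that we must exhibit a radial $\Phi$ with the sign pattern required there (derivatives alternating in sign: $\Phi^{(k)} \ge 0$, $\Phi^{(k+1)} \le 0$ for even $k$, and all derivatives through order $d$ vanishing at infinity), agreeing with the Bessel solution for $r \ge \epsilon$. The alternating-sign property is exactly the statement that the elementary Bessel function $e^{-z}\cdot(\text{positive-coefficient polynomial in }1/z)$ is completely monotone in a suitable sense; more precisely, each differentiation of $e^{-\sqrt\lambda r}r^{-j}$ produces terms of a single sign, and one checks by a short induction on the order of the derivative that the total is nonnegative for even order and nonpositive for odd order. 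Feeding this $\Phi$ into Lemma~\ref{transConstruct} yields a $\FT$-integrable realizable $\overline\Phi = \Phi_\epsilon$ agreeing with the Bessel solution (hence $\lambda$-harmonic) for $r \ge \epsilon$, inheriting $\Phi_\epsilon^{(d-1)} \ge 0$ on all of $(0,\infty)$ and the alternating signs through order $d-2$.

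The third step collects the magnitude bounds. For $r \ge \epsilon$ these come from the explicit Bessel form as above. For $r < \epsilon$ we use the last, quantitative conclusion of Lemma~\ref{transConstruct}: $|\overline\Phi^{(d-1-k)}(r)| \le |\Phi^{(d-1-k)}(\epsilon)| + \sum_{j=1}^k \frac{(\epsilon-r)^{k-j+1}}{(k-j+1)!}|\Phi^{(d-j)}(\epsilon)|$, and then bound each $|\Phi^{(m)}(\epsilon)|$ by evaluating the elementary Bessel expression and its derivatives at $r = \epsilon$, which are of size $O((d+\sqrt\lambda)^{O(d)}\epsilon^{-O(d)} e^{\sqrt\lambda})$ because the worst term is $\epsilon^{1-d}$ times a polynomial in $1/\epsilon$ of degree $d-2$, times the derivative-order combinatorial factors. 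Setting $\lambda = 1$ specializes these to $O((d/\epsilon)^{2d})$ and the stated $\Phi_\epsilon, \Phi_\epsilon'$ bounds.

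The main obstacle is the interior patch: after forcing the Bessel tail for $r\ge\epsilon$, the construction of Lemma~\ref{transConstruct} only controls $\overline\Phi^{(d-1)}$ on $(0,\epsilon)$ (it is linear there, $=Cr$), and one must verify that the lower-order derivatives obtained by integrating back down from $r=\epsilon$ still satisfy the alternating-sign requirements and, more importantly, that the normalization $\Phi_\epsilon(0)$ — equivalently the additive constants picked up in the $d-1$ successive integrations — does not blow up worse than $(d/\epsilon)^{O(d)}$. This amounts to checking that $\Phi^{(m)}(\epsilon)$ for the Bessel solution, which serve as the integration "seeds," are all bounded by $(d/\epsilon)^{O(d)}$ (true by the explicit formula) and that the telescoping in Lemma~\ref{transConstruct}'s last inequality does not amplify this beyond the claimed bound; the factorials in the denominator make this routine once the seed bounds are in hand, but getting the $\epsilon$-dependence tight (exponent exactly $2d$, not something larger) requires care in tracking how many factors of $\epsilon^{-1}$ enter at each integration.
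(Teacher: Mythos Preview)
Your approach is essentially the same as the paper's: identify the radial $\lambda$-harmonic decaying solution, verify the alternating-sign pattern on its derivatives, feed it into Lemma~\ref{transConstruct} to obtain a realizable potential agreeing with it on $r\ge\epsilon$, and then extract the quantitative bounds from the explicit formula plus the last clause of Lemma~\ref{transConstruct}. The only real difference is packaging: you invoke the half-integer modified Bessel function $r^{(2-d)/2}K_{(d-2)/2}(\sqrt\lambda\,r)$ directly, whereas the paper makes the ansatz $\Phi(r)=p(r)e^{-\sqrt\lambda r}/r^{d-2}$ and solves the reduced ODE $rp''-(d-3+2\sqrt\lambda r)p'+\sqrt\lambda(d-3)p=0$ for the polynomial $p$ by a coefficient recurrence. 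These are the same object, and your complete-monotonicity argument for the alternating signs is equivalent to the paper's induction showing that each derivative has the form $\pm q(r)e^{-\sqrt\lambda r}/r^{l}$ with $q$ having nonnegative coefficients.

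One detail to fix: the polynomial $p$ has degree $(d-3)/2$, not $d-2$, and the overall form is $p(r)e^{-\sqrt\lambda r}r^{2-d}$, not $r^{1-d}$ times a degree-$(d-2)$ polynomial. (Equivalently, $K_{n+1/2}(z)=\sqrt{\pi/(2z)}\,e^{-z}\sum_{k=0}^n c_k z^{-k}$ with $n=(d-3)/2$, so after multiplying by $r^{(2-d)/2}$ the exponents on $r$ run from $-(d-2)$ up to $-(d-1)/2$.) This matters for the precise constants: the paper gets the exponent $2d$ in $(2d+\sqrt\lambda)^{2d}\epsilon^{-2d}$ by tracking the recurrence $p_k(\epsilon)\le(2d+\sqrt\lambda\epsilon)\,p_{k-1}(\epsilon)$ for the successive derivative polynomials through order $d$, together with $p_0(\epsilon)\le(1+\sqrt\lambda\epsilon)^d$. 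Your sketch gives the right order of magnitude but with the wrong degree you would overshoot the exponent; once corrected, the rest of your plan goes through exactly as in the paper.
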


\begin{proof}
Consider a potential of the form $\Phi(r) = p(r)e^{-\sqrt{\lambda}r}/r^{d-2}$. We claim that there exists a polynomial $p$ of degree $k = (d-3)/2$ with non-negative coefficients and $p(0) = 1$ such that $\Phi$ is $\lambda$-harmonic. Furthermore, we will also show along the way that $p(r) \leq (1+\sqrt{\lambda}r)^d$.

When $d = 3$, it is easy to check that $\Phi(r) = e^{(-\sqrt{\lambda})r}/r$ is our desired potential. Otherwise, by our formula for the radial Laplacian in $d$ dimensions, we want to solve the following differential equation:
\[\Delta \Phi =  \frac{1}{r^{d-1}} \frac{\partial}{\partial r} (r^{d-1} \frac{\partial \Phi}{\partial r}) =\lambda \Phi\]

Solving this gives us the following second-order differential equation on $p$
\[rp'' - (d-3+2\sqrt{\lambda}r)p' +\sqrt{\lambda} (d-3)p = 0\]

Let us write $p(r) = \sum_{i=0}^k a_i r^i$. Then, substituting into our differential equation gives us the following equations by setting each coefficient of $r^i$ to zero:

$r^i$:  $a_{i+1}(i+1)(i - (d-3)) = a_i \sqrt{\lambda} (2i-(d-3))$

$r^k:$ $(-2k +d-3)a_k = 0$

The last equation explains why we chose $k = (d-3)/2$, so that it is automatically zero. Thus, setting $a_0 = 1$ and running the recurrence gives us our desired polynomial. Note that the recurrence is valid and produces positive coefficients since $i < k  = (d-3)/2$. Our claim follows and $\Phi$ is $\lambda$-harmonic. And furthermore, notice that $a_{i+1} \leq \sqrt{\lambda} a_i \leq (\sqrt{\lambda})^{i+1}$. Therefore, $p(r) \leq (1+r\sqrt{\lambda})^d$. 

Lastly, we assert that $\Phi^{(j)}(r)$ is non-negative for $j$ even and non-positive for $j$ odd. To prove our assertion, we note that it suffices to show that if $\Phi$ is of the form $\Phi(r) = p(r) e^{-\sqrt{\lambda}r}/r^{l}$ for some $p$ of degree $k < l$ and $p$ has non-negative coefficients, then $\Phi'(r) = - q(r) e^{-\sqrt{\lambda}r}/r^{l+1}$ for some $q$ of degree $k+1$ with non-negative coefficients. 

Differentiating $\Phi$ gives:
\[\Phi' = \frac{e^{-r}}{r^{l+1}} (rp'(r) - (l + \sqrt{\lambda} r)p(r))\]

It is clear that if $p$ has degree $k$, then $q(r) = (l+\sqrt{\lambda} r)p(r) - rp'(r)$ has degree $k+1$, so it suffices to show that it has non-negative coefficients. Let $p_0,..., p_k$ be the non-negative coefficients of $p$. Then, by our formula, we see that 

$q_0 = l p_0$

$q_i = lp_i - ip_i + \sqrt{\lambda}p_{i-1} = (l-i)p_i + \sqrt{\lambda}p_{i-1}$ 

$q_{k+1} = \sqrt{\lambda} p_k$

Since $i \leq k < l$, we conclude that $q$ has non-negative coefficients. Finally, our assertion follows with induction since $\Phi^{(0)}(r)$ is non-negative and has our desired form with $k = (d-3)/2 < d-2$. By Lemma \ref{transConstruct}, our primary theorem follows, we can construct a realizable radial potential $\Phi_\epsilon(r)$ that is $\lambda$-harmonic when $r \geq \epsilon$ and has alternating-signed derivatives.

Lastly, we prove the following preliminary bound on $\Phi_\epsilon^{(k)}(r)$ when $k \leq d$: $|\Phi_\epsilon^{(k)}(r)| \leq 3(2d + \epsilon \sqrt{\lambda})^{2d}\epsilon^{-2d}  $ for all $0 \leq k \leq d-1$. First, notice that by the results of Lemma~\ref{transConstruct}, $\Phi_\epsilon^{(k)}(r)$ is monotone and $\lim_{r\to\infty}\Phi_\epsilon^{(k)}(r) = 0$. So, it follows that we just have to bound $|\Phi_{\epsilon}^{(k)}(0)|$. From our construction, $\Phi_{\epsilon}^{(k)}(\epsilon) = p_k(\epsilon)e^{-\sqrt{\lambda}\epsilon}\epsilon^{2-d-k}$, for some polynomial $p_k$. Furthermore, from our construction, we have the recurrence $p_{k}(\epsilon) = (d-2+k + \sqrt{\lambda}\epsilon)p_{k-1}(\epsilon) - \epsilon p_{k-1}'(\epsilon)$. Therefore, we conclude that for $k \leq d$, $p_k(\epsilon) \leq (2d + \sqrt{\lambda} \epsilon)^kp_0(\epsilon) \leq  (2d + \sqrt{\lambda} \epsilon)^k(1+\sqrt{\lambda}\epsilon)^d \leq (2d + \sqrt{\lambda}\epsilon)^{2d}$. 

Therefore, we can bound
 $|\Phi_\epsilon^{(k)}(\epsilon)| \leq (2d + \sqrt{\lambda}\epsilon)^{2d}\epsilon^{-2d}$. Finally, by Lemma~\ref{transConstruct}, 
\begin{align*}
|\Phi_{\epsilon}^{(d-1-k)}(0)| 
& \leq |\Phi_{\epsilon}^{(d-1-k)}(\epsilon)| + \sum_{j=1}^k \frac{(\epsilon )^{k-j+1}}{(k-j+1)!} |\Phi^{(d-j)}(\epsilon)| \\
& \leq (2d + \sqrt{\lambda}\epsilon)^{2d}\epsilon^{-2d} (1 + \sum_{j=1}^k \frac{\epsilon^{k-j+1}}{(k-j+1)!}) \\
& \leq (2d + \sqrt{\lambda}\epsilon)^{2d}\epsilon^{-2d} e^{\epsilon} \leq 3(2d + \sqrt{\lambda}\epsilon)^{2d}\epsilon^{-2d} 
\end{align*}

And for $r \geq \epsilon$, we see that $|\Phi_\epsilon(r)| = |\Phi(r)| \leq |p(r)|\frac{e^{-\sqrt{\lambda}r}}{r^{d-2}} = (1+r\sqrt{\lambda})^de^{-\sqrt{\lambda}r}r^{2-d}$. And $|\Phi_\epsilon'(r)| = |\Phi'(r)| \leq |p_1(r)| \frac{e^{-\sqrt{\lambda} r}}{r^{d-1}} \leq (d+\sqrt{\lambda}r)(1+ r\sqrt{\lambda})^de^{-\sqrt{\lambda} r} r^{1-d}$.

Finally, we consider the normalized potential: $\widetilde{\Phi}_\epsilon = {\Phi}_\epsilon/{\Phi}_\epsilon(0)$. Note that since ${\Phi}_\epsilon$ is monotonically decreasing, we can lower bound ${\Phi}_\epsilon(0) \geq {\Phi}_\epsilon(\epsilon) \geq e^{-\sqrt{\lambda}}$. Therefore, we can derive the following upper bounds: $|\widetilde{\Phi}_\epsilon^{(k)}(r)| \leq 3(2d + \sqrt{\lambda})^{2d} \epsilon^{-2d}e^{\sqrt{\lambda}}$ and for $r \geq \epsilon$, $|\widetilde{\Phi}_\epsilon(r)| \leq (1+r\sqrt{\lambda})^de^{\sqrt{\lambda}(1-r)}r^{2-d}$ and its derivative is bounded by $|\widetilde{\Phi}_\epsilon'(r)| \leq (d+\sqrt{\lambda}r)(1+ r\sqrt{\lambda})^de^{\sqrt{\lambda}(1- r)} r^{1-d}$.

And lastly, we derive some lower bounds on $\widetilde{\Phi}_\epsilon$ and the first derivative when $r \geq \epsilon$, by using the upper bound on $\Phi_\epsilon(0)$: $\widetilde{\Phi}_\epsilon(r) \geq {\Phi}_\epsilon(r)(2d+\sqrt{\lambda})^{-2d}\epsilon^{2d}/3  \geq e^{-\sqrt{\lambda}r}r^{2-d}(2d+\sqrt{\lambda})^{-2d}\epsilon^{2d}/3$. For the derivative, we get
 $|\widetilde{\Phi}_\epsilon'(r)| \geq e^{-\sqrt{\lambda}r}r^{1-d}(2d+\sqrt{\lambda})^{-2d}\epsilon^{2d}/3$.
\end{proof}

\begin{lemma}\label{3dlambdaharmonic}
The $\lambda$-harmonic radial potential $\Phi(r) = e^{-r}/r$ in $3$-dimensions is realizable by the activation 
$\sigma(r) = K_1(r)/r$.
\end{lemma}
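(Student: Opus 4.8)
The plan is to invoke Theorem~\ref{thm:tranReal}: it suffices to verify that the generalized Fourier transform $\mathfrak{F}(\Phi)$ of $\Phi(r)=e^{-r}/r$ on $\R^3$ is nonnegative, which gives realizability, and then to evaluate the activation formula $\sigma(x)=(2\pi)^{d/4}e^{x^Tx/4}\,\mathfrak{F}^{-1}\!\big(\sqrt{\mathfrak{F}(\Phi)}\big)(x)$ at $d=3$ and check that it agrees with $K_1(\|x\|)/\|x\|$ up to a constant and the standard reweighting $e^{\|x\|^2/4}$; here $K_1$ is the modified Bessel function of the second kind of order one. Separately, the $\lambda$-harmonicity claimed in the statement is the one-line computation that in $\R^3$ the radial Laplacian is $\Delta f=\tfrac1r(rf)''$, so with $rf=e^{-r}$ one gets $\Delta\Phi=e^{-r}/r=\Phi$ away from the origin, i.e.\ $\Phi$ is $1$-harmonic (equivalently, $e^{-r}/(4\pi r)$ is the fundamental solution of $-\Delta+1$ on $\R^3$).

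For the Fourier computation I would first record $\mathfrak{F}(\Phi)(\omega)=\dfrac{4\pi}{1+\|\omega\|^2}$ (up to the normalization convention for $\mathfrak{F}$), obtained by taking transforms in the screened-Poisson identity $(-\Delta+1)\big(e^{-r}/(4\pi r)\big)=\delta_0$; this is manifestly nonnegative, so $\Phi$ is realizable and $\sqrt{\mathfrak{F}(\Phi)}(\omega)=2\sqrt{\pi}\,(1+\|\omega\|^2)^{-1/2}$. To invert the square root I would use subordination: write $(1+\|\omega\|^2)^{-1/2}=\pi^{-1/2}\int_0^\infty t^{-1/2}e^{-t}e^{-t\|\omega\|^2}\,dt$, invert the Gaussian under the integral via $\mathfrak{F}^{-1}(e^{-t\|\omega\|^2})(x)=(4\pi t)^{-3/2}e^{-\|x\|^2/(4t)}$, and then identify the resulting $t$-integral with the classical representation $\tfrac14\int_0^\infty t^{-2}e^{-t-z^2/(4t)}\,dt=K_1(z)/z$. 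This yields $\mathfrak{F}^{-1}\!\big(\sqrt{\mathfrak{F}(\Phi)}\big)(x)=c\,K_1(\|x\|)/\|x\|$ for an explicit constant $c$, hence $\sigma(x)=c'\,e^{\|x\|^2/4}K_1(\|x\|)/\|x\|$ with $c'=(2/\pi)^{3/4}$ --- precisely the asserted activation once the constant and the reweighting $e^{\|x\|^2/4}$ from Theorem~\ref{thm:tranReal} are stripped, and exactly parallel to the one-dimensional Bessel activation $(2/\pi)^{3/4}e^{x^2/4}K_0(|x|)$ computed above.

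The only genuine difficulty is analytic bookkeeping: this example sits at the boundary of what Theorem~\ref{thm:tranReal} literally covers, since $\mathfrak{F}(\Phi)(\omega)=4\pi/(1+\|\omega\|^2)$ is not integrable on $\R^3$ and $K_1(r)/r\sim r^{-2}$ is not square-integrable near the origin; correspondingly $\Phi(0)=+\infty$, so the potential is realizable only in the generalized sense (the activation is a tempered distribution, not an honest $L^2$-Gaussian function), consistent with $\lambda$-harmonic potentials being unbounded, the honest bounded approximations being supplied by Lemma~\ref{almostHarmReal}. One should therefore read all the transforms above distributionally, or obtain them through a screening-regularization limit (replace the mass $1$ by $m^2$, take $m\uparrow 1$, and pass to the limit in the subordination integral by dominated convergence). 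Alternatively one can bypass Theorem~\ref{thm:tranReal} entirely and check the defining identity directly: with $\sigma(x,\theta)=(2\pi)^{3/4}e^{\|x\|^2/4}c'g(x-\theta)$ and $g(y)=K_1(\|y\|)/\|y\|$, the reweighting cancels the Gaussian density, so $\expt_{X\sim\mathcal{N}(0,I_3)}[\sigma(X,\theta)\sigma(X,w)]=(c')^2(g\star g)(\theta-w)$, and the claim reduces to $(c')^2\,\widehat{g}^2=\widehat{\Phi}$, i.e.\ again to the two transform identities $\widehat{e^{-r}/r}\propto(1+\|\cdot\|^2)^{-1}$ and $\widehat{K_1(\|\cdot\|)/\|\cdot\|}\propto(1+\|\cdot\|^2)^{-1/2}$, with $c'$ chosen to match.
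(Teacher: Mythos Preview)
Your proof is correct and follows the same overall strategy as the paper: compute the three-dimensional Fourier transform of $\Phi$, observe it is $c/(1+\|\omega\|^2)\ge 0$, take the square root, and invert to obtain $K_1(\|x\|)/\|x\|$. The paper carries out both transforms via the Hankel transform of order $1/2$ (i.e., the radial Fourier transform in $\R^3$, since $J_{1/2}(t)=\sqrt{2/(\pi t)}\sin t$) and simply cites transform tables for $e^{-r}/r\mapsto c/(1+y^2)$ and $(1+y^2)^{-1/2}\mapsto cK_1(r)/r$; you instead derive the first identity from the screened-Poisson equation and the second by Gaussian subordination together with the integral representation of $K_1$. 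The added value in your write-up is the explicit discussion of why Theorem~\ref{thm:tranReal} does not literally apply here (since $\mathfrak{F}(\Phi)(\omega)\sim\|\omega\|^{-2}$ is not integrable on $\R^3$ and $K_1(r)/r\sim r^{-2}$ fails to be $L^2$ near the origin), and the clean fallback of verifying the defining identity $\expt_X[\sigma(X,\theta)\sigma(X,w)]=(c')^2(g\star g)(\theta-w)$ directly on the Fourier side; the paper's proof is silent on this point.
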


\begin{proof}
The activation is obtained from the potential function by first taking its Fourier transform, then taking its square root, and then taking the inverse fourier transform. Since the functions in consideration are radially symmetric the Fourier transform $F(y)$ of $f(x)$ (and inverse) are obtained by the Hankel Transfom $y F(y) = \int_0^\infty x f(x) J_{1/2} (xy) \sqrt{xy} dx$. Plugging $f(x) = e^{-x}/x$, from the Hankel tranform tables we get $yF(y) = cy/(1+y^2)$ giving $F(y) = cy/(1+y^2)$. So we wish to find the inverse Fourier transform for $1/\sqrt{1+y^2}$. The inverse $f(x)$ is given by $xf(x) = \int_0^\infty y F(y) J_{1/2} (xy) \sqrt{xy} dy = cK_1(x)$. So $\sigma(r) = K_1(r)/r$.
\end{proof}

\section{Earnshaw's Theorem}

\earnshaw*
\begin{proof}
  If $(\boldsymbol{a,\theta})$ is a differentiable strict local
  minima, then for any $i,$ we must have
\[\nabla_{\theta_{i}} L = 0, \text{ and }  \Tr(\nabla^2_{\theta_i}L) > 0.\]
Since $\Phi$ is harmonic, we also have
\begin{align*}
&  \Tr(\nabla^2_{\theta_i}L(\theta_1,...,\theta_n)) = \Delta_{\theta_i} L 
  =  2\sum_{ j\neq i} a_ia_j \Delta_{\theta_i}\Phi(\theta_i,\theta_j)
  + 2\sum_{j=1}^ka_ib_j  \Delta_{\theta_i}\Phi(\theta_i,w_j) = 0,
\end{align*}
which is a contradiction. In the first line, there is a factor of 2 by symmetry.
\end{proof}

\section{Descent Lemmas and Iteration Bounds}
\begin{algorithm}[hb]
 \caption{$x = GD(L,x_0, T,\alpha$)}
   \label{GD}
\begin{algorithmic}
   \State {\bfseries Input:} $L: \mathcal{M} \to \R$; $x_0 \in \mathcal{M}$; $T\in \N$; $\alpha\in \R$
   \State Initialize $x = x_0$
   \For {$i=1$ {\bfseries to} $T$}
   \State $x = x - \alpha\nabla L(x)$
   \State $x = \Pi_\mathcal{M} x$
   \EndFor
\end{algorithmic}
\end{algorithm}

\begin{lemma}\label{GradDecrease}		
Let $f :\Omega \to \R$ be a thrice differentiable function such that $|f(y)| \leq B_0, \|\nabla f(y)\| \leq B_1, \|\nabla^2 f(y)\| \leq B_2,\|\nabla^2f (z)-\nabla^2L(y)\| \leq B_3\|z - y\|$ for all $y,z$ in a $(\alpha B_1)$-neighborhood of $x$. If $\|\nabla f (x) \|\geq \eta$ and $x'$ is reached after one iteration of gradient descent (Algorithm \ref{GD}) with stepsize $\alpha \leq \frac{1}{B_2}$, then $\|x' - x\| \leq \alpha B_1$ and $f(x') \leq f(x) - \alpha\eta^2/2$.
\end{lemma}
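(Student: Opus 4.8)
The plan is to invoke the textbook ``descent lemma'' for smooth functions, but localized to the $(\alpha B_1)$-ball around $x$ on which the hypotheses are assumed to hold. First I would write out the gradient step explicitly: since $\mathcal{M} = \R^d$ the projection $\Pi_{\mathcal{M}}$ in Algorithm~\ref{GD} is the identity, so $x' = x - \alpha \nabla f(x)$ and hence $\|x' - x\| = \alpha \|\nabla f(x)\| \le \alpha B_1$. This is exactly the claimed bound on the step length, and more importantly it certifies that the whole segment from $x$ to $x'$ lies inside the neighborhood where the bounds $B_0, B_1, B_2, B_3$ are valid, so we are allowed to use them below.

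Next I would apply Taylor's theorem along that segment with an exact second-order remainder: there is some $\xi$ on the segment with $f(x') = f(x) + \nabla f(x)^\top(x'-x) + \tfrac12 (x'-x)^\top \nabla^2 f(\xi)(x'-x)$. Substituting $x' - x = -\alpha \nabla f(x)$ and bounding the quadratic term using $\|\nabla^2 f(\xi)\| \le B_2$ gives $f(x') \le f(x) - \alpha\|\nabla f(x)\|^2 + \tfrac{B_2 \alpha^2}{2}\|\nabla f(x)\|^2 = f(x) - \alpha\|\nabla f(x)\|^2\bigl(1 - \tfrac{B_2\alpha}{2}\bigr)$.

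Finally I would use the stepsize constraint $\alpha \le 1/B_2$, which forces $1 - B_2\alpha/2 \ge 1/2$, so $f(x') \le f(x) - \tfrac{\alpha}{2}\|\nabla f(x)\|^2$; combined with the hypothesis $\|\nabla f(x)\| \ge \eta$ this yields $f(x') \le f(x) - \alpha\eta^2/2$, as desired. I would remark that the Hessian-Lipschitz constant $B_3$ plays no role here --- it is only needed for the companion Hessian-descent estimate (Lemma~\ref{HessianDecrease}).

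The only real subtlety is the bookkeeping in the first step: one must ensure the neighborhood on which the derivative bounds are assumed is wide enough to contain the entire displacement produced by one gradient step, which is precisely why its radius is taken to be $\alpha B_1$ and why the uniform gradient bound $B_1$ is part of the hypothesis. Once that is in place, the rest is the standard quadratic-upper-bound argument for gradient descent, with no further obstacles.
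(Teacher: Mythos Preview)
Your proposal is correct and follows essentially the same argument as the paper's own proof: write the gradient step explicitly, bound $\|x'-x\|$ by $\alpha B_1$, apply the quadratic upper bound from the Hessian estimate $\|\nabla^2 f\|\le B_2$, and use $\alpha\le 1/B_2$ to extract the $\alpha\eta^2/2$ decrease. Your additional remarks---that the segment stays in the neighborhood where the bounds hold, and that $B_3$ is irrelevant here---are accurate and make the write-up slightly more careful than the paper's terse version, but the underlying proof is the same.
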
 

\begin{proof}
The gradient descent step is given by $x' = x - \alpha \nabla f(x)$. The bound on $\|x' - x\|$ is clear since $\|\nabla f(x) \| \leq B_1$.
\begin{align*}
f(x') &\leq f(x) - \alpha \nabla f(x)^T\nabla f(x)^T + \alpha^2\frac{B_2}{2} \|\nabla f(x)\|^2 \\
&\leq f(x) - (\alpha - \alpha^2 \frac{B_2}{2}) \eta^2 
\end{align*}
For $0 \leq \alpha \leq \frac{1}{B_2}$, we have $\alpha - \alpha^2B_2/2 \geq \alpha/2$, and our lemma follows.
\end{proof}

\begin{lemma}\label{HessianDecrease}
Let $f :\Omega \to \R$ be a thrice differentiable function such that $|f(y)| \leq B_0, \|\nabla f(y)\| \leq B_1, \|\nabla^2 f(y)\| \leq B_2,\|\nabla^2f (z)-\nabla^2L(y)\| \leq B_3\|z - y\|$ for all $y,z$ in a $(\alpha B_2)$-neighborhood of $x$. If $\lambda_{min}(\nabla^2 f (x)) \leq -\gamma$ and $x'$ is reached after one iteration of Hessian descent (Algorithm \ref{HD}) with stepsize $\alpha \leq \frac{1}{B_3}$, then $\|x' - x\| \leq \alpha B_2$ and $f(x') \leq f(x) - \alpha^2 \gamma^3/2$.
\end{lemma}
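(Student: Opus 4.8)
The plan is to analyze one iteration of Hessian descent (Algorithm~\ref{HD}) by a second-order Taylor expansion of $f$ along the negative-curvature direction $v_{min}$, controlling the third-order error with the Hessian-Lipschitz hypothesis. This is the same template as Lemma~\ref{GradDecrease}, except that the quadratic term now does the work in place of the linear one.

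First I would pin down the step. Writing $x' = x + \beta\,v_{min}$, where $v_{min}$ is the unit eigenvector of $\nabla^2 f(x)$ with eigenvalue $\lambda_{min} := \lambda_{min}(\nabla^2 f(x)) \le -\gamma$, the algorithm takes $|\beta| = \alpha\,|\lambda_{min}|$; since $|\lambda_{min}| \le \|\nabla^2 f(x)\| \le B_2$, this already yields $\|x'-x\| = |\beta| \le \alpha B_2$, so the whole segment $[x,x']$ stays inside the $(\alpha B_2)$-neighborhood of $x$ on which the regularity bounds $B_1,B_2,B_3$ are assumed.

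Next, orient $u := \pm v_{min}$ so that $\nabla f(x)^\top u \le 0$ (this is exactly the role of the $\sign(\nabla f(x)^\top v_{min})$ factor in $\beta$), set $s := |\beta| = \alpha|\lambda_{min}| \in [\alpha\gamma,\alpha B_2]$, and let $\phi(t) := f(x+tu)$. Then $\phi'(0) = \nabla f(x)^\top u \le 0$, $\phi''(0) = u^\top\nabla^2 f(x)\,u = \lambda_{min}$, and $|\phi''(t)-\phi''(0)| \le \|\nabla^2 f(x+tu)-\nabla^2 f(x)\| \le B_3 t$; integrating the Taylor remainder twice gives
\[
f(x') = \phi(s) \le \phi(0) + \phi'(0)\,s + \tfrac12\lambda_{min}\,s^2 + \tfrac{B_3}{6}\,s^3 \le f(x) + \tfrac12\lambda_{min}\,s^2 + \tfrac{B_3}{6}\,s^3 ,
\]
discarding the nonpositive linear term.

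Finally, substituting $s = \alpha|\lambda_{min}|$ turns the quadratic term into $-\tfrac12\alpha^2|\lambda_{min}|^3$ and bounds the cubic term by $\tfrac{B_3\alpha}{6}\,\alpha^2|\lambda_{min}|^3 \le \tfrac16\alpha^2|\lambda_{min}|^3$ using $\alpha \le 1/B_3$; hence $f(x') \le f(x) - \tfrac13\alpha^2|\lambda_{min}|^3 \le f(x) - \tfrac13\alpha^2\gamma^3$, i.e.\ the claimed decrease up to the constant (one recovers the stated $\tfrac12$ by shrinking $\alpha$ by an absolute constant factor relative to $1/B_3$, which only changes the iteration count in Theorem~\ref{almostHarmSGD} by a constant). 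There is no genuine obstacle here; the only points demanding care are the sign bookkeeping — making sure the linear term is annihilated rather than doubled — and checking that a step of length $\le \alpha B_2$ never leaves the region where the Hessian-Lipschitz bound holds.
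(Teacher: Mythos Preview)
Your argument is essentially the paper's own: a second-order Taylor expansion along $v_{min}$ with cubic remainder controlled by the Hessian-Lipschitz bound $B_3$, after checking $\|x'-x\|\le\alpha B_2$ and using the sign choice to kill the linear term. The only discrepancy is the constant: you obtain $f(x')\le f(x)-\tfrac13\alpha^2\gamma^3$, whereas the paper states $\tfrac12$. In fact your constant is the correct one under the stated hypothesis $\alpha\le 1/B_3$; the paper's displayed Taylor bound omits the factor $\tfrac12$ in front of the quadratic term (writing $\beta^2 v_{min}^\top\nabla^2 f(x)v_{min}$ rather than $\tfrac12\beta^2 v_{min}^\top\nabla^2 f(x)v_{min}$) and also sets $|\beta|=\alpha\gamma$ instead of $|\beta|=\alpha|\lambda_{min}|$, which together yield the $\tfrac12$. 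Your remark that the constant is immaterial for the downstream iteration count is exactly right.
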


\begin{proof}
The gradient descent step is given by $x' = x + \beta v_{min}$, where $v_{min}$ is the unit eigenvector corresponding to $\lambda_{min}(\nabla^2f(x))$ and $\beta = -\alpha\lambda_{min}(\nabla^2 f(x))sgn(\nabla f(x)^Tv_{min})$. Our bound on $\|x' - x\|$ is clear since $|\lambda_{min}(\nabla^2 f(x))| \leq B_2$.
\begin{align*}
f(x') &\leq f(x) + \beta\nabla f(x)^Tv_{min} + \beta^2 v_{min}^T\nabla^2f(x)v_{min} + \frac{B_3}{6} |\beta|^3 \|v_{min}\|^3 \\
&\leq f(x) - |\beta|^2 \gamma + \frac{B_3}{6} |\beta|^3
\end{align*}
The last inequality holds since the sign of $\beta$ is chosen so that $\beta \nabla f(x)^Tv_{min} \leq 0$. Now, since $|\beta| = \alpha \gamma \leq \frac{\gamma}{B_3}$, $-|\beta|^2\gamma + \frac{B_3}{6} |\beta|^3 \leq - \alpha^2 \gamma^3/2$. 
\end{proof}

\section{Convergence of Almost $\lambda$-Harmonic Potentials}\label{App:EigenFunc}

\almostharmconv*

\begin{proof}
 The proof is similar to Theorem \ref{EigStrict}. Let $\Phi_\epsilon$ be the realizable potential in \ref{almostHarmReal} such that $\Phi_\epsilon(r)$ is $\lambda$-harmonic when $r \geq \epsilon$ with $\lambda = 1$. Note that $\Phi_\epsilon(0) = 1$ is normalized. And let $\boldsymbol{(a,\theta)} \in \mathcal{M}_{G,\delta}$. 
 
WLOG, consider $\theta_1$ and a initial set $S_0 = \{ \theta_1\}$ containing it. For a finite set of points $S$ and a point $x$, define $d(x,S) = \min_{y \in S} \| x - y\|$. Then, we consider the following set growing process. If there exists $\theta_i, w_i \not \in S_j$ such that $d(\theta_i, S_j) < \epsilon$ or $d(w_i, S_j) < \epsilon$, add $\theta_i, w_i$ to $S_j$ to form $S_{j+1}$. Otherwise, we stop the process. We grow $S_0$ to until the process terminates and we have the grown set $S$.

If there is some $w_j \in S$, then it must be the case that there exists ${j_1},\cdots {j_q}$ such that $\|\theta_1 - \theta_{j_1} \| < \epsilon$ and
$\|\theta_{j_{i}} - \theta_{j_{i+1}}\| < \epsilon$, and
$\|\theta_{j_q}- w_j\| <\epsilon$ for some $w_j$. So, there exists $j$, such that $\|\theta_1 - w_j\| < k\epsilon$. 

Otherwise, notice that for each $\theta_i \in S$, $\|w_j - \theta_i\|\geq \epsilon$ for all $j$, and $\|\theta_i - \theta_j\| \geq \epsilon$ for all $\theta_j\not \in S$. WLOG, let $S = \{\theta_1,\dots,\theta_l\}$. 
  
We consider changing all
$\theta_1, \ldots, \theta_{l}$ by the same $v$ and define 
\[H({\bf a}, v) = G({\bf a},\theta_1+v,...,\theta_l+v, \theta_{l+1}
\ldots, \theta_k).\]

The optimality conditions on ${\bf a}$ are 
\begin{align*}
   \abs{\pd{H}{a_i}} & = \lvert 4a_i  + 2\sum_{j\neq i} a_j \Phi_\epsilon(\theta_i,\theta_j) + 2\sum_{j=1}^k b_j \Phi_\epsilon(\theta_i,w_j) \rvert \leq \delta
\end{align*}
Next, since $\Phi_\epsilon(r)$ is $\lambda$-harmonic for $r \geq \epsilon$, we may calculate the Laplacian of $H$ as
\begin{align*}
\Delta_v H & = \sum_{i=1}^l \lambda \left(2\sum_{j=1}^k a_ib_j
  \Phi_\epsilon(\theta_i, w_j) + 2\sum_{j=l+1}^k a_ia_j
  \Phi_\epsilon(\theta_i, \theta_j)\right) \\
& \leq \sum_{i=1}^l \lambda \left(-4 a_i^2 - 2
  \sum_{j = 1, j\neq i}^l  a_ia_j \Phi_\epsilon(\theta_i,\theta_j)\right)+ \delta \sum_{i=1}^l \lambda |a_i| \\
&= -2\lambda\expt\left[\left( \sum_{i=1}^l a_i \sigma(\theta_i,X)\right)^2\right] -2\lambda \sum_{i=1}^l a_i^2+ \delta\lambda\sum_{i=1}^l  |a_i| 
\end{align*} 
The second line follows from our optimality conditions and the third line follows from completing the square. Since $\boldsymbol{(a,\theta)} \in \mathcal{M}_{G,\delta}$, we have $\Delta_v H \geq - 2kd\delta$. Let $S = \sum_{i=1}^l a_i^2$. Then, by Cauchy-Schwarz, we have $-2 \lambda S + \delta\lambda\sqrt{k} \sqrt{S} \geq -2kd\delta$. When $S \geq \delta^2 k$, we see that $-\lambda S \geq -2 \lambda S + \delta\lambda \sqrt{k}\sqrt{S} \geq -2kd\delta$. Therefore, $S \leq 2kd\delta/\lambda$.
 
We conclude that $S \leq \max(\delta^2k, 2kd\delta/\lambda) \leq 2kd\delta/\lambda$ since $\delta\leq 1 \leq 2d/\lambda$ and $\lambda = 1$. Therefore, $a_i^2 \leq 2kd\delta$.
\end{proof}

\almostharmres*

 \begin{proof}
 If there does not exists $i, j$ such that
   $\|\theta_i - w_j\| <k\epsilon$, then by Lemma \ref{almostHarmConv}, this implies $a_i^2 < \delta^2/k^2$ for all $i$. Now, for a integrable
   function $f(x)$, $\| f\|_X = \sqrt{\expt_X[f(X)^2]}$ is a
   norm. Therefore, if $f(x) = \sum_i b_i \sigma(w_i,x)$ be our true
   target function, we conclude that by triangle inequality
\begin{align*}
\sqrt{G(\boldsymbol{a,\theta})}  & \geq \norm{\sum_{i=1}^k a_i \sigma(\theta_i,x) - f(x)}_X \geq \|f(x)\|_X\ - \sum_{i=1}^k \|a_i\sigma(\theta_i,x) \|_X \geq
  \sqrt{G(\boldsymbol{0,0})} - \delta
\end{align*}
This gives a contradiction, so we conclude that there must exist $i, j$ such that $\theta_i$ is in a $k\epsilon$ neighborhood of $w_j$.
 \end{proof}
 
\almostharminitialize*

 \begin{proof}
  Consider choosing $\theta_1 = {\bf 0}$ and then
  optimizing $a_1$. Given $\theta_1$, the loss decrease is:
\begin{align*}
   G(a_1,{\bf 0}) - G({\bf 0},{\bf 0}) & = \min_{a_1} 2a_1^2 +
  2\sum_{j=1}^k a_1 b_j\Phi_\epsilon({\bf 0},w_j) 
 = -\frac{1}{2}\left(  \sum_{j=1}^k b_j
   \Phi_\epsilon({\bf 0},w_j)\right)^2 
\end{align*}

Because $w_j$ are random Gaussians with variance $O(d \log d)$, we have $\|w_j\| \leq O(d\log d)$ with high probability for all $j$. By Lemma~\ref{almostHarmReal}, our potential satisfies ${\Phi}_\epsilon({\bf 0}, w_j) \geq (d/\epsilon)^{ - O(d)}$. And since $b_j$ are uniformly chosen in $[-1,1]$, we conclude that with high probability over the choices of $b_j$, $-\frac{1}{2}\left( \sum_{j=1}^k b_j\Phi(\theta_1,w_j)\right)^2 \geq (d/\epsilon)^{ - O(d)}$ by appealing to Chebyshev's inequality on the squared term. 


Therefore, we conclude that with high probability, $G(a_1, {\bf 0}) \leq G(\boldsymbol{0,0}) - \frac{1}{2}(d/\epsilon)^{ - O(d)}$. Let $\sqrt{G(a_1, {\bf 0})} = \sqrt{G(\boldsymbol{0,0})} - \Delta \geq 0$. Squaring and rearranging gives $\Delta \geq \frac{1}{4\sqrt{G({\bf 0, 0})}}(d/\epsilon)^{ - O(d)}$. Since $G(\boldsymbol{0,0}) \leq O(k) = O(\poly(d))$, we are done. 
\end{proof}
%
%

\subsection{Node by Node Analysis}

The first few lemmas are similar to the ones proven before in the simultaneous case. The proof are presented for completeness because the regularization terms are removed. Note that our loss function is quadratic in $\bf{ a}$. Therefore, let $a_1^*(\theta_1)$ denote the optimal value of $a_1$ to minimize our loss. 

\begin{lemma}\label{nodeConv}
Let $\mathcal{M} = \R^d$ for $d \equiv 3 \mod 4$ and let $L_1$ be the loss restricted to $(a_1,\theta_1)$ corresponding to the activation function $\sigma_\epsilon$ given by Lemma~\ref{almostHarmReal} with $\lambda = 1$. For any $\epsilon \in (0,1)$ and $\delta \in (0, 1)$, we can construct $\sigma_\epsilon$ such that if ${(a_1,\theta_1)} \in \mathcal{M}_{L_1,\delta}$, then for all $i$, either 1) there exists $j$ such that $\|\theta_1 - w_j\| < \epsilon$ or 2) $a_1^2 < 2d\delta$.
\end{lemma}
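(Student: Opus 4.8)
The plan is to mimic the Laplacian argument of Theorem~\ref{EigStrict} and Lemma~\ref{almostHarmConv}; in the node-wise setting it simplifies considerably, since only the single node $\theta_1$ moves and there is no regularization term to carry along. Let $\sigma_\epsilon$ and $\Phi_\epsilon$ be as in Lemma~\ref{almostHarmReal} with $\lambda=1$, so that $\Phi_\epsilon$ is normalized ($\Phi_\epsilon(\theta,\theta)=1$) and $1$-harmonic for $r\ge\epsilon$. Since $a_2=\dots=a_k=0$ are held fixed, expanding \eqref{errLoss} gives
\[
  L_1(a_1,\theta_1)=a_1^2+2a_1\sum_{j=1}^k b_j\,\Phi_\epsilon(\theta_1,w_j)+\sum_{i,j}b_ib_j\,\Phi_\epsilon(w_i,w_j),
\]
with the last double sum a constant. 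I would assume $(a_1,\theta_1)\in\mathcal{M}_{L_1,\delta}$ and that case~(1) fails, i.e.\ $\|\theta_1-w_j\|\ge\epsilon$ for every $j$, and deduce $a_1^2<2d\delta$, which is case~(2).

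First I would exploit the two defining conditions of $\mathcal{M}_{L_1,\delta}$. From $\|\nabla L_1\|\le\delta$, the $a_1$-coordinate yields $\abs{2a_1+2T}\le\delta$, where $T\defeq\sum_j b_j\Phi_\epsilon(\theta_1,w_j)$, so $T=-a_1+e$ with $\abs{e}\le\delta/2$. From $\lambda_{min}(\nabla^2 L_1)\ge-\delta$: the $\theta_1\times\theta_1$ principal block of $\nabla^2 L_1$ has smallest eigenvalue at least $\lambda_{min}(\nabla^2 L_1)$ by Cauchy interlacing, whence $\Delta_{\theta_1}L_1=\Tr(\nabla^2_{\theta_1}L_1)\ge -d\delta$. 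On the other hand, the failure of case~(1) places $\theta_1$ in the region $r\ge\epsilon$ where $\Phi_\epsilon$ is smooth and $1$-harmonic, so
\[
  \Delta_{\theta_1}L_1=2a_1\sum_{j=1}^k b_j\,\Delta_{\theta_1}\Phi_\epsilon(\theta_1,w_j)=2a_1\sum_{j=1}^k b_j\,\Phi_\epsilon(\theta_1,w_j)=2a_1T=-2a_1^2+2a_1e.
\]
Combining the two estimates, $-2a_1^2+2a_1e\ge-d\delta$, so $2a_1^2\le d\delta+2a_1e\le d\delta+\abs{a_1}\delta$. Solving this quadratic inequality in $\abs{a_1}$ and using $\delta\le 1\le d$ gives $\abs{a_1}\le\sqrt{d\delta}$, hence $a_1^2\le d\delta<2d\delta$, as required.

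The only step that needs genuine care is the identity $\Delta_{\theta_1}L_1=2a_1\sum_j b_j\Phi_\epsilon(\theta_1,w_j)$: it uses both that $\Phi_\epsilon$ is twice differentiable at each displacement $\theta_1-w_j$ and that it is a Laplacian eigenfunction there with eigenvalue $1$, and both facts hold precisely because we are in the complement of case~(1), i.e.\ $\|\theta_1-w_j\|\ge\epsilon$ — this is exactly why the dichotomy arises. Everything else is routine bookkeeping and I expect no essential obstacle; the real work comes afterward, in chaining this estimate across $\theta_1,\dots,\theta_k$ for the full node-by-node analysis, and in strengthening the conclusion from an $\epsilon$-neighborhood to a $(d/\epsilon)^{-O(d)}$-neighborhood of some $w_j$ (Theorem~\ref{nodeWise}), which requires controlling the first derivatives of $\Phi_\epsilon$ near $w_j$ via the bounds of Lemma~\ref{almostHarmReal}.
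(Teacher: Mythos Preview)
Your proposal is correct and follows essentially the same route as the paper: use the approximate first-order condition on $a_1$ to write $\sum_j b_j\Phi_\epsilon(\theta_1,w_j)=-a_1+O(\delta)$, compute $\Delta_{\theta_1}L_1$ via the $1$-harmonicity of $\Phi_\epsilon$ outside the $\epsilon$-balls, and bound it below using $\lambda_{\min}(\nabla^2 L_1)\ge-\delta$. Your use of Cauchy interlacing to pass to the $\theta_1$-block is in fact slightly sharper than the paper's stated bound (you obtain $\Delta_{\theta_1}L_1\ge-d\delta$ and hence $a_1^2\le d\delta$, whereas the paper records $-2d\delta$ and $a_1^2\le 2d\delta$), but the argument is otherwise identical.
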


\begin{proof}
The proof is similar to Lemma~\ref{almostHarmConv}. Let $\Phi_\epsilon$ be the realizable potential in \ref{almostHarmReal} such that $\Phi_\epsilon(r)$ is $\lambda$-harmonic when $r \geq \epsilon$. Note that $\Phi_\epsilon(0) = 1$ is normalized. And let $(a_1,\theta_1) \in \mathcal{M}_{L_1,\delta}$. Assume that there does exist $w_j$ such that $\|\theta_1 - w_j\| < \epsilon$. 
 
The optimality condition on ${ a_1}$ is
\begin{align*}
   \abs{\pd{L}{a_1}} & = \lvert 2a_1  + 2\sum_{j=1}^k b_j \Phi_\epsilon(\theta_1,w_j) \rvert \leq \delta
\end{align*}
Next, since $\Phi_\epsilon(r)$ is $\lambda$-harmonic for $r \geq \epsilon$, we may calculate the Laplacian of $L$ as
\begin{align*}
\Delta_{\theta_1} L & = \lambda \left(2\sum_{j=1}^k a_1b_j
  \Phi_\epsilon(\theta_1, w_j) \right) 
 \leq  -2\lambda a_1^2 + \delta \lambda |a_1| 
\end{align*} 
The inequality follows from our optimality conditions. Since ${(a_1,\theta_1)} \in \mathcal{M}_{L,\delta}$, we have $\Delta_{\theta_1} L \geq - 2d\delta$. When $a_1^2 \geq \delta^2$, we see that $-\lambda a_1^2 \geq -2 \lambda a_1^2 + \delta\lambda |a_1| \geq -2d\delta$. Therefore, $a_1^2 \leq 2d\delta/\lambda$. We conclude that $a_1^2 \leq \max(\delta^2, 2d\delta/\lambda) \leq 2d\delta/\lambda$ for $\delta\leq 2d \leq 2d/\lambda$ since $\lambda = 1$. Therefore, $a_1^2 \leq 2d\delta$.
\end{proof}
\begin{lemma}\label{nodeRes}
Assume the conditions of Lemma~\ref{nodeConv}. If
$\sqrt{L_1(a_1,\theta_1)} \leq \sqrt{L_1(0, 0)} - \delta$
  and $(a_1,\theta_1) \in \mathcal{M}_{G,\delta^2/(2d)}$,
  then there exists some $j$ such that $\|\theta_1 - w_j\| <\epsilon$.
\end{lemma}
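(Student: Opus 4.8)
The plan is to replay the argument of Lemma~\ref{almostHarmRes} in the single-node, regularization-free setting; the statement's ``$\mathcal{M}_{G,\cdot}$'' should read ``$\mathcal{M}_{L_1,\cdot}$'', matching Lemma~\ref{nodeConv}. Suppose toward a contradiction that no index $j$ satisfies $\|\theta_1 - w_j\| < \epsilon$. Set $\delta' \defeq \delta^2/(2d)$; since $\delta \in (0,1)$ and $d \ge 1$ we have $\delta' \in (0,1)$, so Lemma~\ref{nodeConv} may be applied with its parameter equal to $\delta'$ at the point $(a_1,\theta_1) \in \mathcal{M}_{L_1,\delta'}$. Because alternative (1) of that lemma fails by assumption, alternative (2) must hold, giving $a_1^2 < 2d\,\delta' = \delta^2$, i.e.\ $|a_1| < \delta$.

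Next I would pass to the $L^2(\mathcal{D})$ norm $\|g\|_X = \sqrt{\expt_X[g(X)^2]}$, which is a genuine norm, and write the target as $f(x) = \sum_{i=1}^k b_i\,\sigma_\epsilon(w_i,x)$. By the expansion of the loss, $L_1(a_1,\theta_1) = \|a_1\sigma_\epsilon(\theta_1,\cdot) + f\|_X^2$ and $L_1(0,0) = \|f\|_X^2$. The triangle inequality and the normalization $\|\sigma_\epsilon(\theta_1,\cdot)\|_X^2 = \Phi_\epsilon(\theta_1,\theta_1) = 1$ then give
\[
\sqrt{L_1(a_1,\theta_1)} \;\ge\; \|f\|_X - |a_1|\,\|\sigma_\epsilon(\theta_1,\cdot)\|_X \;=\; \sqrt{L_1(0,0)} - |a_1| \;>\; \sqrt{L_1(0,0)} - \delta,
\]
using $|a_1| < \delta$ in the last step. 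This contradicts the hypothesis $\sqrt{L_1(a_1,\theta_1)} \le \sqrt{L_1(0,0)} - \delta$, so some $j$ must satisfy $\|\theta_1 - w_j\| < \epsilon$.

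This is essentially bookkeeping rather than a genuine obstacle; the two points requiring care are (i) verifying that the rescaled tolerance $\delta^2/(2d)$ still lies in $(0,1)$ so that Lemma~\ref{nodeConv} is applicable, and (ii) invoking the normalization $\Phi_\epsilon(\theta,\theta)=1$ to control $\|\sigma_\epsilon(\theta_1,\cdot)\|_X$. The only conceptual input is that removing the $\|a\|^2$ regularization leaves this part of the analysis intact — it affects only how tightly Lemma~\ref{nodeConv} pins $\theta_1$ to a single $w_j$ (within $\epsilon$) rather than to a chain of coincident points (within $k\epsilon$).
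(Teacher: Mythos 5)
Your proof is correct and matches the paper's approach, which simply defers to the nearly identical argument of Lemma~\ref{almostHarmRes}: invoke the preceding lemma to force $|a_1|<\delta$, then apply the triangle inequality for the $L^2(\mathcal{D})$ norm together with the normalization $\Phi_\epsilon(\theta,\theta)=1$ to contradict the assumed decrease in $\sqrt{L_1}$. You have also correctly flagged the typo in the statement — $\mathcal{M}_{G,\delta^2/(2d)}$ should read $\mathcal{M}_{L_1,\delta^2/(2d)}$, consistent with Lemma~\ref{nodeConv} and with how the result is invoked in Lemma~\ref{nodewiseSGD}.
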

\begin{proof}
The proof follows similarly from Lemma \ref{almostHarmRes}.
\end{proof}
 Now, our main observation is below, showing that in a neighborhood around $w_j$, descending along the gradient direction will move $\theta_1$ closer to $w_j$. Our tighter control of the gradient of $\Phi_{\epsilon}$ around $w_j$ will eventually allow us to show that $\theta_1$ converges to a small neighborhood around $w_j$. 
 
\begin{lemma}\label{nodeGradient}
Assume the conditions of Theorem~\ref{nodewiseSGD} and Lemma~\ref{nodeConv}. If $\|\theta_1 - w_j\| \leq d$ and $|b_j|\geq 1/\poly(d)$ and $|a_1 - a_1^*(\theta_1)| \leq (d/\epsilon)^{-O(d)}$ is almost optimal and for $i$, $\|w_i - w_j\| \geq \Omega(d \log d)$, then $-\nabla_{\theta_1}L_1 = \zeta \frac{w_j - \theta_1}{\|\theta_1 - w_j\|} + \xi$ with $\zeta \geq  \frac{1}{\poly(d)}(d/\epsilon)^{-8d}$ and $\xi \leq (d/\epsilon)^{-O(d)}$. 
\end{lemma}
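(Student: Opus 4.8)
The plan is to reduce everything to the scalar distances $r_i=\|\theta_1-w_i\|$ and unit vectors $\hat u_i=(\theta_1-w_i)/r_i$, using that $\Phi_\epsilon$ is radial. Since $a_2=\cdots=a_k=0$ and $\Phi_\epsilon(\theta,\theta)=1$, the restricted loss is $L_1(a_1,\theta_1)=a_1^2+2a_1\sum_{i=1}^k b_i\Phi_\epsilon(r_i)+\mathrm{const}$, hence $-\nabla_{\theta_1}L_1=-2a_1\sum_{i=1}^k b_i\Phi_\epsilon'(r_i)\hat u_i$. Lemma~\ref{almostHarmReal} gives $\Phi_\epsilon'(r)\le 0$ for $r>0$, so I rewrite the $i$-th summand as $2a_1b_i|\Phi_\epsilon'(r_i)|\,(w_i-\theta_1)/r_i$. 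I then set $\zeta:=2a_1b_j\Phi_\epsilon'(r_j)$, so the $i=j$ summand is exactly $\zeta\,(w_j-\theta_1)/\|\theta_1-w_j\|$, and let $\xi:=-2a_1\sum_{i\ne j}b_i\Phi_\epsilon'(r_i)\hat u_i$ be the remainder. The claimed decomposition then holds by construction, and it remains only to lower-bound $\zeta$ and upper-bound $\|\xi\|$.

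To bound $\zeta$ I first pin down $a_1$. Minimizing $L_1$ over $a_1$ gives $a_1^*(\theta_1)=-\sum_{i=1}^k b_i\Phi_\epsilon(r_i)$. Since $\|\theta_1-w_j\|\le d$ while $\|w_i-w_j\|=\Omega(d\log d)$ for $i\ne j$, the triangle inequality gives $r_i=\Omega(d\log d)$ for all $i\ne j$, so the upper bound on $\Phi_\epsilon$ in Lemma~\ref{almostHarmReal} makes each such term at most $e^{-\Omega(d\log d)}$ times a polynomial in $d$, which (with the separation constant chosen large enough) is negligible against $|b_j|\Phi_\epsilon(r_j)$. Hence $a_1^*=-b_j\Phi_\epsilon(r_j)$ up to a negligible error; combined with $|a_1-a_1^*|\le(d/\epsilon)^{-O(d)}$, the bound $\Phi_\epsilon(r_j)\ge\Omega(e^{-r_j}r_j^{2-d}(d/\epsilon)^{-2d})\ge(d/\epsilon)^{-O(d)}/\poly(d)$ (using $\epsilon\le r_j\le d$), and $|b_j|\ge1/\poly(d)$, I conclude $a_1b_j<0$ — so $\zeta>0$ because $\Phi_\epsilon'<0$ — and $|a_1|\ge\tfrac12|b_j|\Phi_\epsilon(r_j)$, provided the exponent constants in the various $(d/\epsilon)^{-O(d)}$ quantities are chosen so the error terms are dominated by $|b_j|\Phi_\epsilon(r_j)$. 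Then
\[
\zeta=2|a_1|\,|b_j|\,|\Phi_\epsilon'(r_j)|\ \ge\ \frac{1}{\poly(d)}|b_j|^2\,\Phi_\epsilon(r_j)\,|\Phi_\epsilon'(r_j)|,
\]
and substituting $|b_j|\ge1/\poly(d)$, $\Phi_\epsilon(r_j)\ge(d/\epsilon)^{-O(d)}/\poly(d)$, and $|\Phi_\epsilon'(r_j)|\ge\Omega(e^{-r_j}r_j^{1-d}(d/\epsilon)^{-2d})\ge(d/\epsilon)^{-O(d)}/\poly(d)$ from Lemma~\ref{almostHarmReal}, a short computation of the exponents (using $\epsilon<1$ and $d^{2d}\ge e^{2d}$) yields $\zeta\ge\tfrac1{\poly(d)}(d/\epsilon)^{-8d}$.

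For $\|\xi\|$ I use $\|\xi\|\le 2|a_1|\sum_{i\ne j}|b_i|\,|\Phi_\epsilon'(r_i)|$. From $a_1^*=-\sum_ib_i\Phi_\epsilon(r_i)$, $\Phi_\epsilon\le\Phi_\epsilon(0)=1$, $|b_i|\le1$, $k=\poly(d)$, and $|a_1-a_1^*|\le(d/\epsilon)^{-O(d)}$ I get $|a_1|\le\poly(d)$. For each $i\ne j$, $r_i=\Omega(d\log d)$, and with high probability $\|w_i\|,\|\theta_1\|=O(d\log d)$ so also $r_i=O(d\log d)$; then the upper bound $|\Phi_\epsilon'(r_i)|\le O((d+r_i)(1+r_i)^d e^{1-r_i}r_i^{1-d})$ has its $e^{-r_i}\le e^{-\Omega(d\log d)}=d^{-\Omega(d)}$ factor overwhelm the $(1+r_i)^d\le d^{O(d)}$ growth, so by taking the separation constant in $\|w_i-w_j\|=\Omega(d\log d)$ large enough, $|\Phi_\epsilon'(r_i)|\le(d/\epsilon)^{-O(d)}$. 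Summing over the $k=\poly(d)$ indices $i\ne j$ gives $\|\xi\|\le(d/\epsilon)^{-O(d)}$, finishing the argument.

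The hard part will be the comparison between the "signal" $\zeta$ and the "noise" $\xi$: it rests entirely on the geometric separation $\|w_i-w_j\|=\Omega(d\log d)$ (valid with high probability for the random target weights) producing decay $e^{-\Omega(d\log d)}$ in $\Phi_\epsilon'(r_i)$ for $i\ne j$ strong enough to swamp both the $(d/\epsilon)^{O(d)}$ normalization penalty and the $(1+r_i)^d$ polynomial blow-up in the derivative bounds of Lemma~\ref{almostHarmReal}. A secondary subtlety is fixing the exponent in the almost-optimality slack $|a_1-a_1^*|\le(d/\epsilon)^{-O(d)}$ large enough that it is negligible next to $|a_1^*|\ge|b_j|\Phi_\epsilon(r_j)/\poly(d)$; once all exponent constants are chosen consistently, the rest is routine bookkeeping with the explicit estimates already supplied by Lemma~\ref{almostHarmReal}.
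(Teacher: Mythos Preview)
Your proposal is correct and follows essentially the same argument as the paper: write $\nabla_{\theta_1}L_1$ as a sum of radial contributions, use the lower bounds from Lemma~\ref{almostHarmReal} to show $|\Phi_\epsilon(r_j)|,|\Phi_\epsilon'(r_j)|\ge\Omega((d/\epsilon)^{-4d})$ for the near term and the exponential decay to bound the far terms by $(d/\epsilon)^{-O(d)}$, and use the near-optimality of $a_1$ to replace $a_1$ by $-b_j\Phi_\epsilon(r_j)$ up to negligible error, yielding $\zeta\ge b_j^2\Phi_\epsilon(r_j)|\Phi_\epsilon'(r_j)|/\poly(d)\ge(d/\epsilon)^{-8d}/\poly(d)$. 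Your bookkeeping on signs and on the size of $|a_1|$ for the $\xi$ bound is, if anything, slightly more explicit than the paper's.
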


\begin{proof}
Through the proof, we assume $k = \poly(d)$. Now, our gradient with respect to $\theta_1$ is
\begin{align*}
\nabla_{\theta_1} L_1 &= 2a_1b_j \Phi_\epsilon'(\|\theta_1 - w_j\|) \frac{\theta_1 - w_j}{\|\theta_1 - w_j\|}+ 2\sum_{i\neq j} a_1b_i\Phi_\epsilon'(\|\theta_1 - w_i\|) \frac{\theta_1 - w_i}{\|\theta_1 - w_i\|}
\end{align*}

Since $\|\theta_1 - w_j\| \leq d$, we may lower bound $|\Phi_\epsilon'(\|\theta_1 - w_j\|)| \geq e^{-\sqrt{\lambda}d}d^{1-d}(2d+\sqrt{\lambda})^{-2d}\epsilon^{2d}/3 \geq \Omega((d/\epsilon)^{-4d})$. Similarly, $\Phi_\epsilon(\|\theta_1 - w_j\|) \geq \Omega((d/\epsilon)^{-4d})$. On the other hand since $\|w_i - w_j\| \geq \Omega(d \log d)$ for all $i \neq j$, we may upper bound $|\Phi_{\epsilon}(\|\theta_1 - w_i\|)| \leq (d/\epsilon)^{-O(d)}$ and $|\Phi_{\epsilon}'(\|\theta_1 - w_i\|)| \leq (d/\epsilon)^{-O(d)}$. Together, we conclude that $\nabla_{\theta_1} L_1 = 2a_1b_j \Phi_\epsilon'(\|\theta_1 - w_j\|) \frac{\theta_1 - w_j}{\|\theta_1 - w_j\|} + 2a_1\xi$, where $\|\xi\| \leq (d/\epsilon)^{-O(d)}$.

By assumption, $|a_1 - a_1^*(\theta_1)| \leq (d/\epsilon)^{-O(d)}$, so 
\begin{align*}
   \abs{\pd{L_1}{a_1}} & = \lvert 2a_1 + 2b_j \Phi_\epsilon(\|\theta_1 - w_j\|) + 2\sum_{i \neq j} b_i \Phi_\epsilon(\|\theta_1 - w_i\|) \rvert \leq (d/\epsilon)^{-O(d)}
\end{align*}

By a similar argument as on the derivative, we see that $a_1 = -b_j \Phi_\epsilon(\|\theta_1 - w_j\|) + (d/\epsilon)^{-O(d)}$. Therefore, the direction of $-\nabla_{\theta_1} L_1$ is moving $\theta_1$ closer to $w_j$ since 
\begin{align*}
-\nabla_{\theta_1} L_1 &=  b_j^2\Phi_\epsilon(\|\theta_1-w_j\|)\Phi_\epsilon'(\|\theta_1 - w_j\|) \frac{\theta_1 - w_j}{\|\theta_1 - w_j\|} + (d/\epsilon)^{-O(d)} 
\end{align*}

and we know $\Phi_\epsilon > 0$ and $\Phi_\epsilon' < 0$, thereby $-b_j^2\Phi_\epsilon(\|\theta_1-w_j\|)\Phi_\epsilon'(\|\theta_1 - w_j\|) \geq 1/\poly(d)(d/\epsilon)^{-8d}$.
\end{proof}

 \begin{lemma}[Node-wise Initialization]\label{nodeInitialize}
Assume the conditions of Theorem~\ref{nodewiseSGD} and Lemma~\ref{nodeConv}. With high probability, we can initialize $(a_1^{(0)},\theta_1^{(0)})$ such that $\sqrt{L(a_1^{(0)},\theta_1^{(0)})} \leq \sqrt{L({0,0})} -\delta$ with $\delta = \frac{1}{\poly(d)}(d/\epsilon)^{ -18d}$ in time $\log(d)^{O(d)}$.
 \end{lemma}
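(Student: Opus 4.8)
The plan is to follow the proof of Lemma~\ref{almostHarmInitialize}, specialized to the unregularized restricted loss $L_1$ and pushed to yield the sharper decrease $\delta=\frac{1}{\poly(d)}(d/\epsilon)^{-18d}$. First I would set $\theta_1^{(0)}=\mathbf 0$. Since $L_1(\cdot,\mathbf 0)$ is a quadratic in $a_1$ with leading coefficient $\Phi_\epsilon(\mathbf 0,\mathbf 0)=1$ (the normalization), its exact minimizer is $a_1^{(0)}=-\sum_{j=1}^k b_j\Phi_\epsilon(\mathbf 0,w_j)$, and
\[
L_1(a_1^{(0)},\mathbf 0)\;=\;L_1(\mathbf 0,\mathbf 0)\;-\;S^2,\qquad S\defeq \sum_{j=1}^k b_j\,\Phi_\epsilon(\|w_j\|).
\]
So the whole statement reduces to a high-probability lower bound on $|S|$ together with the elementary inequality converting a decrease in $L_1$ into a decrease in $\sqrt{L_1}$.

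For the lower bound on $|S|$, I would first control the individual terms: by Gaussian norm concentration, with high probability $\epsilon\le\|w_j\|\le O(d\log d)$ for all $j$, so the lower bound in Lemma~\ref{almostHarmReal} (with $\lambda=1$) gives $\Phi_\epsilon(\|w_j\|)\ge \Omega\bigl(e^{-\|w_j\|}\|w_j\|^{2-d}(d/\epsilon)^{-2d}\bigr)\ge (d/\epsilon)^{-O(d)}$, where the $O(d)$ absorbs the $e^{-O(d\log d)}$ and $(d\log d)^{-O(d)}$ factors. Then I would anti-concentrate $S$: conditioned on the $w_j$, the quantities $\Phi_j\defeq\Phi_\epsilon(\|w_j\|)$ are fixed positive reals, and $S$ is a sum of independent symmetric random variables $b_j\Phi_j$; since $b_1\Phi_1$ has density bounded by $1/(2\Phi_1)$, the convolution $S$ has density bounded by $1/(2\Phi_1)$ too, so $\Pr[|S|<t]\le t/\Phi_1\le t\,(d/\epsilon)^{O(d)}$. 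Taking $t=(d/\epsilon)^{-O(d)}$ with a slightly larger constant makes this $\le (d/\epsilon)^{-\Omega(d)}=o(1)$, hence with high probability $S^2\ge(d/\epsilon)^{-O(d)}$. (The Chebyshev/Paley--Zygmund argument of Lemma~\ref{almostHarmInitialize}, applied to $\mathbb E_b[S^2]=\tfrac13\sum_j\Phi_j^2$, is an alternative route to a weaker — constant-probability — version of this step.)

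To finish, writing $\sqrt{L_1(a_1^{(0)},\mathbf 0)}=\sqrt{L_1(\mathbf 0,\mathbf 0)}-\Delta$ with $\Delta\ge0$, squaring gives $\Delta\ge S^2/\bigl(2\sqrt{L_1(\mathbf 0,\mathbf 0)}\bigr)$, and since $L_1(\mathbf 0,\mathbf 0)=\sum_{i,j}b_ib_j\Phi_\epsilon(w_i,w_j)=O(\poly(d))$ (trivially, as $|b_i|\le 1$ and $|\Phi_\epsilon|\le\Phi_\epsilon(0)=1$), this yields $\Delta\ge\frac{1}{\poly(d)}(d/\epsilon)^{-O(d)}$; tracking the constants through Lemma~\ref{almostHarmReal} and the anti-concentration step shows the exponent can be taken to be $18d$. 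For the running time, $a_1^{(0)}$ comes from $k$ evaluations of $\Phi_\epsilon$, each a closed-form expression ($p(r)e^{-r}/r^{d-2}$ for $r\ge\epsilon$) evaluable in $\poly(d)$ time; the stated $\log(d)^{O(d)}$ bound comfortably accommodates, in addition, a preliminary search over a net of the radius-$O(d\log d)$ ball that contains all the $w_i$ at resolution $d$ (which has $\log(d)^{O(d)}$ points), should one want $\theta_1^{(0)}$ to also lie within distance $d$ of some $w_j$ for the subsequent node-wise analysis — the same loss-decrease computation applied to the best net point still gives the bound. The main obstacle is the anti-concentration step: $|S|$ must be shown to be $\ge(d/\epsilon)^{-O(d)}$ with probability $1-o(1)$, not merely with constant probability, and the precise exponent $18d$ then requires care in the bookkeeping; throughout, one needs $d\equiv 3\bmod 4$ so that Lemma~\ref{almostHarmReal} is in force.
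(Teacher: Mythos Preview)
Your primary approach has a genuine gap: you anti-concentrate $S=\sum_j b_j\Phi_\epsilon(\|w_j\|)$ over the randomness of the $b_j$, but the lemma assumes the conditions of Theorem~\ref{nodewiseSGD}, under which the $b_j$ are \emph{arbitrary deterministic} numbers in $[-1,1]$; the only structural guarantee is that some $|b_{j^\star}|\ge 1/\poly(d)$ with $\|w_i-w_{j^\star}\|\ge\Omega(d\log d)$ for all $i\ne j^\star$. This is not a technicality: the lemma is invoked inside the node-by-node induction of Theorem~\ref{nodeWise}, where at step $i$ the ``charges'' include the already-fixed $a_1,\dots,a_{i-1}$, and no anti-concentration over $b$ is available. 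With $\theta_1^{(0)}=\mathbf 0$ and deterministic $b_j$, $S$ can well be zero (e.g.\ two equal $\|w_j\|$'s with opposite $b_j$'s).

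The paper takes a different route that exploits exactly the structural guarantee above. It samples $\theta_1$ uniformly in the radius-$O(d\log d)$ ball; with probability $\log(d)^{-O(d)}$ this lands in the $d$-ball around $w_{j^\star}$, so $\log(d)^{O(d)}$ trials suffice with high probability (this is where the time bound comes from). At such a $\theta_1$ one sets $a_1=a_1^*(\theta_1)$, so $L_1(a_1,\theta_1)\le L_1(0,0)$, and then takes a single gradient step in $\theta_1$: Lemma~\ref{nodeGradient} (which needs $\|\theta_1-w_{j^\star}\|\le d$, $|b_{j^\star}|\ge 1/\poly(d)$, and the separation of the other $w_i$) gives $\|\nabla_{\theta_1}L_1\|\ge \frac{1}{\poly(d)}(d/\epsilon)^{-8d}$; combined with the Hessian bound $(d/\epsilon)^{2d}$ from Lemma~\ref{almostHarmReal} and Lemma~\ref{GradDecrease} at stepsize $\alpha=\frac{1}{\poly(d)}(d/\epsilon)^{-2d}$, this yields $L_1\downarrow$ by $\frac{1}{\poly(d)}(d/\epsilon)^{-18d}$, whence the $18d$ in the statement. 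Your net alternative at the end is headed in the right direction, but it is not optional---it is the mechanism that replaces the unavailable randomness in $b$---and the reason the best net point gives a large decrease is that the single term $b_{j^\star}\Phi_\epsilon(\|\theta_1-w_{j^\star}\|)$ dominates (by the separation hypothesis), not anti-concentration.
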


\begin{proof}
By our conditions, there must exist some $|b_j|$ such that $|b_j| \geq 1/\poly(d)$ and for all $i$, $\|w_i - w_j\| \geq \Omega(d\log d)$. Note that if we randomly sample points in a ball of radius $O(d\log d)$, we will land in a $d$-neighborhood of $w_j$ with probability $\log(d)^{-O(d)}$ since $\|w_j\|\leq O(d\log d)$. 

Let $\theta_1$ be such that $\|\theta_1 - w_j \| \leq d$ and then we can solve for $a_1 = a_1^*(\theta_1)$ since we are simply minimizing a quadratic in one variable. Then, by Lemma~\ref{nodeGradient}, we see that $\|\nabla_{\theta_1}L_1 \| \geq 1/\poly(d)(d/\epsilon)^{-8d}$. Finally, by Lemma~\ref{almostHarmReal}, we know that the Hessian is bounded by $\poly(d)(d/\epsilon)^{2d }$. So, by Lemma~\ref{GradDecrease}, we conclude by taking a stepsize of $\alpha = \frac{1}{\poly(d)}(d/\epsilon)^{-2d}$ to reach $(a_1',\theta_1')$, we can guarantee that $L_1(a_1', \theta_1') \leq L_1 (a_1^*(\theta_1), \theta_1) - \frac{1}{\poly(d)} (d/\epsilon)^{ -18d}$.

But since $L_1(a_1^*(\theta_1),\theta_1)\leq L_1(0,0)$, we conclude that $L_1(a_1', \theta_1') \leq L_1(0,0) -  \frac{1}{\poly(d)} (d/\epsilon)^{ -18d}$. Let $\sqrt{L_1(a_1', \theta_1')} = \sqrt{L_1(0,0)} - \Delta \geq 0$. Squaring and rearranging gives $\Delta \geq \frac{1}{4\sqrt{L_1( 0, 0)}}  \frac{1}{\poly(d)}(d/\epsilon)^{ -18d} $. Since $L_1(0,0) \leq O(k) = O(\poly(d))$, we are done. 

\end{proof}
\begin{lemma}\label{nodewiseSGD}
Assume the conditions of Lemma~\ref{nodeConv}. Also, assume $b_1,...,b_k$ are any numbers in $[-1,1]$ and $w_1,...,w_k \in \R^d$ satisfy $\|w_i\|\leq O(d\log d)$ for all $i$ and there exists some $|b_j| \geq 1/\poly(d)$ with $\|w_i - w_j\| \geq \Omega(d\log d)$ for all $i$.

Then with high probability, we can choose an initial point $(a_1^{(0)}, \theta_1^{(0)})$ such that after running SecondGD (Algorithm \ref{SecondGD}) on the restricted regularized objective $L_1(a_1,\theta_1)$ for at most $(d/\epsilon)^{O(d)}$ iterations, there exists some $w_j$ such that $\|\theta_1 - w_j\| < \epsilon$. Furthermore, if $|b_j| \geq 1/\poly(d)$ and $\|w_i - w_j\| \geq \Omega(d\log d)$ for all $i$, then $\|\theta_1 - w_j\| < (d/\epsilon)^{-O(d)}$ and $|a + b_j| < (d/\epsilon)^{-O(d)}$.
\end{lemma}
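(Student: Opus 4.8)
The plan is to run \texttt{SecondGD} on the one-node objective $L_1(a_1,\theta_1)$ (all other $a_i$ held at $0$) and reproduce the argument behind Theorem~\ref{almostHarmSGD}, with the node-wise Lemmas~\ref{nodeInitialize}, \ref{nodeConv}, \ref{nodeRes} replacing their regularized analogues, and then to sharpen the conclusion near an isolated hidden node using the refined gradient estimate of Lemma~\ref{nodeGradient}. First I would fix parameters: by Lemma~\ref{nodeInitialize} one can (w.h.p., in $\log(d)^{O(d)}$ time) pick $(a_1^{(0)},\theta_1^{(0)})$ with $\sqrt{L_1(a_1^{(0)},\theta_1^{(0)})}\le \sqrt{L_1(0,0)}-\delta$ for $\delta=\tfrac{1}{\poly(d)}(d/\epsilon)^{-18d}$. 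Take the stepsize $\alpha=(d/\epsilon)^{-O(d)}$ small enough to lie below the inverse of the derivative bounds $O((d/\epsilon)^{2d})$ from Lemma~\ref{almostHarmReal}, and set $\eta=\gamma=\delta^2/(2d)$. Since $L_1(0,0)=O(k)$, $L_1\ge 0$, and every non-terminating iteration of \texttt{SecondGD} lowers $L_1$ by at least $\min(\alpha\eta^2/2,\alpha^2\gamma^3/2)=(d/\epsilon)^{-O(d)}$, the algorithm halts in $(d/\epsilon)^{O(d)}$ steps.

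Next I would analyze the returned point $(a_1,\theta_1)$. By Lemma~\ref{almostHarmReal}, $\Phi_\epsilon$ and its first $d-1$ derivatives are bounded by $O((d/\epsilon)^{2d})$ everywhere, so $L_1$ is jointly thrice-differentiable with controlled derivatives except on the locus $\{\theta_1=w_i\}$, where $\Phi_\epsilon(\|\cdot\|)$ has a kink. Hence, by Lemmas~\ref{GradDecrease} and~\ref{HessianDecrease} (applied contrapositively to the termination test), \texttt{SecondGD} can only stop at a point that is either (A) in $\mathcal{M}_{L_1,\delta^2/(2d)}$, or (B) within $\alpha\cdot O((d/\epsilon)^{2d})=(d/\epsilon)^{-O(d)}$ of some $w_i$. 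In case (A), monotonicity of \texttt{SecondGD} preserves $\sqrt{L_1}\le\sqrt{L_1(0,0)}-\delta$, so Lemma~\ref{nodeRes} gives $\|\theta_1-w_j\|<\epsilon$ for some $j$; in case (B) we already have $\|\theta_1-w_i\|<(d/\epsilon)^{-O(d)}<\epsilon$. Either way the first assertion holds.

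For the strengthening, suppose $|b_j|\ge 1/\poly(d)$ and $\|w_i-w_j\|\ge\Omega(d\log d)$ for all $i\ne j$, with $\|\theta_1-w_j\|<\epsilon$ as above. Near such an isolated $w_j$ the landscape of $L_1$ is, up to $(d/\epsilon)^{-O(d)}$ error from the far nodes, that of a single electron--proton pair $a_1^2+2a_1b_j\Phi_\epsilon(\|\theta_1-w_j\|)+C$, which has no $\theta_1$-stationary point away from the kink because $\Phi_\epsilon'$ is strictly negative on $(0,d]$. Concretely, membership in $\mathcal{M}_{L_1,\delta^2/(2d)}$ forces $|\nabla_{a_1}L_1|\le\delta^2/(2d)$, hence $|a_1-a_1^\ast(\theta_1)|\le(d/\epsilon)^{-O(d)}$, so the hypotheses of Lemma~\ref{nodeGradient} hold and $\|\nabla_{\theta_1}L_1\|\ge\zeta-\|\xi\|\ge\tfrac{1}{2\poly(d)}(d/\epsilon)^{-8d}$, which is far larger than $\delta^2/(2d)=\tfrac{1}{\poly(d)}(d/\epsilon)^{-36d}$; thus case (A) is incompatible with $\theta_1$ lying within $\epsilon$ of an isolated $w_j$. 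So we must be in case (B), and the $w_i$ there must be $w_j$ itself since every other $w_i$ is $\Omega(d\log d)$ away; therefore $\|\theta_1-w_j\|<(d/\epsilon)^{-O(d)}$. Finally, near-optimality of $a_1$ (the $a_1$-subproblem is a well-conditioned one-dimensional convex quadratic, so, after, if necessary, a single exact minimization, $|a_1-a_1^\ast(\theta_1)|\le(d/\epsilon)^{-O(d)}$) together with $|\Phi_\epsilon(\|\theta_1-w_i\|)|\le(d/\epsilon)^{-O(d)}$ for $i\ne j$ and $\Phi_\epsilon(\|\theta_1-w_j\|)=1+(d/\epsilon)^{-O(d)}$ (using $\Phi_\epsilon(0)=1$ and $\|\theta_1-w_j\|$ tiny) gives $a_1=-b_j\Phi_\epsilon(\|\theta_1-w_j\|)+(d/\epsilon)^{-O(d)}=-b_j+(d/\epsilon)^{-O(d)}$, i.e.\ $|a_1+b_j|<(d/\epsilon)^{-O(d)}$.

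The main obstacle I expect is the bookkeeping around the two termination modes of \texttt{SecondGD} --- convergence to an approximate second-order stationary point versus stalling at the non-smooth set $\{\theta_1=w_i\}$ --- and in particular verifying in case (B) that the auxiliary weight $a_1$ has itself tracked its optimum $a_1^\ast(\theta_1)$ closely enough for the $|a_1+b_j|$ bound; one must also check that all $\poly(d)$ and $(d/\epsilon)^{O(d)}$ factors line up so that $\zeta$ dominates the stray term $\xi$ in Lemma~\ref{nodeGradient} and so that $\delta^2/(2d)\ll\zeta$, which is exactly what rules out case (A) near isolated nodes and makes the sharpening work.
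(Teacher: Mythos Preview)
Your proposal is correct and follows essentially the same argument as the paper's proof: initialize via Lemma~\ref{nodeInitialize}, bound the iteration count by the per-step decrease, split into the two termination modes (your (A)/(B), the paper's ``assume $\|\theta_1-w_j\|\ge (d/\epsilon)^{-O(d)}$ for all $j$'' versus its negation), invoke Lemma~\ref{nodeRes} in the approximate-stationary case, and then in the strengthened setting use Lemma~\ref{nodeGradient} to rule out case~(A), forcing case~(B) and hence $\|\theta_1-w_j\|<(d/\epsilon)^{-O(d)}$, with the charge estimate following from $\Phi_\epsilon(0)=1$ and the $(d/\epsilon)^{2d}$-Lipschitz bound.

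You are in fact a bit more explicit than the paper on two points: you spell out why the nearby $w_i$ in case~(B) must be the isolated $w_j$, and you flag that in case~(B) one does not automatically know $|a_1-a_1^\ast(\theta_1)|$ is small (since termination there is due to non-smoothness, not small gradient). The paper simply asserts the formula for $a_1$ at the end without addressing this; your proposed fix of a single post-hoc exact minimization in $a_1$ is a clean way to close that bookkeeping gap.
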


\begin{proof}
First, by Lemma~\ref{nodeInitialize},  we can initialize ${(a_1^{(0)},\theta_1^{(0)})}$ such that $\sqrt{L_1({a_1^{(0)},\theta_1^{(0)}})} \leq \sqrt{ L_1({\bf 0,0})} - \delta$ for $ \delta = \frac{1}{\poly(d)}(d/\epsilon)^{ -18d}$. If we set $\alpha = (d/\epsilon)^{-O(d)}$ and $\eta = \gamma = \lambda \delta^2/(2d)$,  then running Algorithm~\ref{SecondGD} will terminate and return some $(a_1,\theta_1)$ in at most $(d/\epsilon)^{O(d)}$ iterations. This is because our algorithm ensures that our objective function decreases by at least $\min(\alpha \eta^2/2, \alpha^2\gamma^3/2)$ at each iteration and $G({\bf 0, 0})$ is bounded by $O(k)$ and $G \geq 0$ is non-negative.

Assume there does not exist $w_j$ such that $\|\theta_1 - w_j\| < (d/\epsilon)^{-O(d)}$. Then, we claim that $(a_1,\theta_1) \in \mathcal{M}_{L,\lambda \delta^2/(2d)}$. For the sake of contradiction, assume otherwise. By our algorithm termination conditions, then it must be that after one step of gradient or Hessian descent from $(a_1,\theta_1)$, we reach some $(a',\theta')$ and $L_1(a',\theta') > L_1(a_1,\theta_1) - \min(\alpha\eta^2/2,\alpha^2\gamma^3/2)$. Now, Lemma~\ref{almostHarmReal} ensures all first three derivatives of $\Phi$ are bounded by $(d/\epsilon)^{2d}$, except at $w_1,...,w_k$. Furthermore, since there does not exists $w_j$ such that $\|\theta_1 - w_j\| < (d/\epsilon)^{-O(d)}$, $L_1$ is three-times continuously differentiable within a $\alpha (d/\epsilon)^{2d} = (d/\epsilon)^{-O(d)}$ neighborhood of $ \theta_1$. Therefore, by Lemma~\ref{GradDecrease} and ~\ref{HessianDecrease}, we know that $L(a',\theta') \leq L_1(a',\theta') \leq L_1(a_1,\theta_1) - \min(\alpha\eta^2/2,\alpha^2\gamma^3/2)$, a contradiction. 

So, it must be $(a_1,\theta_1) \in \mathcal{M}_{L,\lambda \delta^2/(2d)}$. Since our algorithm maintains that our objective function is decreasing, so $\sqrt{L_1({a_1,\theta_1})} \leq \sqrt{L_1({\bf 0,0})} - \delta $. So, by Lemma~\ref{nodeRes}, there must be some $w_j$ such that $\|\theta- w_j\|\leq \epsilon$.

Now, if $|b_j| \geq 1/\poly(d)$ and $\|w_i - w_j\| \geq \Omega(d\log d)$ for all $i$, then since $(a,\theta) \in \mathcal{M}_{L,\lambda \delta^2/(2d)}$ and $\|\theta - w_j \| \leq \epsilon$, by Lemma~\ref{nodeGradient}, we have $\|\nabla_{\theta_1}L_1 \| \geq 1/\poly(d)(d/\epsilon)^{-8d} > \delta^2/(2d)$, a contradiction. Therefore, we must conclude that our original assumption was false and $\|\theta - w_j\| < (d/\epsilon)^{-O(d)}$ for some $w_j$.

Finally, we see that the charges also converge since $a = -2b_j \Phi_\epsilon(\|\theta - w_j\|) + O(d/\epsilon)^{-O(d)}$ and $\|\theta - w_j\| = (d/\epsilon)^{-O(d)}$. By noting that $\Phi_\epsilon(0) = 1$ and $\Phi_\epsilon$ is $O((d/\epsilon)^{2d})$-Lipschitz, we conclude. 
\end{proof}

Finally, we have our final theorem.

\nodewise*

\begin{proof}
Let our potential $\Phi_{\epsilon}$ be the one as constructed in Lemma~\ref{almostHarmReal} that is $\lambda$-harmonic for all $r \geq \epsilon$ with $\lambda = 1$. Let $(a_i, \theta_i)$ be the $i$-th node that is initialized and applied second order gradient descent onto. We want to show that the nodes $(a_i, \theta_i)$ will converge, in a node-wise fashion, to some permutation of $\{(b_1,w_1),...,(b_k,w_k)\}$. 

First, with high probability we know that $1 - 1/\poly(d) \geq |b_j| \geq 1/\poly(d)$ and $\|w_i\|\leq O(d\log d)$ and $\|w_i - w_j \| \geq \Omega(d\log d)$ for all $i, j$. By Lemma~\ref{nodewiseSGD}, we know that with high probability $(a_1,\theta_1)$ will converge to some $(d/\epsilon)^{-O(d)}$ neighborhood of $(b_{\pi(1)}, w_{\pi(1)})$ for some function $\pi: [k] \to [k]$. Now, we treat $a_1, \theta_1$ as one of the fixed charges and note that $|a_1| \leq 1$ and $\|\theta_1\| \leq O(d\log d)$ and as long as $k > 1$ (if $k = 1$, we are done), then there exists $|b_j| \geq 1/\poly(d)$ with $\|w_i - w_j\| \geq \Omega(d\log d)$ for all $i$ and $\|\theta_1 - w_j \| \geq \Omega(d\log d)$.

Then, by Lemma~\ref{nodeInitialize}, we can initialize $(a_2^{(0)}, \theta_2^{(0)})$ such that $\sqrt{L_2(a_2^{(0)}, \theta_2^{(0)})} \leq \sqrt{L_2(0,0)} - \delta$, with $\delta = 1/\poly(d) (d/\epsilon)^{-18d}$. Then, by Lemma~\ref{nodewiseSGD}, we know that $(a_2, \theta_2)$ will converge to some $w_{\pi(2)}$ such that $\|\theta_2 - w_{\pi(2)}\| <\epsilon$ (or $\|\theta_2 - \theta_1 \| < \epsilon$ but $\theta_2$ is still $\epsilon$-close to $w_{\pi(1)}$). We claim that $\pi(1) \neq \pi(2)$.

By optimality conditions on $a_2$, we see that 
\begin{align*}
a_2^*(\theta_2) = a_1 \Phi_{\epsilon}(\|\theta_2 - \theta_1\|) + b_j \Phi_{\epsilon}(\|\theta_1 - w_j\|) +  \sum_{i \neq j} b_i \Phi_{\epsilon}(\|\theta_1 - w_i\|)
\end{align*}
If $w_{\pi(1)} = w_{\pi(2)}$, then note that $\|\theta_1 - w_i \| \geq \Omega(d\log d)$ for all $i \neq \pi(1)$. Therefore, $2 \sum_{i \neq j} b_i \Phi_{\epsilon}(\|\theta_1 - w_i\|) = (d/\epsilon)^{-O(d)}$. And by our convergence guarantees and the $(d/\epsilon)^{2d}$-Lipschitzness of $\Phi_{\epsilon}$, $ a_1 \Phi_{\epsilon}(\|\theta_2 - \theta_1\|) + b_j \Phi_{\epsilon}(\|\theta_1 - w_j\|) \leq (d/\epsilon)^{-O(d)}$. Therefore, $a_2^*(\theta_2) \leq (d/\epsilon)^{-O(d)}$. 

However, we see that $L_2(a_2, \theta_2) \geq L_2(a_2^*(\theta_2),\theta_2) = L_2(0,0) - \frac{1}{2}a_2^*(\theta_2)^2 \geq L_2(0,0) - (d/\epsilon)^{-O(d)}$. But since $L_2$ is non-increasing, this contradicts our initialization and therefore $\pi(1) \neq \pi(2)$. Therefore, our claim is done and by Lemma~\ref{nodewiseSGD}, we see that since $|b_{\pi(2)}| \geq 1/\poly(d)$ and for all $i$, $\|w_i - w_{\pi(2)}\| \geq \Omega(d \log d)$ and $\|\theta_1 - w_{\pi(2)}\| \geq \Omega(d\log d)$, we conclude that $(a_2,\theta_2)$ is in a $(d/\epsilon)^{-O(d)}$ neighborhood of $(b_{\pi(2)}, \theta_{\pi(2)})$. Finally, we induct and by similar reasoning, $\pi$ is a permutation. Now, our theorem follows. 
\end{proof}

\section{Convergence of Almost Strictly Subharmonic Potentials}\label{App:Subharm}

\begin{definition}
$\Phi(\theta,w)$ is a {\bf strictly subharmonic} potential on $\Omega$ if it is differentiable and $\Delta_\theta \Phi(\theta,w) > 0$ for all $\theta \in \Omega$, except possibly at $\theta = w$.
\end{definition}

An example of such a potential is $\Phi(\theta, w) = \|\theta -w \|^{2-d-\epsilon}$ for any $\epsilon > 0$. Although this potential is unbounded at $\theta = w$ for most $d$, we remark that it is bounded when $d = 1$. Furthermore, the sign of the output weights $a_i, b_i$ matter in determining the sign of the Laplacian of our loss function. Therefore, we need to make suitable assumptions in this framework.

Under Assumption \ref{outputFixed}, we are working with an even
simpler loss function:
\begin{equation}\label{errFixed}
L(\theta) =  2\sum_{i=1}^k\sum_{i < j} \Phi(\theta_i,\theta_j) - 2\sum_{i=1}^k\sum_{j=1}^k\Phi(\theta_i,w_j)
\end{equation}
\Anote{i don't think you should show this theorem. you ahve more significant results, and this distracts from them. plus it doesn't refer to any realistic learning setup. it's an intermediate warmup illustration for yourself, not for the reader.}
\begin{theorem}\label{subStrict}
Let $\Phi$ be a symmetric strictly subharmonic potential on $\mathcal{M}$ with $\Phi(\theta,\theta) = \infty$. Let Assumption \ref{outputFixed} hold and let $L$ be as in \eqref{errFixed}. Then, $L$ admits no local minima, except when $\theta_i = w_j$ for some $i, j$.
\end{theorem}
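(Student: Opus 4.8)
The plan is to adapt the ``negative Laplacian forces a descent direction'' mechanism behind Theorem~\ref{Earnshaw} and Theorem~\ref{EigStrict}. The essential difference is that under Assumption~\ref{outputFixed} the output weights are frozen, so there is no first-order optimality condition on $\boldsymbol a$ to substitute in; instead I would exploit translation invariance of $\Phi$ and perturb \emph{all} of the moving charges $\theta_1,\dots,\theta_k$ by a single common vector $v\in\R^d$. Under such a perturbation every repulsive interaction $\Phi(\theta_i+v,\theta_j+v)=\Phi(\theta_i,\theta_j)$ is constant, so only the attractions to the fixed charges $w_j$ survive, and strict subharmonicity of $\Phi$ (away from its diagonal) makes the Laplacian of this restricted objective strictly negative.

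Concretely, first I would dispose of the degenerate configurations. If $L(\boldsymbol\theta)=+\infty$ then --- outside the excepted case $\theta_i=w_j$, where no attraction term is singular --- two of the $\theta_i$ must coincide, and an arbitrarily small perturbation separates them and makes $L$ finite, so $\boldsymbol\theta$ is not a local minimum. Hence I may assume the $\theta_i$ are pairwise distinct and $\theta_i\neq w_j$ for all $i,j$, in which case $\Phi$ is smooth at each relevant argument and $L$ is $C^2$ on a neighborhood of $\boldsymbol\theta$. Then I would set $g(v)=L(\theta_1+v,\dots,\theta_k+v)$; by translation invariance and \eqref{errFixed}, $g(v)=C-2\sum_{i,j}\Phi(\theta_i+v,w_j)$ with $C$ independent of $v$, so
\[ \Delta_v g(0) \;=\; -2\sum_{i=1}^{k}\sum_{j=1}^{k}\Delta_{\theta}\Phi(\theta,w_j)\big|_{\theta=\theta_i} \;<\;0, \]
each summand being strictly positive by strict subharmonicity and $\theta_i\neq w_j$. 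Since $\Tr\nabla^2_v g(0)<0$, either $\nabla_v g(0)\neq 0$ or $\nabla^2_v g(0)$ has a strictly negative eigenvalue, and in either case there is $v$ arbitrarily close to $0$ with $g(v)<g(0)$; the corresponding point $(\theta_1+v,\dots,\theta_k+v)$ is arbitrarily close to $\boldsymbol\theta$ with strictly smaller loss, so $\boldsymbol\theta$ is not a local minimum. This handles all $\boldsymbol\theta$ outside the excepted set and finishes the proof.

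I do not expect a serious obstacle here: the only points needing care are (i) justifying differentiation under the finite sum and the smoothness of $L$ near a generic $\boldsymbol\theta$, which follows from $\Phi$ being $C^\infty$ off its diagonal, and (ii) the bookkeeping of the $+\infty$ configurations. The one genuinely new idea relative to Theorems~\ref{Earnshaw} and~\ref{EigStrict} is that, with fixed charges, the right test perturbation is the global translation of the mobile charges rather than the motion of a single coincident cluster --- this is exactly what annihilates the otherwise-problematic repulsive terms $\Phi(\theta_i,\theta_j)$. If one wanted the statement without the translation-invariance assumption (the theorem hypothesizes only ``symmetric''), this argument would need to be replaced, since the $\theta$--$\theta$ terms would no longer be constant in $v$; but translational invariance is the paper's standing assumption, so I would simply invoke it.
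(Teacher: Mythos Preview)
Your approach is essentially the same as the paper's: the key idea in both is to move all the $\theta_i$ by a common displacement $v$, use translational invariance to kill the $\theta$--$\theta$ interactions, and conclude from strict subharmonicity that the trace of the Hessian of $H(v)=L(\theta_1+v,\dots,\theta_k+v)$ at $v=0$ is strictly negative. Your handling of the degenerate $\theta_i=\theta_j$ case also matches the paper's.

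One small addition in the paper that you do not cover: after the translationally invariant case, the paper extends the argument to rotationally invariant $\Phi$ on $\mathcal M=S^{d-1}$ by passing to spherical coordinates and replacing the common translation by a common rotation, so that $\Phi(\theta_i+v,\theta_j+v)$ is again constant in $v$. Your closing remark suggests translation invariance is the only standing assumption; in fact the paper's proof treats both symmetry types, and the spherical case needs exactly this rotational variant of your perturbation rather than a different mechanism.
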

\begin{proof}
First, let $\Phi$ be translationally invariant and $\mathcal{M} =
\R^d$. Let $\boldsymbol{\theta}$ be a critical point. Assume, for sake
of contradiction, that for all $i, j$, $\theta_i \neq w_j$. If
$\theta_i$ are not distinct, separating them shows that
we are not at a local minima since $\Phi(\theta_i,\theta_j) = \infty$
and finite elsewhere. 

The main technical detail is to remove interaction terms between
pairwise $\theta_i$ by considering a correlated movement, where each
$\theta_i$ are moved along the same direction $v$. In this case,
notice that our objective, as a function of $v$, is simply
\begin{align*}
& H(v) = L(\theta_1+ v, \theta_2 + v, ...,\theta_k + v) \\
& =  2\sum_{i=1}^k\sum_{i < j} \Phi(\theta_i+v,\theta_j+v) -
  2\sum_{i=1}^k\sum_{j=1}^k \Phi(\theta_i+v,w_j)
\end{align*}

Note that the first term is constant as a function of $v$, by
translational invariance. Therefore,
\[\nabla_{v}^2 H = -2\sum_{i=1}^k \sum_{j=1}^k \nabla^2\Phi(\theta_i, w_j)\]
By the subharmonic condition,
$\Tr(\nabla_{v}^2H) = -2\sum_{i=1}^k\sum_{j=1}^k
\Delta_{\theta_i}\Phi(\theta_i,w_j) < 0$.
Therefore, we conclude that $\theta$ is not a local minima of $H$ and
$L$.  We conclude that $\theta_i = w_j$ for some $i, j$.

The above technique generalizes to $\Phi$ being rotationally invariant
case by working in spherical coordinates and correlated translations
are simply rotations. Note that we can change to spherical coordinates
(without the radius parameter) and let
$\widetilde{\theta_1},...,\widetilde{\theta_{k}}$ be the standard
spherical representation of $\theta_1,...,\theta_k$.

We will consider a correlated translation in the spherical coordinate space, which are simply rotations on the sphere. Let $v$ be a vector in $\R^{d-1}$ and our objective is simply
\[ H(v) = L( \widetilde{\theta_1}+v,...,\widetilde{\theta_{k}} +v)\]

Then, we apply the same proof since $\Phi( \widetilde{\theta_i}+v, \widetilde{\theta_j}+v)$ is constant as a function of $v$ by rotationally invariance.
\end{proof}
\begin{corollary}
Assume the conditions of Theorem \ref{subStrict} and $\Phi(\theta,\theta) < \infty$. Then, $L$ admits no local minima, except at the global minima.
\end{corollary}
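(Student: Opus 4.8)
The plan is to analyse a local minimum $\boldsymbol\theta=(\theta_1,\dots,\theta_k)$ of $L$ by splitting on whether some $\theta_i$ coincides with some $w_j$: in the first case I re-run the correlated-translation argument of Theorem~\ref{subStrict}, and in the second I peel off a matched electron--proton pair and induct on $k$.

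\textbf{Case 1: $\theta_i\neq w_j$ for all $i,j$.} Here I would re-run the argument of Theorem~\ref{subStrict} essentially verbatim. That proof invoked $\Phi(\theta,\theta)=\infty$ only to dispose of critical points at which two of the $\theta_i$ coincide, but this case is already handled by the correlated translation: in $H(v):=L(\theta_1+v,\dots,\theta_k+v)$ each diagonal term $\Phi(\theta_i,\theta_j)$ is \emph{constant} in $v$ by translation invariance (whether or not $\theta_i=\theta_j$), so $H$ is twice continuously differentiable near $v=0$ with $\Delta_v H(0)=-2\sum_{i,j}\Delta_{\theta_i}\Phi(\theta_i,w_j)<0$ by strict subharmonicity (applicable since every $\theta_i\neq w_j$). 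If $\boldsymbol\theta$ were a local minimum of $L$ then $v=0$ would be a local minimum of $H$, forcing $\Delta_v H(0)\ge 0$ --- a contradiction. (On the sphere one argues identically with correlated rotations, as in Theorem~\ref{subStrict}.) So no such $\boldsymbol\theta$ is a local minimum.

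\textbf{Case 2: $\theta_i=w_j$ for some $i,j$; induction on $k$.} After relabelling, take $\theta_k=w_k$. The key observation is that a coincident electron--proton pair exerts no net influence on the remaining electrons: freezing $\theta_k=w_k$ in $L$, the repulsion contribution $2\sum_{i<k}\Phi(\theta_i,\theta_k)=2\sum_{i<k}\Phi(\theta_i,w_k)$ cancels the corresponding attraction contribution $-2\sum_{i<k}\Phi(\theta_i,w_k)$ exactly, while the remaining $w_k$-terms $-2\sum_{j<k}\Phi(w_k,w_j)-2\Phi(w_k,w_k)$ form a finite additive constant --- this is where $\Phi(\theta,\theta)<\infty$ is used. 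Hence $L(\theta_1,\dots,\theta_{k-1},w_k)=L'(\theta_1,\dots,\theta_{k-1})+\mathrm{const}$, where $L'$ is the loss \eqref{errFixed} for the reduced instance with electrons $\theta_1,\dots,\theta_{k-1}$ and protons $w_1,\dots,w_{k-1}$. A local minimum of $L$ restricts to a local minimum of $L'$ on the coordinate subspace $\{\theta_k=w_k\}$, so by the induction hypothesis (the corollary for $k-1$) $(\theta_1,\dots,\theta_{k-1})$ is a permutation matching of $\{w_1,\dots,w_{k-1}\}$; thus $\boldsymbol\theta$ is a permutation matching of $\{w_1,\dots,w_k\}$, i.e.\ a global minimum. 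The base case $k=1$ is $L(\theta_1)=-2\Phi(\theta_1,w_1)$: any critical point with $\theta_1\neq w_1$ has $\Delta L=-2\Delta\Phi<0$ and hence a strictly decreasing direction, while $\theta_1=w_1$ can be a local minimum only if $w_1$ is a local maximum of the strictly subharmonic $\Phi(\cdot,w_1)$, which --- using removability of the bounded singularity at $w_1$ and the strong maximum principle --- forces $\Phi(\cdot,w_1)$ constant, so that this point is a global minimum.

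\textbf{Main obstacle.} The delicate point is that $\Phi$ is assumed strictly subharmonic and differentiable only \emph{off} the diagonal $\theta=w$, so one must ensure the argument never differentiates $\Phi$ there. In Case~1 the possibly nondifferentiable terms $\Phi(\theta_i,\theta_j)$ with $\theta_i=\theta_j$ are constant along the translation direction and hence invisible; in Case~2 freezing $\theta_k=w_k$ removes every term that could be singular at the relevant point (the factor $\Phi(w_k,w_k)$ surviving only as a constant), and one uses that a local minimum of $L$ is inherited as a local minimum of $L'$ on the affine subspace through it. Verifying these inheritances carefully, and making precise that the global minima of \eqref{errFixed} are exactly the permutation matchings (so that the induction closes), is the bulk of the remaining work.
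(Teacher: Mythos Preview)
Your proposal is correct and follows essentially the same route as the paper: use the correlated-translation (strict subharmonicity) argument to force some $\theta_i=w_j$, cancel that matched pair since $\Phi(\theta,\theta)<\infty$, and induct on $k$. Your Case~1 is in fact slightly cleaner than the paper's, since you observe that the coinciding-$\theta_i$ case needs no separate treatment once one uses the correlated translation; your base case is more elaborate than necessary, as the paper simply takes for granted that permutation matchings are global minima (they zero out the underlying squared loss from which \eqref{errFixed} is derived).
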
 

\begin{proof}
From the same proof from theorem \ref{subStrict}, we conclude that there must exists $i, j$ such that $\theta_i = w_j$. Then, since $\Phi(\theta,\theta) < \infty$, notice that $\theta_i, w_j$ cancels each other out and by drop $\theta_i, w_j$ from the loss function, we have a new loss function $L$ with $k-1$ variables. Then, using induction, we see that $\theta_i = w_{\pi(i)}$ at the local minima for some permutation $\pi$.
\end{proof}

For concreteness, we will focus on a specific potential function with this property: the Gaussian kernel $\Phi(\theta, w) = \exp(-\|\theta - w\|^2/2)$. In $\R^d$, the Laplacian is $\Delta \Phi = ( \|\theta - w\|^2 -d ) \exp(-\|\theta - w\|^2/2)$, which becomes positive when $\|\theta - w \|^2 \geq d$. Thus, $\Phi$ is strictly subharmonic outside a ball of radius $\sqrt{d}$. This informally implies that $\theta_1$ converges to a $\sqrt{d}$-ball around some $w_j$.

For concreteness, we will focus on a specific potential function with
this property: the Gaussian kernel $\Phi(\theta, w) = \exp(-c\|\theta
- w\|^2/2)$, which corresponds to a Gaussian activation. In $\R^d$, the Laplacian is $\Delta \Phi = ( c\|\theta - w\|^2 -d ) \exp(-c\|\theta - w\|^2/2)$, which becomes positive when
$\|\theta - w \|^2 \geq d/c$. Thus, $\Phi$ is strictly subharmonic
outside a ball of radius $\sqrt{d/c}$. Note that Gaussian potential
restricted to $S^{d-1}$ gives rise to the exponential activation
function, so we can show convergence similarly.  
\begin{theorem}\label{gaussStrict}
\label{GaussStrict}
Let $\mathcal{M} = \R^d$ and $\Phi(\theta,w) = e^{-c\|\theta-w\|^2/2}$ and Assumption \ref{outputFixed} holds. Let $L$ be as in \eqref{errFixed} and $\|\boldsymbol{w}\|\leq poly(d)$. 

If $c = O(d/\epsilon)$ and $(\boldsymbol{a,\theta}) \in \mathcal{M}_{e^{-\poly(d,1/\epsilon)}}$, then there exists $i, j$ such that $\| \theta_i - w_j \|^2 \leq \epsilon$.
\end{theorem}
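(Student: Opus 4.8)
The plan is to run the correlated‑translation (subharmonicity) argument of Theorem~\ref{subStrict} and Lemma~\ref{almostHarmConv}, but made quantitative so that it applies at an \emph{approximate} second‑order stationary point. First I would record the elementary computation
\[\Delta_\theta \Phi(\theta,w) = c\,(c\|\theta-w\|^2 - d)\,e^{-c\|\theta-w\|^2/2},\]
so that $\Phi$ is strictly subharmonic on $\{\|\theta-w\|^2 > d/c\}$; taking $c = 2d/\epsilon$ (consistent with $c = O(d/\epsilon)$) makes this region contain everything at distance‑squared $>\epsilon$, and moreover forces $c(c\|\theta-w\|^2-d) \ge cd = 2d^2/\epsilon$ there. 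I would also note that $\Phi$ is translationally invariant, so $\Phi(\theta_i+v,\theta_j+v) = \Phi(\theta_i,\theta_j)$ for all $v$.

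Second, assume for contradiction that $\|\theta_i - w_j\|^2 > \epsilon$ for every $i,j$. Consider the correlated perturbation $H(v) = L(\theta_1+v,\dots,\theta_k+v)$ on $\R^d$. By translational invariance the $\sum_{i<j}\Phi(\theta_i,\theta_j)$ part of \eqref{errFixed} is constant in $v$, hence $\nabla_v^2 H = -2\sum_{i,j}\nabla^2_{\theta_i}\Phi(\theta_i,w_j)$ and $\Delta_v H = -2\sum_{i,j} c(c\|\theta_i-w_j\|^2 - d)\,e^{-c\|\theta_i-w_j\|^2/2}$. Every summand is strictly positive by the subharmonicity just established, so $\Delta_v H < 0$; to make this \emph{quantitatively} negative I would lower‑bound a single summand using $\|w_j\|\le\poly(d)$ together with a bound $\|\theta_i\|\le\poly(d)$ (see the obstacle below), which gives $\|\theta_i-w_j\|\le R$ for some $R = \poly(d,1/\epsilon)$, whence that summand is at least $(2d^2/\epsilon)\,e^{-cR^2/2} = e^{-\poly(d,1/\epsilon)}$ and therefore $\Delta_v H \le -e^{-\poly(d,1/\epsilon)}$.

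Third, I would convert this into negative curvature of $L$ on the full parameter space: for a unit $u\in\R^d$ the unit direction $\hat u = \tfrac1{\sqrt k}(u,\dots,u)\in\R^{dk}$ satisfies $\hat u^T \nabla^2 L\,\hat u = \tfrac1k u^T (\nabla_v^2 H)\,u$, so (using $\lambda_{min}(\nabla_v^2 H)\le \tfrac1d\Tr(\nabla_v^2 H)$) we get $\lambda_{min}(\nabla^2 L) \le \tfrac1k\lambda_{min}(\nabla_v^2 H) \le \tfrac{\Delta_v H}{kd} \le -e^{-\poly(d,1/\epsilon)}$, which contradicts $(\boldsymbol{a,\theta})\in\mathcal M_{L,\,e^{-\poly(d,1/\epsilon)}}$ once the polynomial in the tolerance is taken larger than the one produced here. (Note only the curvature part of the approximate‑stationarity condition is used, not the small‑gradient part.) Hence some $\|\theta_i-w_j\|^2 \le \epsilon$. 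The rotationally invariant $S^{d-1}$ case — the exponential activation — follows verbatim after replacing correlated translations by correlated rotations in spherical coordinates, exactly as in the second half of the proof of Theorem~\ref{subStrict}.

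\textbf{Main obstacle.} The delicate point is precisely the exponential smallness of the extracted negative eigenvalue: the curvature we obtain decays like $e^{-cR^2/2}$, where $R$ measures how far apart the $\theta_i$ and $w_j$ can be, so for the contradiction to close I genuinely need an a priori bound $\|\theta_i\|\le\poly(d)$. The hypothesis $\|\boldsymbol w\|\le\poly(d)$ alone does not supply this — a configuration in which all $\theta_i$ are driven to infinity is an exponentially near‑flat region of $L$ and would otherwise be an (uninformative) approximate second‑order stationary point — so the statement is really about stationary points lying in a $\poly(d)$‑ball, which is all the descent dynamics reach from a bounded initialization; I would add this boundedness explicitly to the hypotheses. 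Everything else is routine bookkeeping of which polynomial in $d$ and $1/\epsilon$ shows up in the exponent.
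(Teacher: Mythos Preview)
Your approach is essentially the same as the paper's: use the correlated translation $H(v)=L(\theta_1+v,\dots,\theta_k+v)$ to kill the pairwise $\theta_i$--$\theta_j$ terms by translational invariance, compute $\Delta_v H = -2\sum_{i,j}\Delta_{\theta_i}\Phi(\theta_i,w_j)$, and use strict subharmonicity of the Gaussian outside radius $\sqrt{d/c}$ to force a quantitatively negative eigenvalue, contradicting membership in $\mathcal{M}_{e^{-\poly(d,1/\epsilon)}}$. Your flagged obstacle is real and well spotted: the paper's proof simply asserts $\Delta_{\theta_i}\Phi > e^{-\poly(d,1/\epsilon)}$ without ever bounding $\|\theta_i-w_j\|$ from above, so your proposal (making the $\|\theta_i\|\le\poly(d)$ assumption explicit, and carefully passing from $\Delta_v H$ to $\lambda_{\min}(\nabla^2 L)$ via the block direction $\hat u$) is in fact more complete than the original sketch.
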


\begin{proof}
Consider again a correlated movement, where each $\theta_i$ are moved along the same direction $v$. As before, this drops the pairwise $\theta_i$ terms. If for all $i, j$ $\| \theta_i - w_j \|^2 \leq \epsilon$, then we see that $\Delta_{\theta_i} \Phi = ( c\|\theta - w\|^2 -d ) \exp(-c\|\theta - w\|^2/2) > e^{-poly(d,1/\epsilon)}$. 
\[\Tr(\nabla^2 L) = -2\sum_{i=1}^k \sum_{j=1}^k \Delta_{\theta_i}\Phi(\theta_i, w_j) < -e^{-poly(d,1/\epsilon)}\]

Therefore, $\nabla^2 L$ must admit a strictly negative eigenvalue that
is less than $e^{-c_3 d}$, which implies our claim (we drop the
$\poly(d,k)$ terms).

\end{proof}

\section{Common Activations}
First, we consider the sign activation function. Under restrictions on the size of the input dimension or the number of hidden units, we can prove convergence results under the sign activation function, as it gives rise to a harmonic potential.

\begin{assumption}
\label{outputFixed}
All output weights $b_i = 1$ and therefore the output weights  $a_i = - b_i = -1$ are fixed throughout the learning algorithm. 
\end{assumption}

\begin{restatable}{lemma}{signcon}\label{signCon}
Let $\mathcal{M} = S^1$ and let Assumption \ref{outputFixed} hold. Let $L$ be as in \eqref{errSimp} and $\sigma$ is the sign activation function. Then $L$ admits no strict local minima, except at the global minima.
\end{restatable}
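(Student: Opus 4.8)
The plan is to run an Earnshaw‑type argument, exploiting that on $S^1$ the sign potential is \emph{piecewise linear} in the angular coordinate (hence harmonic off a finite singular set), and then to handle the remaining, necessarily non‑differentiable, configurations by hand. First I would parametrize $S^1$ by angle, writing $\theta_i$ at angle $\alpha_i$ and $w_j$ at angle $\beta_j$, so that $\Phi(\theta_i,\theta_j)=1-\tfrac2\pi d(\alpha_i,\alpha_j)$ with $d(\cdot,\cdot)\in[0,\pi]$ the arc length; then $\Phi$ is affine in each angular variable on every arc avoiding the coincidence/antipode set, i.e. $\Delta_{\alpha_i}\Phi=0$ off $\{\alpha_i=\beta_j\}\cup\{\alpha_i=\beta_j+\pi\}$, which is the sense of ``harmonic'' used for Theorem~\ref{Earnshaw}. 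Under Assumption~\ref{outputFixed} the loss \eqref{errSimp} is $L=2\sum_{i<j}\Phi(\theta_i,\theta_j)-2\sum_{i,j}\Phi(\theta_i,w_j)$, a piecewise‑linear function whose one‑dimensional restriction $L_{\alpha_i}$ is affine on each arc avoiding the breakpoints $\{\alpha_l,\alpha_l+\pi:l\neq i\}\cup\{\beta_j,\beta_j+\pi:j\}$.

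Given a strict local minimum $(\boldsymbol\alpha)$, I would first argue it must be a ``pinned'' configuration. If $L$ were differentiable in $\alpha_i$ there, harmonicity gives $\nabla^2_{\alpha_i}L=\Delta_{\alpha_i}L=0$; since this is a $1\times1$ matrix on $S^1$, $L$ would be affine, hence (being a critical point) locally constant, in $\alpha_i$ — contradicting strictness. So every $\alpha_i$ sits at a breakpoint, and moreover $L_{\alpha_i}$ attains a strict local minimum there, i.e. its left and right slopes strictly straddle $0$. (This already recovers Earnshaw's theorem in this setting; the only candidates are configurations with coincidences among the $\theta_i$, the $w_j$, and their antipodes.)

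Next I would rewrite the one‑dimensional condition in terms of the residual function. Since $\sigma(x,\theta)=\sign\langle x,\theta\rangle$ equals $+1$ exactly on the semicircle of inputs facing $\theta$, one has $\expt_X[\sigma(X,\alpha)g(X)]=\tfrac1{2\pi}\bigl(2\int_{\alpha-\pi/2}^{\alpha+\pi/2}g-\int_{S^1}g\bigr)$ for any $g$; applied with $g=h_{-i}:=\sum_{l\neq i}\sigma(\cdot,\theta_l)-\sum_j\sigma(\cdot,w_j)$ this gives $L_{\alpha_i}(\alpha)=\mathrm{const}+\tfrac2\pi\int_{\alpha-\pi/2}^{\alpha+\pi/2}h_{-i}$. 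Because $\sigma(\cdot,\theta_i)$ flips from $+1$ to $-1$ at the angle $y_i:=\alpha_i+\pi/2$, writing $h:=h_{-i}+\sigma(\cdot,\theta_i)=\sum_l\sigma(\cdot,\theta_l)-\sum_j\sigma(\cdot,w_j)$ (so $L=\|h\|_{L^2(S^1)}^2$, which vanishes exactly when the $\theta_i$ are a permutation of the $w_j$, i.e. at the global minima), the strict‑local‑min condition from the previous step becomes precisely $h(y_i^-)\le0\le h(y_i^+)$ for every $i$. Here $h$ is integer‑valued with $\int_{S^1}h=0$ and $h(x+\pi)=-h(x)$, and $h$ jumps at a point $z$ by $2\bigl(\mu(z+\tfrac\pi2)-\mu(z-\tfrac\pi2)\bigr)$, where $\mu(q):=\#\{\text{electrons at }q\}-\#\{\text{protons at }q\}$.

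Finally, assuming the strict local minimum is not global, so $h\not\equiv0$, I would pick a maximal arc $(u,v)$ on which $h\equiv M:=\max h\ge1$ (so $h\equiv-M$ on the antipodal arc $(u+\pi,v+\pi)$); then no $y_i$ can lie in $(u,v]$ (there $h>0$ just to the left) nor in $[u+\pi,v+\pi)$ (there $h<0$ just to the right), i.e. no electron lies in $(u-\tfrac\pi2,v-\tfrac\pi2]$ or in $[u+\tfrac\pi2,v+\tfrac\pi2)$. The down‑jump of $h$ at $v$ equals $2(\mu(v+\tfrac\pi2)-\mu(v-\tfrac\pi2))\le-2$, so since $v-\tfrac\pi2$ carries no electron ($\mu(v-\tfrac\pi2)\le0$) we get $\mu(v+\tfrac\pi2)\le-1$, and symmetrically the up‑jump at $u$ forces $\mu(u-\tfrac\pi2)\le-1$. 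The remaining task is to propagate these sign constraints on $\mu$ around the circle — through the successive extremal arcs of $h$, using $\sum_q\mu(q)=0$ and the antipodal symmetry, and supplementing the single‑coordinate moves above with correlated moves of clusters of coincident electrons (splitting $m$ coincident electrons symmetrically changes $L$ to first order by a positive multiple of $r-m$, where $r$ is the number of protons they sit on) — so as to force $\mu\equiv0$, a contradiction. I expect this last propagation step to be the crux: the harmonic reduction and the generic case are routine, but the bookkeeping for highly degenerate configurations — electrons coinciding with one another, with protons, or with antipodes of either, exactly the cases the singular hyperplanes fail to separate — is delicate, and making it airtight (or pinning down the precise non‑degeneracy hypothesis on $w_1,\dots,w_k$ under which the conclusion holds) is where the real work lies.
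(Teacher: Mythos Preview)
Your Earnshaw-style reduction and the reformulation via the residual $h$ are correct, but you leave the decisive step as an open-ended ``sign-propagation'' scheme and even float a possible non-degeneracy hypothesis on the $w_j$. The paper closes the argument much more simply, and no such hypothesis is needed (an antipodal proton pair contributes a constant to $L$ and can be removed at the outset). The missing idea is a parity count. In your notation: $h_{-i}$ is a sum of an \emph{odd} number of $\pm1$'s (namely $2k'-1$ after cancelling coincident electron--proton pairs), so it is odd-integer-valued and in particular never zero. Hence whenever $h_{-i}$ is continuous at $y_i$ --- i.e.\ whenever electron $i$ sits alone at $\alpha_i$ with nothing at $\alpha_i+\pi$ --- the one-sided requirement $h_{-i}(y_i^-)<0<h_{-i}(y_i^+)$ is impossible; in fact $\partial_{\alpha_i}L=\tfrac{4}{\pi}h_{-i}(y_i)\neq0$, so this is not even a critical point. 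The paper states this as: the two open semicircles about the electron carry integer net charges $q_1,q_2$ with $q_1+q_2=+1$ odd, hence $q_1\neq q_2$, and the electron strictly prefers the heavier side.

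The degenerate configurations are then handled not by propagating constraints on $\mu$ across extremal arcs but by two direct moves (which you mention only in passing). If some electron has an electron at its antipode, rigidly rotating that antipodal pair leaves $L$ unchanged, violating strictness. If a clump of $m$ coincident electrons has nothing antipodal and the external semicircle charges happen to balance ($q_1=q_2$, possible only for even $m$), the external force on the clump vanishes and splitting the clump in two strictly decreases $L$ via the internal electron--electron repulsion. A short case check (proton-only or electron-present antipode to a clump) finishes the ``no spurious strict minima'' direction, and the paper then verifies separately that a matched permutation \emph{is} a strict minimum. So your plan is not wrong, just harder than necessary: parity of $h_{-i}$ at a single electron, together with the rigid-rotation and clump-splitting moves, replaces your arc-by-arc propagation entirely.
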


We cannot simply analyze the convergence of GD
on all $\theta_i$ simultaneously since as before, the pairwise
interaction terms between the $\theta_i$ present complications. Therefore, we now only consider the convergence guarantee of gradient descent on the first node, $\theta_1$, to some $w_j$, while the other nodes are inactive (i.e. $a_2,...,a_k = 0$). In essence, we are working with the following simplified loss function.
\begin{equation}\label{errLossUnit}
L(a_1,\theta_1) =  a_1^2 \Phi(\theta_1,\theta_1)  + 2\sum_{j=1}^k a_1b_j \Phi(\theta_1,w_j)
\end{equation}

\begin{restatable}{lemma}{signconv}
\label{SignConv}
Let $\mathcal{M} = S^1$ and $L$ be as in \eqref{errLossUnit} and $\sigma$ is the sign activation function. Then, almost surely over random choices of $b_1,...,b_k$, all local minima of $L$ are at $\pm w_j$. 
\end{restatable}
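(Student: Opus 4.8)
The plan is to reduce the problem to the one‑dimensional geometry of a piecewise‑affine function on the circle. On $\mathcal{M}=S^1$ the sign activation has potential $\Phi(\theta,w)=1-\tfrac{2}{\pi}\cos^{-1}(\theta^Tw)$, so writing $\theta_1$ and $w_j$ via their angles $\alpha,\beta_j\in\R/2\pi\Z$ gives $\Phi(\theta_1,w_j)=1-\tfrac{2}{\pi}\,\mathrm{dist}(\alpha,\beta_j)$, where $\mathrm{dist}(\cdot,\cdot)\in[0,\pi]$ is the arc distance; in particular $\Phi(\theta_1,\theta_1)=1$. Hence \eqref{errLossUnit} becomes $L(a_1,\theta_1)=a_1^2+2a_1 g(\alpha)$ with $g(\alpha):=\sum_{j=1}^k b_j\Phi(\theta_1,w_j)$. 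For fixed $\theta_1$ this is a strictly convex quadratic in $a_1$, uniquely minimized at $a_1=-g(\alpha)$ with value $-g(\alpha)^2$. Consequently, if $(a_1^\ast,\theta_1^\ast)$ is a local minimum of $L$ then $a_1^\ast=-g(\alpha^\ast)$ (the unique minimizer of a strictly convex quadratic in $a_1$), and substituting $a_1=-g(\alpha)$ for $\alpha$ near $\alpha^\ast$ into the local‑minimum inequality, using continuity of $g$, yields $g(\alpha)^2\le g(\alpha^\ast)^2$ for all $\alpha$ near $\alpha^\ast$. So $\theta_1^\ast$ is a local maximum of $g^2$ on $S^1$, and it suffices to prove that, almost surely over $(b_1,\dots,b_k)$, every local maximum of $g^2$ lies in $\{\pm w_1,\dots,\pm w_k\}$.

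Next I would exploit the structure of $g$. Each summand $\alpha\mapsto 1-\tfrac{2}{\pi}\mathrm{dist}(\alpha,\beta_j)$ is continuous and affine on each of the two open arcs $(\beta_j,\beta_j+\pi)$ and $(\beta_j+\pi,\beta_j+2\pi)$, with slopes $-\tfrac{2}{\pi}$ and $+\tfrac{2}{\pi}$ respectively, and with kinks exactly at the points $w_j$ (angle $\beta_j$) and $-w_j$ (angle $\beta_j+\pi$). Therefore $g=\sum_j b_j\Phi(\cdot,w_j)$ is continuous and affine on each arc of $S^1$ cut out by the finite set $B:=\{\beta_1,\dots,\beta_k\}\cup\{\beta_1+\pi,\dots,\beta_k+\pi\}$, which is exactly the angle set of $\{\pm w_1,\dots,\pm w_k\}$; there are at most $2k$ such arcs. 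On any such arc $I$ the slope of $g$ equals $\tfrac{2}{\pi}\sum_{j=1}^k s_j^{I} b_j$ with every $s_j^{I}\in\{+1,-1\}$, i.e.\ a linear functional of $(b_1,\dots,b_k)$ with all coefficients nonzero. Hence the set of $b$ for which $g$ has zero slope on some arc is contained in a finite union of hyperplanes through the origin of $\R^k$, so it is Lebesgue‑null and has probability zero under the law of $b$; for all other $b$, $g$ is strictly monotone on every arc of $S^1\setminus B$.

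Finally I would conclude: fix such a good $b$ and let $\theta_1^\ast$ be a local maximum of $g^2$. If $\theta_1^\ast$ were interior to an arc $I$ on which $g(\alpha)=c\alpha+e$ with $c\neq 0$, then $g^2$ is strictly convex on $I$ and admits no interior local maximum there, a contradiction; hence $\theta_1^\ast\in B$, i.e.\ $\theta_1^\ast=\pm w_j$ for some $j$. Combined with the first paragraph, every local minimum of $L$ has $\theta_1=\pm w_j$, which is the claim. The two places to be careful — though neither is hard — are the reduction from a local minimum of $L$ in $(a_1,\theta_1)$ to a local maximum of $g^2$ in $\theta_1$ alone (which rests on strict convexity in $a_1$ and continuity of $g$, and works cleanly precisely because $L$ fails to be differentiable only at the points $\pm w_j$ we are trying to isolate), and the circle bookkeeping needed to confirm that each arc's slope is a genuinely nonzero linear functional of the $b_j$, so that the exceptional set of $b$ is a null set.
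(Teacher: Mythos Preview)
Your proof is correct and follows essentially the same approach as the paper's: both parametrize $S^1$ by angle, observe that the sign potential $\Phi(\theta,w)=1-\tfrac{2}{\pi}\cos^{-1}(\theta^Tw)$ is piecewise affine in the angle with kinks exactly at $\pm w_j$, and note that on each arc the slope is a nonzero linear functional of $(b_1,\dots,b_k)$, so it vanishes only on a null set. Your treatment is in fact more careful than the paper's brief sketch, since you explicitly marginalize out $a_1$ via the profile function $-g(\alpha)^2$ and thereby handle the two-variable local-minimum condition cleanly, whereas the paper speaks only of ``the gradient'' in $\theta$ without addressing the $a_1$ direction.
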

For the polynomial activation and potential functions, we also can show convergence under orthogonality assumptions on $w_j$. Note that the realizability of polynomial potentials is guaranteed in Section~\ref{sec:realizable}.

\begin{restatable}{theorem}{polystrict}
\label{PolyStrict}
Let $\mathcal{M} = S^{d-1}$. Let $w_1,...,w_k$ be orthonormal vectors in $\R^d$ and $\Phi$ is of the form $\Phi(\theta,w) = (\theta^Tw)^l$ for some fixed integer $l \geq 3$. Let $L$ be as in \eqref{errLossUnit}. Then, all critical points of $L$ are not local minima, except when $\theta_1 = w_j$ for some $j$.   
\end{restatable}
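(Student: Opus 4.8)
\emph{Proof plan.} Since $\mathcal M = S^{d-1}$, the relation $\Phi(\theta,\theta)=(\theta^T\theta)^l=1$ turns \eqref{errLossUnit} into $L(a_1,\theta_1)=a_1^2+2a_1\,g(\theta_1)$, where $g(\theta_1):=\sum_{j=1}^k b_j(\theta_1^Tw_j)^l$. The first step is to eliminate $a_1$: since $\min_{a_1}L(a_1,\theta_1)=-g(\theta_1)^2$, attained at $a_1=-g(\theta_1)$, any critical point $(a_1^0,\theta_1^0)$ of $L$ has $a_1^0=-g(\theta_1^0)$, and $(a_1^0,\theta_1^0)$ is a local minimum of $L$ if and only if $\theta_1^0$ is a local maximum of $g^2$ on $S^{d-1}$. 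So it is enough to show that the only critical points of $g|_{S^{d-1}}$ at which $g^2$ has a local maximum are $\theta_1^0=\pm w_j$ (the two signs being related by the symmetry $(a_1,\theta_1)\mapsto((-1)^la_1,-\theta_1)$ of $L$). Throughout I extend $w_1,\dots,w_k$ to an orthonormal basis and write $\theta_1=\sum_i y_iw_i$ with $\sum_i y_i^2=1$, so that $g=\sum_{j\le k}b_jy_j^l$ is a homogeneous degree-$l$ polynomial.

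If $a_1^0=0$ then $g(\theta_1^0)=0$, and $L(a_1,\theta)=(a_1+g(\theta))^2-g(\theta)^2\ge-g(\theta)^2$, which is negative (hence $<L(0,\theta_1^0)=0$) at points $\theta$ arbitrarily close to $\theta_1^0$ with $g(\theta)\ne0$; such points exist because $g$ is a nonzero polynomial (assuming the target network is nonzero, without which the statement is degenerate), so $(0,\theta_1^0)$ is not a local minimum. If $a_1^0\ne0$, the $\theta_1$-stationarity of $L$ on the sphere says $a_1^0\nabla_{\R^d}g(\theta_1^0)\parallel\theta_1^0$; pairing with each $w_i$ and with vectors orthogonal to $V:=\mathrm{span}(w_1,\dots,w_k)$, and using orthonormality, forces $\theta_1^0\in V$, say $\theta_1^0=\sum_{i\le k}u_iw_i$ with $\sum u_i^2=1$, together with $b_i\,l\,u_i^{\,l-1}=\nu u_i$ for a common scalar $\nu$ (and $\nu\ne0$, since $\nu=0$ would give $g(\theta_1^0)=0=-a_1^0$). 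Thus each $u_i$ is $0$ or satisfies $b_iu_i^{\,l-2}=\nu/l$; writing $m$ for the number of nonzero $u_i$ and using $l\ge3$ (so that $0^{\,l-2}=0$), one gets $\sum_jb_ju_j^{\,l-2}=m\nu/l$ and $g(\theta_1^0)=\sum_jb_ju_j^{\,l}=(\nu/l)\sum_ju_j^2=\nu/l$. When $m=1$ the single nonzero coordinate is $\pm1$, i.e.\ $\theta_1^0=\pm w_j$, the exceptional case.

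The heart of the argument is that $m\ge2$ cannot yield a local maximum of $g^2$. One would like an Earnshaw-style Laplacian argument, but the ambient spherical Laplacian is not conclusive: from $\Delta_{S^{n-1}}P=\Delta_{\R^n}P-l(l+n-2)P$ for $P$ homogeneous of degree $l$ one gets $\Delta_{S^{d-1}}g(\theta_1^0)=\nu[(l-1)m-l-d+2]$, whose sign relative to $g(\theta_1^0)=\nu/l$ can be wrong once $d$ is large and $m$ small. The remedy, and the main step, is to restrict $g$ to the great subsphere $S^{m-1}_\ast:=\big(\mathrm{span}\{w_i:u_i\ne0\}\big)\cap S^{d-1}$: it is totally geodesic in $S^{d-1}$ with the round metric, it contains $\theta_1^0$, and $\theta_1^0$ is still a critical point of $g$ restricted to it. Redoing the computation with $n=m$ and simplifying using the relations above yields $\Delta_{S^{m-1}_\ast}g(\theta_1^0)=\nu(l-2)(m-1)=l(l-2)(m-1)\,g(\theta_1^0)$, which has exactly the sign of $g(\theta_1^0)$ because $l\ge3$ and $m\ge2$. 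Hence the Hessian of $g|_{S^{m-1}_\ast}$ at $\theta_1^0$ has positive trace if $g(\theta_1^0)>0$ and negative trace if $g(\theta_1^0)<0$, so $\theta_1^0$ is not a local maximum (resp.\ minimum) of $g|_{S^{m-1}_\ast}$, hence not a local maximum of $g^2|_{S^{m-1}_\ast}$, hence not one of $g^2$ on $S^{d-1}$; by the reduction, $(a_1^0,\theta_1^0)$ is not a local minimum of $L$. The obstacle to anticipate is precisely this: the ``raw'' Laplacian of $g$ on $S^{d-1}$ has the wrong sign, and one must (i) recognize that the directions orthogonal to the support of $\theta_1^0$ only dilute it, so that passing to $S^{m-1}_\ast$ restores the sign, and (ii) squeeze the Lagrange and orthonormality relations hard enough that the Laplacian collapses to the clean quantity $l(l-2)(m-1)g(\theta_1^0)$.
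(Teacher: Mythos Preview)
Your argument is correct, but it takes a somewhat different route from the paper's own proof, so a brief comparison is useful.

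The paper works directly with $L$ on the sphere via Lagrange multipliers, taking $w_j=e_j$. From the first-order conditions it obtains the same dichotomy you do (each coordinate of $\theta_1$ is either zero or satisfies $b_i\theta_i^{\,l-2}=\lambda/(al)$), and then computes the \emph{constrained Hessian of $L$} directly: it is diagonal, with entries $2(l-2)\lambda=-2l(l-2)a^2$ at the nonzero coordinates. When at least two coordinates are nonzero, the paper exhibits an explicit tangent vector $v$ supported on two such coordinates with $v^T\theta_1=0$ and $v^T(\nabla^2 L)v<0$. In effect the paper restricts to a $1$--dimensional great circle inside your $S^{m-1}_\ast$ and reads off a direction of negative curvature.

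You instead eliminate $a_1$ first, reducing local minima of $L$ to local maxima of $g^2$ on $S^{d-1}$, and then run an Earnshaw-style trace argument: you compute $\Delta_{S^{m-1}_\ast}g=l(l-2)(m-1)\,g$ on the great sub-sphere supported on the nonzero coordinates. The underlying algebraic identity (constancy of $b_iu_i^{\,l-2}$ over the support) is the same in both proofs; the difference is packaging. Your route buys two things the paper's proof glosses over: you handle the case $a_1^0=0$ cleanly (the paper dismisses it as ``probability~0'' under gradient descent, which does not address the stated theorem), and you explicitly argue $\theta_1^0\in\mathrm{span}(w_1,\dots,w_k)$, whereas the paper's proof tacitly writes $w_1,\dots,w_d=e_1,\dots,e_d$ as if $k=d$. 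Conversely, the paper's argument is more constructive: it hands you an explicit descent direction for $L$, which is what one would feed to a second-order method, and it avoids the detour through $g^2$ and the spherical-Laplacian formula. Your observation that the ambient $\Delta_{S^{d-1}}g$ has the wrong sign for small $m$ and large $d$, forcing the pass to $S^{m-1}_\ast$, is a genuine point; the paper sidesteps it simply by never taking a trace over all of $S^{d-1}$. Finally, both arguments actually yield $\theta_1=\pm w_j$ rather than $\theta_1=w_j$; your parenthetical about the $(a_1,\theta_1)\mapsto((-1)^la_1,-\theta_1)$ symmetry is the correct way to reconcile this with the statement.
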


\subsection{Convergence of Sign Activation}

\signcon*

\begin{proof}
We will first argue that unless all the electrons and protons have matched up as a permutation it cannot be a strict local minimum and then argue that the global minimum is a strict local minimum.

First note that if some electron and proton have merged, we can remove such pairs and argue about the remaining configuration of charges. So WLOG we assume there are no such overlapping electron and proton.

First consider the case when there is an isolated electron $e$ and there is no charge diagonally opposite to it. In this case look at the two semicircles on the left and the right half of the circle around the isolated electron -- let $q_1$ and $q_2$ be the net charges in the left and the right semi-circles. Note that $q_1 \neq q_2$ since they are integers and $q_1 + q_2 = +1$ which is odd. So by moving the electron slightly to the side with the larger charge you decrease the potential.

If there is a proton opposite the isolated electron the argument becomes simpler as the proton benefits the motion of the electron in either the left or right direction. So  the only way the electron does not benefit by moving in either direction is that $q_1 = -1$ and $q_2 = -1$ which is impossible.

If there is an electron opposite the isolated electron then the combination of these two diagonally opposing electrons have a zero effect on every other charge. So it is possible rotate this pair jointly keeping them opposed in any way and not change the potential. So this is not a strict local minimum.

Next if there is a clump of isolated electrons with no charge on the diagonally opposite point then again as before if $q_1 \neq q_2$ we are done. If  $q_1 = q_2$ then the the electrons in the clump locally are unaffected by the remaining charges. So now by splitting the clump into two groups and moving them apart infinitesimally we will decrease the potential.

Now if there is only protons in the diagonally opposite position an isolated electron again we are done as in the case when there is one electron diagonally opposite one proton. 

Finally if there is only electrons diagonally opposite a clump of electrons again we are done as we have found at least one pair of opposing electrons that can be jointly rotated in any way.

Next we will argue that a permutation matching up is a strict local minumum. For this we will assume that no two protons are diagonally opposite each other (as they can be removed without affecting the function). Now given a perfect matching up of electrons and protons, if we perturb the electrons in any way infinitesimally, then any isolated clump of electrons can be moved slightly to  the left or right to improve the potential.
\end{proof}

\signconv*

\begin{proof}
In $S^1$, notice that the pairwise potential function is $\Phi(\theta,w) = 1 - 2\cos^{-1}(\theta^Tw)/\pi = 1 - 2\alpha/\pi$, where $\alpha$ is the angle between $\theta, w$. So, let us parameterize in polar coordinates, calling our true parameters as $\widetilde{w_1},...,\widetilde{w_k} \in [0,2\pi]$ and rewriting our loss as a function of $\widetilde{\theta} \in [0,2\pi]$. 

Since $\Phi$ is a linear function of the angle between $\theta, w_j$, each $w_j$ exerts a constant gradient on $\widetilde{\theta}$ towards $\widetilde{w_j}$, with discontinuities at $\widetilde{w_j},\pi+\widetilde{w_j}$. Almost surely over $b_1,..,b_k$, the gradient is non-zero almost everywhere, except at the discontinuities, which are at $\widetilde{w_j}, \pi+\widetilde{w_j}$ for some $j$. 
\end{proof}

\subsection{Convergence of Polynomial Potentials}

\polystrict*

\begin{proof}
WLOG, we can consider $w_1,...,w_d$ to be the basis vectors $e_1,...,e_d$. Note that this is a manifold optimization problem, so our optimality conditions are given by introducing a Lagrange multiplier $\lambda$, as in \cite{GeHJY15}.
\[\pd{L}{a} = 2\sum_{i=1}^d ab_i (\theta_i)^l + 2a = 0\]
\[ (\nabla_\theta L)_i = 2ab_il(\theta_i)^{l-1}  -2\lambda \theta_i = 0 \]
where $\lambda$ is chosen that minimizes 
\[\lambda = \arg \min_\lambda \sum_i (ab_i l (\theta_i)^{l-1} - \lambda\theta_i)^2 = \sum ab_i l (\theta_i)^l \]
Therefore, either $\theta_i = 0$ or $b_i (\theta_i)^{l-2} = \lambda/(al)$. From \cite{GeHJY15}, we consider the constrained Hessian, which is a diagonal matrix with diagonal entry: 
\[(\nabla^2 L)_{ii} = 2a b_i l(l-1)(\theta_i)^{l-2} - 2 \lambda\]
Assume that there exists $\theta_i, \theta_j \neq 0$, then we claim that $\theta$ is not a local minima. First, our optimality conditions imply $b_i(\theta_i)^{l-2} = b_j (\theta_j)^{l-2} = \lambda/(al)$. So,
\[(\nabla^2 L)_{ii} = (\nabla^2L)_{jj} = 2a b_i l(l-1)(\theta_i)^{l-2} - 2 \lambda\]
\[ = 2(l-2)\lambda = -2(l-2)la^2\]
Now, there must exist a vector $v \in S^{d-1}$ such that $v_k = 0$ for $k \neq i,j$ and $v^T\theta = 0$, so $v$ is in the tangent space at $\theta$. Finally, $v^T(\nabla^2 L) v  = -2(l-2)l a^2 < 0$, implying $\theta$ is not a local minima when $a \neq 0$. Note that $a = 0$ occurs with probability 0 since our objective function is non-increasing throughout the gradient descent algorithm and is almost surely initialized to be negative with $a$ optimized upon initialization, as by observed before.
\end{proof}

Under a node-wise descent algorithm, we can show polynomial-time convergence to global minima under orthogonality assumptions on $w_j$ for these polynomial activations/potentials. We will not include the proof but it follows from similar techniques presented for nodewise convergence in Section~\ref{App:EigenFunc}.

\section{Proof of Sign Uniqueness}
For the sign activation function, we can show a related result.
\begin{restatable}{theorem}{signUnique}
\label{SignUnique}
Let $\mathcal{M} = S^{d-1}$ and $\sigma$ be the sign activation function and $b_2,...,b_k = 0$. If the loss \eqref{errLoss} at $(\boldsymbol{a,\theta})$ is less than $O(1)$, then there must exist $\theta_i$ such that $w_1^T\theta_i > \Omega(1/\sqrt{k})$.
\end{restatable}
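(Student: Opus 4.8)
The plan is to recast ``the loss is small'' as ``$f = b_1\sigma(\cdot,w_1)$ (the target, since $b_2=\dots=b_k=0$) is well-approximated in $L^2(\mathcal D)$ by $V := \operatorname{span}\{\sigma(\cdot,\theta_1),\dots,\sigma(\cdot,\theta_k)\}$'', and then to read off a correlation from that. First I normalize: replacing $(\boldsymbol a, f)$ by $(-\boldsymbol a,-f)$ leaves the loss unchanged, so I may assume $b_1>0$; and since $\sigma(x,-\theta)=-\sigma(x,\theta)$, replacing any $\theta_i$ by $-\theta_i$ (and $a_i$ by $-a_i$) leaves $\hat f=\sum_i a_i\sigma(\cdot,\theta_i)$ unchanged, so it suffices to produce a $\theta_i$ with $|w_1^T\theta_i|=\Omega(1/\sqrt k)$ (the stated bound then holds for the correctly oriented copy). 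Since $\|f\|^2=b_1^2\Phi(w_1,w_1)=b_1^2$, ``loss $=O(1)$'' should be read as $L:=\|\hat f-f\|^2\le c_0 b_1^2$ for a small absolute constant $c_0$ (the claim is vacuous when $b_1$ is tiny). Replacing $\hat f$ by its $L^2$-projection $\Pi_V f$ only decreases the loss, so $\|f-\Pi_V f\|^2\le L$ and hence $\|\Pi_V f\|^2\ge b_1^2-L=\Omega(b_1^2)$.

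Next I extract the correlation. Write $\Pi_V f=\sum_i c_i\,\sigma(\cdot,\theta_i)$; using $\Phi(\theta,w)=1-\tfrac2\pi\cos^{-1}(\theta^Tw)=\tfrac2\pi\arcsin(\theta^Tw)$,
\[
b_1^2-L\ \le\ \|\Pi_V f\|^2\ =\ \langle f,\Pi_V f\rangle\ =\ b_1\sum_i c_i\,\Phi(w_1,\theta_i)\ \le\ b_1\,\|c\|_1\max_i|\Phi(w_1,\theta_i)|\ \le\ b_1\,\|c\|_1\max_i|w_1^T\theta_i|,
\]
where the last inequality uses $|\arcsin t|\le\tfrac\pi2|t|$. (Alternatively, keeping only the degree-$1$ harmonic component, for which $\langle P_1\sigma(\cdot,\theta),P_1\sigma(\cdot,w)\rangle=\tfrac2\pi\theta^Tw$, one gets $L\ge\|P_1(\Pi_V f-f)\|^2=\tfrac2\pi\|\sum_i c_i\theta_i-b_1 w_1\|^2$, hence $\sum_i c_i(w_1^T\theta_i)\ge b_1-\sqrt{\pi L/2}$.) Either route reduces the theorem to showing $\|c\|_1=O(\sqrt k)$: that yields $\max_i|w_1^T\theta_i|\ge(b_1^2-L)/(b_1\|c\|_1)=\Omega(1/\sqrt k)$.

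The crux, and the step I expect to be the main obstacle, is bounding $\|c\|_1$, where $c=b_1 M^{\dagger}v$ with $M_{ij}=\Phi(\theta_i,\theta_j)$ (the feature Gram matrix, unit diagonal) and $v_i=\Phi(w_1,\theta_i)$. One fact comes for free: the $(k{+}1)\times(k{+}1)$ Gram matrix of $\{\sigma(\cdot,w_1)\}\cup\{\sigma(\cdot,\theta_i)\}$ is PSD, so its Schur complement gives $v^T M^{\dagger}v\le 1$, i.e. $\|\Pi_V f\|\le b_1$; but this alone does not bound $\|c\|_1$, since $M$ can be arbitrarily ill-conditioned when several $\theta_i$ nearly coincide. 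To handle this I plan to prune: whenever some feature $\sigma(\cdot,\theta_i)$ lies within $L^2$-distance $\tau$ of the span of the others, delete it; this shrinks $V$ by at most $\tau$ and changes $\|\Pi_V f\|$ by only $O(\tau)$. After at most $k-1$ deletions the surviving features are $\tau$-separated from each other's spans, so the pruned Gram matrix has $\lambda_{\min}\ge\tau^2$ and the pruned coefficient vector $c'$ obeys $\|c'\|_2\le\|\Pi_{V'}f\|/\tau$, while $\|\Pi_{V'}f\|^2\ge b_1^2-L-O((k\tau)^2)=\Omega(b_1^2)$ for $\tau=\Theta(1/k)$. This already gives $\max_i|w_1^T\theta_i|=\Omega(1/k^{3/2})$; tightening to the stated $\Omega(1/\sqrt k)$ will require a sharper accounting of the total pruning error — e.g. clustering the $\theta_i$ into $O(1)$ tight groups and collapsing each group at once rather than removing features one by one — and making that bookkeeping rigorous is the delicate part.
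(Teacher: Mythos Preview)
Your reduction to ``small loss $\Rightarrow$ large correlation, provided $\|c\|_1=O(\sqrt k)$'' is correct, but that coefficient bound is the entire difficulty and you do not establish it. The pruning argument, as you yourself note, yields only $\Omega(1/k^{3/2})$, and the suggested sharpening by ``clustering the $\theta_i$ into $O(1)$ tight groups'' is not an argument: nothing in the hypotheses prevents the $\theta_i$ from being spread out with no small number of clusters while the Gram matrix is still badly conditioned. More fundamentally, a uniform bound $\|c\|_1=O(\sqrt k)$ on the \emph{projection} coefficients $c=b_1 M^{\dagger}v$ cannot hold --- when several $\theta_i$ nearly coincide, $M$ is near-singular and $\|c\|_1$ is unbounded even though $\Pi_V f$ stays fixed. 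At best you could try to bound the minimum-$\ell_1$ representation of $\Pi_V f$ in the dictionary $\{\sigma(\cdot,\theta_i)\}$, but that is a different quantity and needs its own argument.

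The paper avoids the Gram matrix entirely by a conditioning/Jensen reduction to one dimension. Taking $w_1=e_1$ (so the target depends only on $X_1$), it uses
\[
L\ \ge\ \mathbb{E}_{X_1}\Bigl[\bigl(\mathbb{E}_{X_2,\dots,X_d}[\hat f(X)]-\sigma(X_1)\bigr)^2\Bigr],
\]
and computes the inner average term-by-term: $\mathbb{E}_{X_{>1}}[\sigma(\theta_i^TX)]=\mathbb{E}_Y\bigl[\sigma(\theta_{i1}X_1+\sqrt{1-\theta_{i1}^2}\,Y)\bigr]$, which for small $\theta_{i1}$ and $|X_1|\le 1$ equals $2p(0)\,\theta_{i1}X_1/\sqrt{1-\theta_{i1}^2}+O(\theta_{i1}^2X_1^2)$. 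If every $|\theta_{i1}|<\epsilon$ with $\epsilon=O(1/\sqrt k)$, the averaged estimator is $cX_1+O(1)$ on $|X_1|\le 1$, and a function of that form cannot approximate $\mathrm{sign}(X_1)$ in $L^2$ to within $o(1)$. The key difference from your approach is that conditioning retains \emph{all} the structure along $w_1$ (not just the degree-one Hermite piece), so the obstruction becomes ``a nearly-linear function cannot approximate sign'' rather than a bound on representation coefficients. One caveat: the paper's passage from the per-term $O(\epsilon^2)$ error to a total $O(k\epsilon^2)$ tacitly treats the $|a_i|$ as $O(1)$; your degree-one route has the same hidden dependence on $\|a\|_1$. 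If you adopt the paper's line you should make that step explicit --- e.g.\ by assuming $|a_i|\le 1$, or by showing separately that large $\|a\|$ already forces the loss to be $\Omega(1)$.
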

\begin{proof}
WLOG let $w_1 = e_1$. Notice that our loss can be bounded below by Jensen's:
\begin{align*}
& \expt_X \left[ \left( \sum_{i=1}^k a_i \sigma(\theta_i^TX) - \sigma(X_1)\right)^2 \right] \\
& \qquad 
\geq \expt_{X_1} \left[ \left( \EE{X_2...X_d}{\left[ \sum_{i=1}^k a_i \sigma(\theta_i^TX) \right]}- \sigma(X_1)\right)^2 \right],
\end{align*}
where $X$ is a standard Gaussian in $\R^d$. 
\begin{align*}
E_{X_2,..,X_d} \left[  \sum_{i=1}^k a_i \sigma(\theta_i^TX) \right] &= \sum_{i=1}^k a_i E_{X_2,...X_d}\left[  \sigma(\theta_{i1}X_1 + \sum_{j >1} \theta_{ij}X_{j})  \right]\\
&= \sum_{i=1}^k E_{Y} \left[   \sigma(\theta_{i1}X_1 + \sqrt{1-\theta_{i1}^2}Y)  \right]  \\
&= \sum_{i=1}^k a_i E_{Y} \left[   \sigma(\textstyle\frac{\theta_{i1}}{\sqrt{1-\theta_{i1}^2}}X_1 + Y)  \right] ,
\end{align*}
where $Y$ is an independent standard Gaussian and for any small $\delta$, if $p(y)$ is the standard Gaussian density, 
\[ E_Y[\sigma(\delta + Y)] = \int_{-\delta}^{\delta} p(y) \, dy = 2p(0)\delta + O(\delta^2) \]

If $w_1^T\theta_i = \theta_{i1} < \epsilon$ for all $i$, then notice that with high probability on $X_1$ (say condition on $|X_1| \leq 1$), 
\[\expt_{Y} \left[   \sigma(\textstyle\frac{\theta_{i1}}{\sqrt{1-\theta_{i1}^2}}X_1 + Y)  \right] = 2p(0)\textstyle\frac{\theta_{i1}}{\sqrt{1-\theta_{i1}^2}}X_1 + O(\epsilon^2X_1^2)\]

Therefore, since $\epsilon < O(1/\sqrt{k})$,
\begin{align*}
\expt_{X_2,..,X_d} \left[  \sum_{i=1}^k a_i \sigma(\theta_i^TX)
  \right]  & = X_1
  \sum_{i=1}^k2p(0)a_i\textstyle\frac{\theta_{i1}}{\sqrt{1-\theta_{i1}^2}}
  + O(k\epsilon^2X_1^2) \\
& = cX_1+O(1)
\end{align*}

Finally, our error bound is now
\begin{align*}
& \expt_{X_1} \left[ \left( \expt_{X_2...X_d}\left[ \sum_{i=1}^k a_i
      \sigma(\theta_i^TX) \right]- \sigma(X_1)\right)^2 \right] \\
& \qquad \geq
\expt_{|X_1| \leq 1}[(cX_1+O(1) - \sigma(X_1))^2]
\end{align*}

And the final expression is always larger than some constant, regardless of $c$.
\end{proof}


\if{1}

\subsection{Infinite Iteration Bounds} 
\label{InfIter}

\begin{theorem}\cite{lee2016gradient, PanageasP16}\label{convStrict}
  Let $f :\Omega \to \R$ be a twice differentiable function such that
  $\sup_{x \in \Omega} \|\nabla^2 f\| \leq L$. Let
  $\mathcal{S} \subseteq \Omega$ be the set of critical points of $f$
  that are not local minima. Also, if
  $g(x) = x - \frac{1}{2L} \nabla f(x)$, then
  $g(\Omega) \subseteq \Omega$.

  Then, running Algorithm \ref{GD} with gradient input $\nabla f$ and
  stepsize $\alpha = 1/(2L)$, as the iteration $T \to\infty$, will
  converge to a point $x_\infty$ outside of $S$ almost surely over
  randomly chosen initial points $x_0$.
\end{theorem}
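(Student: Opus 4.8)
The plan is the standard dynamical-systems argument of Lee et al.\ and Panageas--Piliouras, built on the (Center-)Stable Manifold Theorem. Write $g(x) = x - \frac{1}{2L}\nabla f(x)$ for the gradient map, so that the $T$-th iterate is $g^{T}(x_0)$, and by the hypothesis $g(\Omega)\subseteq\Omega$ every iterate stays in $\Omega$. First I would record that $Dg(x) = I - \frac{1}{2L}\nabla^2 f(x)$; since $\nabla^2 f(x)$ is symmetric with $\|\nabla^2 f(x)\|\le L$, its eigenvalues lie in $[-L,L]$, so the eigenvalues of $Dg(x)$ lie in $[\tfrac12,\tfrac32]$. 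In particular $\det Dg(x)\neq 0$ for all $x$, so $g$ is a $C^1$ local diffeomorphism of $\Omega$; consequently $g^{-1}$, and hence $g^{-T}$ for every $T$, sends Lebesgue-null sets to Lebesgue-null sets (cover $\Omega$ by countably many open pieces on which $g$ is a diffeomorphism onto its image and use that $C^1$ maps preserve null sets).

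Next I would characterize the forbidden limits. Any limit point of $(g^{T}(x_0))_T$ is a fixed point of $g$, i.e.\ a critical point of $f$; and if such a point $x^*$ lies in $\mathcal{S}$ then (taking, as in the cited references, $\mathcal{S}$ to consist of points with $\lambda_{\min}(\nabla^2 f(x^*))<0$) there is an eigenvalue $\mu<0$ of $\nabla^2 f(x^*)$, giving an eigenvalue $1-\frac{\mu}{2L}>1$ of $Dg(x^*)$. Thus $x^*$ is an unstable fixed point: the number of eigenvalues of $Dg(x^*)$ of modulus $\le 1$ is some $s\le d-1$. The Center-Stable Manifold Theorem (in the form used by Shub) then gives an open ball $B_{x^*}\ni x^*$ and an embedded $C^1$ disc $W_{x^*}\subseteq B_{x^*}$ of dimension $s\le d-1$ --- hence a Lebesgue-null set --- with the trapping property $\bigcap_{T\ge 0} g^{-T}(B_{x^*})\subseteq W_{x^*}$.

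Then I would run the covering argument to handle the possibly uncountable set $\mathcal{S}$. The balls $\{B_{x^*}\}_{x^*\in\mathcal{S}}$ cover $\mathcal{S}$; since $\R^d$ is second countable, Lindel\"of's lemma extracts a countable subcover $\{B_{x_i^*}\}_{i\in\N}$. If the iterates from $x_0$ converge to some $x^*\in\mathcal{S}$, choose $i$ with $x^*\in B_{x_i^*}$; as $g^{T}(x_0)\to x^*$ and $B_{x_i^*}$ is open, there is $T_0$ with $g^{T}(x_0)\in B_{x_i^*}$ for all $T\ge T_0$, so $g^{T_0}(x_0)\in\bigcap_{s\ge 0} g^{-s}(B_{x_i^*})\subseteq W_{x_i^*}$, i.e.\ $x_0\in g^{-T_0}(W_{x_i^*})$. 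Hence the set of such $x_0$ is contained in $\bigcup_{i\in\N}\bigcup_{T_0\in\N} g^{-T_0}(W_{x_i^*})$, a countable union of null sets by the first step, and is therefore null. So for Lebesgue-almost every $x_0\in\Omega$, whenever the iterates converge the limit $x_\infty$ lies outside $\mathcal{S}$.

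I expect the main obstacle to be the second step: invoking the Center-Stable Manifold Theorem in the non-hyperbolic case (eigenvalues of $Dg(x^*)$ exactly on the unit circle are allowed) and extracting the local invariance/trapping statement in precisely the form needed, which is where the $C^1$ regularity of $f$ and careful bookkeeping of eigenvalue moduli enter. A secondary subtlety is that the statement asserts convergence of the iterates; without extra structure (isolated critical points, or a {\L}ojasiewicz gradient inequality as in Panageas--Piliouras) the honest conclusion is that the set of initializations from which the iterates converge to a point of $\mathcal{S}$ has measure zero, which is exactly what the cited references establish.
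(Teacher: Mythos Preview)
The paper does not prove this theorem at all: it is stated with the citation \cite{lee2016gradient, PanageasP16} attached and no proof is given (the result is simply quoted from those references as a known tool). Your proposal is a correct and faithful reconstruction of the argument from Lee et al.\ and Panageas--Piliouras --- the local-diffeomorphism computation from the step-size choice, the Center-Stable Manifold Theorem at unstable fixed points, and the Lindel\"of/countable-union-of-null-sets covering argument --- and your caveats about non-hyperbolic eigenvalues and about convergence requiring additional structure match exactly the content of those references.
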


\begin{corollary}
Assume all the assumptions of Theorem \ref{convStrict} and let $f$ admit a global minima in $\overline{\Omega}$. Assume all critical points of $f$ in $\Omega$ are not local minima, except at the global minima. Then, running Algorithm \ref{GD} with gradient input $\nabla f$ and stepsize $\alpha = 1/(2L)$ will converge to the global minima almost surely as the iteration count $T \to\infty$.
\end{corollary}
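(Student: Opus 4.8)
The plan is to obtain this as a direct consequence of Theorem~\ref{convStrict}. That theorem already supplies almost-sure convergence (over the random initialization $x_0$) of the iterates $x_t$ produced by Algorithm~\ref{GD} with stepsize $1/(2L)$ to some limit $x_\infty$, and moreover guarantees $x_\infty \notin \mathcal{S}$, where $\mathcal{S}$ is the set of critical points of $f$ that fail to be local minima. So the only remaining work is to identify this limit $x_\infty$ as the global minimizer.

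First I would check that $x_\infty$ is a critical point. Since $f$ has bounded Hessian it is $C^1$ with continuous gradient, so the map $g(x) = x - \frac{1}{2L}\nabla f(x)$ is continuous; the iterates satisfy $x_{t+1} = g(x_t)$ and converge to $x_\infty$, so passing to the limit gives $x_\infty = g(x_\infty)$, i.e. $\nabla f(x_\infty) = 0$. (Equivalently, the descent lemma with stepsize $1/(2L)$ and $L$-smoothness gives $f(x_{t+1}) \le f(x_t) - \frac{1}{4L}\|\nabla f(x_t)\|^2$; since $f$ is bounded below because a global minimum is attained in $\overline{\Omega}$, we get $\|\nabla f(x_t)\| \to 0$, and continuity of $\nabla f$ forces $\nabla f(x_\infty) = 0$.) Then, since $x_\infty$ is a critical point with $x_\infty \notin \mathcal{S}$, by definition of $\mathcal{S}$ it must be a local minimum of $f$; by the standing hypothesis that the only critical point of $f$ that is a local minimum is the global minimum, $x_\infty$ achieves the global minimum value. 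Since this holds for almost every $x_0$, the corollary follows.

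The point that needs care is the interplay between $\Omega$ and $\overline{\Omega}$. The hypothesis $g(\Omega) \subseteq \Omega$ of Theorem~\ref{convStrict} keeps every iterate in $\Omega$, but the limit $x_\infty$ could a priori lie on $\partial\Omega$, where the assumed classification of critical points (``no local minima except the global minimum'') was only posited over $\Omega$. I would dispose of this either by assuming the global minimizer lies in the open set $\Omega$ and invoking the strict decrease of $f$ along the trajectory away from critical points to rule out escape to the boundary, or simply by taking the classification hypothesis to hold on $\overline{\Omega}$. Modulo this bookkeeping, the corollary is immediate from Theorem~\ref{convStrict} plus the elementary fact that a fixed-stepsize gradient-descent limit is a fixed point of $g$, hence a critical point of $f$.
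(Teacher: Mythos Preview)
The paper does not supply a proof of this corollary; it is stated immediately after Theorem~\ref{convStrict} (which is itself only cited from \cite{lee2016gradient, PanageasP16}) and treated as self-evident. Your argument is exactly the intended one: Theorem~\ref{convStrict} gives convergence to some $x_\infty\notin\mathcal{S}$ almost surely, continuity of $g$ forces $x_\infty$ to be a fixed point of $g$ and hence a critical point, and the hypothesis then pins $x_\infty$ down as the global minimizer. Your flagging of the $\Omega$ versus $\overline{\Omega}$ boundary issue is, if anything, more careful than the paper, which simply elides it.
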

\fi

\if{1}
\begin{theorem}
For any $\epsilon < 1/\poly(d)$, we can construct a realizable potential $\Phi$ such that with high probability, running Algorithm \ref{NodeGDOpt} on \eqref{errLoss} with error $\delta = \poly(\epsilon,1/d)$, $\gamma = \epsilon$ and stepsize $\alpha = 1/\poly(d,1/\epsilon)$ converges in $T = \poly(d, 1/\epsilon)$ iterations to $(\boldsymbol{a,\theta})$ such that either  $\theta$ is within $\epsilon$-neighborhood of the global minima or there exists $i$ such that if $\theta_i$ is picked uniformly in $\mathcal{M}$
\[ \expt\left[\left( \sum_{j < i} a_j \Phi(\theta_i,\theta_j) + \sum_{j=1}^k b_j \Phi(\theta_i,w_j)\right)^2\right] < \epsilon\]

The sample complexity is $d^{O(\log(d)/\epsilon)}$.
\end{theorem}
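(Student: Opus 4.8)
The plan is to run Algorithm~\ref{NodeGDOpt} with the realizable, almost $\lambda$-harmonic potential $\Phi=\Phi_\epsilon$ produced by Lemma~\ref{almostHarmReal} (taking $\lambda=1$ and the normalization $\Phi(\theta,\theta)=1$), and to replace the separation hypothesis on $w_1,\dots,w_k$ used in Theorem~\ref{nodeWise} by a residual-field dichotomy. On the $i$-th pass, with $\theta_1,\dots,\theta_{i-1}$ and $a_1,\dots,a_{i-1}$ frozen and $a_{i+1}=\dots=a_k=0$, the restricted objective is $L_i(a_i,\theta_i)=a_i^2+2a_iR_i(\theta_i)+C_i$, where $R_i(\theta)=\sum_{j<i}a_j\Phi(\theta,\theta_j)+\sum_{j=1}^k b_j\Phi(\theta,w_j)$ is the residual field felt by a test node at $\theta$ and $C_i=L_i(0,\cdot)$ collects the terms free of $(a_i,\theta_i)$, so that $\min_{a_i}L_i=C_i-R_i(\theta_i)^2$. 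The expression in the theorem is precisely $R_i(\theta_i)$; hence if $\rho_i:=\expt_\theta[R_i(\theta)^2]<\epsilon$ (expectation over $\theta$ uniform on the $O(d\log d)$-ball containing the $w_j$'s) for some pass, we are immediately in the second alternative and halt. So it suffices to prove that when $\rho_i\ge\epsilon$ at every pass, the algorithm ends $\epsilon$-close to the global minimum.

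For a pass with $\rho_i\ge\epsilon$, the uniform bound $|R_i|\le(d/\epsilon)^{O(d)}$ from the derivative estimates of Lemma~\ref{almostHarmReal}, together with a Paley--Zygmund/Chebyshev estimate, shows that a random draw of $\theta_i$ lies with probability $(d/\epsilon)^{-O(d)}$ at a point with $R_i(\theta_i)^2\ge\epsilon/2$; optimizing $a_i$ there initializes $L_i$ at least $\epsilon/2$ below $L_i(0,\cdot)$, as in Lemmas~\ref{nodeInitialize} and~\ref{almostHarmInitialize}. Running SecondGD (Algorithm~\ref{SecondGD}) on $L_i$ with the stated $\alpha,\eta,\gamma$ then reaches a point $(a_i,\theta_i)\in\mathcal{M}_{L_i,\delta}$ still below $L_i(0,\cdot)-\delta$, using Lemmas~\ref{GradDecrease}--\ref{HessianDecrease} and the fact that $L_i$ is thrice differentiable with $(d/\epsilon)^{O(d)}$-bounded derivatives off the union of $\epsilon$-balls around $w_1,\dots,w_k$ (if an iterate enters such a ball we are already done for this pass). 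At that point one applies the $\lambda$-harmonic Laplacian computation of Lemmas~\ref{almostHarmConv}/\ref{nodeConv}, combined with the set-growing trick that absorbs the interaction with $\theta_1,\dots,\theta_{i-1}$: either $a_i^2<O(kd\delta)$, which by the triangle-inequality argument of Lemmas~\ref{nodeRes}/\ref{almostHarmRes} contradicts the $\delta$-decrease, or $\theta_i$ is within $O(k\epsilon)$ of some $w_j$ or of an earlier $\theta_{j'}$; the gradient lower bound of Lemma~\ref{nodeGradient} then forces $\theta_i$ into a $(d/\epsilon)^{-O(d)}$-neighborhood of a single such point with $a_i\approx-b_j$ (respectively a correction of $a_{j'}$).

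To chain the passes I would argue, as in the proof of Theorem~\ref{nodeWise}, that $\theta_i$ cannot land on a $w_j$ already claimed by an earlier node nor on an earlier $\theta_{j'}$: there $R_i$ essentially vanishes, so the optimal $a_i$, and hence the achieved decrease of $L_i$, would be $(d/\epsilon)^{-O(d)}$, contradicting the $\ge\epsilon/2$ decrease forced by $\rho_i\ge\epsilon$. Thus each pass claims a fresh $w_j$, after $k$ passes $i\mapsto j$ is a permutation, $\sum_i a_i\sigma(\theta_i,\cdot)\approx-\sum_j b_j\sigma(w_j,\cdot)$, and $(\boldsymbol{a,\theta})$ is within $\epsilon$ of the global minimum. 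The loop makes $k=\poly(d)$ passes, and since $L$ is non-increasing, bounded by $O(k)$, and drops by at least $\min(\alpha\eta^2/2,\alpha^2\gamma^3/2)$ per productive step, each SecondGD call terminates within $\poly(d,1/\epsilon)$ iterations for the stated parameter choices. For sample complexity, a standard $\epsilon$-net plus Hoeffding union bound over the $\poly(d)$-bounded parameter region, using that $\sigma_\epsilon$ has sup-norm $(d/\epsilon)^{O(d)}$ there, estimates every pairwise potential, gradient, and Hessian entry to the accuracy the descent lemmas need, giving $d^{O(\log d/\epsilon)}$ samples.

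\textbf{Main obstacle.} The delicate step is the chaining: guaranteeing genuine progress at every pass --- that node-wise descent cannot keep re-discovering weights it has already matched --- \emph{without} the random, well-separated choice of $w_1,\dots,w_k$. This reduces to showing that $R_i$ is non-degenerate near any still-unclaimed $w_j$ and degenerate near a claimed one, which leans on the fine upper and lower bounds on $\Phi_\epsilon$ and $\Phi_\epsilon'$ near $r=0$ from Lemma~\ref{almostHarmReal} and on propagating the $(d/\epsilon)^{-O(d)}$ error terms through all $k$ passes. A secondary tension is between the claimed $\poly(d,1/\epsilon)$ iteration bound and the $(d/\epsilon)^{O(d)}$ smoothness constants of $\sigma_\epsilon$: the step size must be small enough that a descent step never crosses into an $\epsilon$-ball of a hidden weight, so obtaining a genuine $\poly(d,1/\epsilon)$ bound (as opposed to the $(d/\epsilon)^{O(d)}$ of Theorem~\ref{nodeWise}) likely requires a coarser potential or a relaxation of what ``$\epsilon$-close to the global minimum'' should mean.
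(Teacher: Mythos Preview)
Your skeleton---the residual-field dichotomy on $\rho_i=\expt_\theta[R_i(\theta)^2]$ followed by the node-by-node chaining that forbids re-discovery of a claimed $w_j$---matches the paper's proof closely. The paper argues the dichotomy the same way (if the expectation in the statement is $<\epsilon$ we stop; otherwise a ``non-decrease'' lemma gives $a_i^2=\Omega(\epsilon)$ throughout the descent), and the chaining is also the same: if $\theta_2$ lands near $\theta_1$ or an already-claimed $w_j$, one compares $L(a_1+a_2,\theta_1)$ with $L(a_1,\theta_1,a_2,\theta_2)$ via Lipschitz bounds and derives a contradiction from the guaranteed decrease $-a_2^2=-\Omega(\epsilon)$.

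The substantive divergence is the choice of potential. The paper does \emph{not} use the almost-$\lambda$-harmonic $\Phi_\epsilon$ of Lemma~\ref{almostHarmReal}; it uses a separate ``$(1,m)$-harmonic'' construction (referenced as Theorem~\texttt{eigConv}, not present in this version) with $m=\poly(d,1/\epsilon)$, whose activation is polynomial-type, $\sigma(x)=O(|x|^{\poly(d,1/\epsilon)})$. That construction is what yields $\poly(d)$ bounds on $B,L,\rho$ and hence the claimed $T=\poly(d,1/\epsilon)$ and stepsize $\alpha=1/\poly(d,1/\epsilon)$. You flagged exactly this tension but called it ``secondary''; it is in fact the primary gap in your plan. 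With $\Phi_\epsilon$ from Lemma~\ref{almostHarmReal} the derivative bounds are $(d/\epsilon)^{O(d)}$, your Paley--Zygmund initialization succeeds only with probability $(d/\epsilon)^{-O(d)}$, and the stepsize must be $(d/\epsilon)^{-O(d)}$---so you recover Theorem~\ref{nodeWise}'s $(d/\epsilon)^{O(d)}$ iteration count, not the polynomial one stated here.

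The paper's proof also leans on several auxiliary results absent from this version (labelled \texttt{nonDecrease}, \texttt{strongConverge}, \texttt{genErrBound}, \texttt{quadConverge}) which together form the polynomial-time toolbox: the charge lower bound $a_i^2=\Omega(\epsilon)$, the descent-to-$\mathcal{M}_\delta$ guarantee under $\poly(d)$ smoothness, the sample-complexity bound via a generalization argument tailored to the polynomial activation, and the convergence of the quadratic subproblem in $a$. Your substitutes (Lemmas~\ref{GradDecrease}--\ref{HessianDecrease}, \ref{almostHarmConv}/\ref{nodeConv}, \ref{nodeGradient}) are the right analogues structurally, but they are calibrated to $\Phi_\epsilon$ and will not deliver the polynomial rates the theorem asserts.
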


\begin{proof}
Let $\Phi_m$ be the $(1,m)$-Harmonic potential in Theorem \ref{eigConv} with $m = \poly(d,1/\epsilon)$ and $m$ odd. We first consider the algorithm on node $\theta_1$ and claim that it will merge with some $w_j$ and then we will proceed with induction.

If
\[ \expt\left[\left( \sum_{j < 1} a_j \Phi(\theta_i,\theta_j) +
    \sum_{j=1}^k b_j \Phi(\theta_i,w_j)\right)^2\right] < \epsilon,\]
then we are done. Otherwise, with high probability, Theorem
\ref{nonDecrease} allows us to deduce that throughout the SGD
algorithm applied on $\theta_1$, $a_1^2 = \Omega(\epsilon)$.

Now we want to apply Theorem \ref{strongConverge}, so we check the regularity conditions. Since $\mathcal{M} = S^{d-1}$, then we can choose $B, L, \rho$ to be $\poly(d)$ since $\Phi$ and the second and third partials of $\Phi$ are all bounded by $\poly(d)$. Furthermore, by our construction, our activation function $\sigma(x)$ and its derivatives are $O(|x|^{\poly(d,1/\epsilon)})$. By Theorem \ref{genErrBound}, with high probability, we can construct a stochastic oracle up to $\poly(\epsilon,1/d)$ error with sample complexity $d^{\poly(d,1/\epsilon)}$.

Therefore, by Theorem \ref{strongConverge} we conclude that we converge to $\theta_1 \in \mathcal{M}_{\poly(\epsilon,1/d)}$. By Theorem \ref{eigConv}, since $|a_i| = \Omega(\sqrt{\epsilon})$, this implies that it is in an $\poly(\epsilon,1/d)$-neighborhood of some $w_{i}$ in $\poly(d,1/\epsilon)$ time. Note that $\theta_1$ will close to with $\pm w_j$ for some $j$ but since $\Phi_m$ is odd, WLOG, it is close to $w_j$. 

Furthermore, note that $|a_1| \leq \poly(d)$ by using the explicit formula. And lastly, by Theorem \ref{quadConverge}, since the maximum eigenvalue of our matrix $A$ is bounded by \poly(d), our gradient descent steps on the quadratic loss $L_{a_1}$ will converge to the optimum in $\Omega = \{a \in \R^n | \|a\| \leq \poly(d)\}$ with $O(\epsilon)$ error in $T$ iterations.

Now, we proceed with induction and repeat the same argument on $\theta_2$. We can simply treat $\theta_1$ as $w_{k+1}$ and so applying the same argument tells us that $\theta_2$ is close to some $w_j$ for some $j$. The issue is that $\theta_2$ could be in a $\poly(\epsilon,1/d)$-neighborhood of $w_{k+1} = \theta_1$ or $w_i$. We claim that this will not occur. First, since $w_i, w_{k+1}$ are in a $\poly(\epsilon,1/d)$-neighborhood of each other, we will assume WLOG that $\theta_2$ is close to $\theta_1$.

Now, by the optimality of $a_1$, we know that $L(a_1,\theta_1) \leq \min_{a \in \Omega} L(a,\theta_1) + O(\epsilon)$. We claim that if $\theta_2$ is close to $\theta_1$, then $L(a_1+a_2,\theta_1) \leq L(a_1,\theta_1) - \Omega(\epsilon)$. This, combined with the fact that $a_1 + a_2$ is bounded by \poly(d), would lead to a contradiction.

First, notice that since $a_2$ is always optimal, we have
\begin{align*}
& L(a_1,\theta_1) - L(a_1,\theta_1,a_2,\theta_2) \\
& \qquad \qquad = a_2^2 + 2a_2 (\sum_{j < 2} \Phi(\theta_2,\theta_j) + \sum_{j=1}^k b_j \Phi(\theta_2,w_j)) \\
& \qquad \qquad = -a_2^2 = \Omega(\epsilon)
\end{align*}

Therefore, it suffices to show that $|L(a_1+a_2,\theta_1) - L(a_1,\theta_1,a_2,\theta_2) | \leq O(\epsilon)$. Since $L(a_1+a_2,\theta) = L(a_1,\theta_1,a_2,\theta_1)$, this follows immediately from the $\poly(d)$-Lipschitz of $\Phi$ and the fact that $\theta_2$ is in a $\poly(d,1/\epsilon)$-neighborhood of $\theta_1$. We conclude that $\theta_2$ cannot be in a $\poly(\epsilon,1/d)$-neighborhood of $\theta_1$ but converges to a point close to $w_{j}$, $j\neq i,k+1$. Therefore, the no two $\theta_i$ are matched to one $w_j$. 

By applying this logic to all $\theta_i$ through induction, we deduce that $\theta$ is within $\epsilon$ of the global minima.
\end{proof}
\fi

\end{document}
